\setlist{wide=0pt,labelsep=2pt}
\newcommand{\handwriting}{\calligra}
\definecolor{colK}{RGB}{221,160,221}
\definecolor{colKS}{RGB}{128,0,128}
\definecolor{colAMG}{RGB}{0,0,205}
\definecolor{colAMGo}{RGB}{30, 144, 255}
\DeclareMathOperator*{\argmax}{arg\,max}
\DeclareMathOperator*{\argmin}{arg\,min}
\edef\csname\letter\endcsname{\noexpand\mathbb{\letter}}%
\edef\csname cal\letter\endcsname{\noexpand\mathcal{\letter}}%
\newcommand{\var}{\mathbb{V}\textrm{\normalfont ar}}
\newcommand{\ud}{\mathrm{d}}
\newcommand{\LL}{\mathbb{L}}
\newcommand{\I}{\mathbbm{1}}
\newcommand{\todist}{\overset{\textrm{\normalfont d}}{\longrightarrow}}
\newcommand{\toas}{\overset{\textrm{\normalfont a.s.}}{\longrightarrow}}
\newcommand{\toprob}{\overset{\P}{\longrightarrow}}
\newcommand{\invCT}{\mu^{\text{\normalfont\tiny CT}}}
\newcommand{\lip}[1]{[#1]_{\text{\normalfont\tiny Lip}}}
\newcommand{\tvnorm}[1]{\|#1\|_{\text{\normalfont\tiny TV}}}
\newcommand{\lambdahat}[1]{\widehat{\lambda}_{#1}} 
\newcommand{\lambdak}[1]{\widehat{\lambda}_{#1}^{\textcolor{colK}{\text{\normalfont\footnotesize $\clubsuit$}}}}%
\newcommand{\lambdaks}[1]{\widehat{\lambda}_{#1}^{\textcolor{colKS}{\text{\normalfont\footnotesize $\vardiamondsuit$}}}}%
\newcommand{\lambdaamgo}[1]{\mathring{\lambda}_{#1}^{\textcolor{colAMGo}{\text{\normalfont\footnotesize $\spadesuit$}}}}%
\newcommand{\lambdaamg}[1]{\widehat{\lambda}_{#1}^{\textcolor{colAMG}{\text{\normalfont\footnotesize $\spadesuit$}}}}%
\newcommand{\sigmak}[1]{\sigma_{\textcolor{colK}{\text{\normalfont\footnotesize $\clubsuit$}}}^2(#1)}
\newcommand{\sigmaks}[1]{\sigma_{\textcolor{colKS}{\text{\normalfont\footnotesize $\vardiamondsuit$}}}^2(#1)}
\newcommand{\sigmaamg}[1]{\sigma_{\textcolor{colAMGo}{\text{\normalfont\footnotesize $\spadesuit$}}}^2(#1)}
\newcommand{\hk}[1]{h_{#1}^{\textcolor{colK}{\text{\normalfont\footnotesize $\clubsuit$}}}}
\newcommand{\hks}[1]{h_{#1}^{\textcolor{colKS}{\text{\normalfont\footnotesize $\vardiamondsuit$}}}}
\newcommand{\hamgs}[1]{h_{#1}^{\textcolor{colAMGo}{\text{\normalfont\footnotesize $\spadesuit$}}\textrm{\normalfont\tiny s}}}
\newcommand{\hamgt}[1]{h_{#1}^{\textcolor{colAMGo}{\text{\normalfont\footnotesize $\spadesuit$}}\textrm{\normalfont\tiny t}}}
\newcommand{\Dk}[1]{D^{\text{\normalfont\tiny K}}(#1)}
\newcommand{\Dks}[1]{D^{\text{\normalfont\tiny KS}}(#1)}
\newcommand{\circledletter}[3]{
    \tikz[baseline=(char.base)]{
        \node[draw=#1, inner sep=1pt, thick, minimum width=5pt, minimum height=10pt,align=center] (char) {\textcolor{#2}{\normalfont\scriptsize #3}};
    }
}
\newtheorem{assumptions}{Assumptions}[section]
\newtheorem{assumption}[assumptions]{Assumption}
\newtheorem{theorem}{Theorem}[section]
\newtheorem{lemma}{Lemma}[section]
\newtheorem{remark}{Remark}[section]
\begin{document}

\title{
\vspace{-2em}%
\begin{adjustwidth}{-0.5cm}{-0.5cm}\centering\large\textsc{Asymptotic Analysis and Practical Evaluation of Jump Rate Estimators in Piecewise-Deterministic Markov Processes}
\end{adjustwidth}%
}

\author{Romain Aza\"{\i}s and Solune Denis}
\date{}
\maketitle

{\footnotesize
\noindent\makebox[\linewidth]{\rule{\textwidth}{0.4pt}}\\[5pt]
{\normalsize\handwriting\textbf{Abstract}}\\
Piecewise-deterministic Markov processes (PDMPs) offer a powerful stochastic modeling framework that combines deterministic trajectories with random perturbations at random times.  Estimating their local characteristics (particularly the jump rate) is an important yet challenging task. In recent years, non-parametric methods for jump rate inference have been developed, but these approaches often rely on distinct theoretical frameworks, complicating direct comparisons.
In this paper, we propose a unified framework to standardize and consolidate state-of-the-art approaches. We establish new results on consistency and asymptotic normality within this framework, enabling rigorous theoretical comparisons of convergence rates and asymptotic variances. Notably, we demonstrate that no single method uniformly outperforms the others, even within the same model. These theoretical insights are validated through numerical simulations using a representative PDMP application: the TCP model. Furthermore, we extend the comparison to real-world data, focusing on cell growth and division dynamics in \textit{Escherichia coli}.
This work enhances the theoretical understanding of PDMP inference while offering practical insights into the relative strengths and limitations of existing methods.

\vspace{0.5em}

{\normalsize\handwriting\textbf{Keywords}}\\
Piecewise-deterministic Markov process; Growth-fragmentation model; Jump rate; Non-parametric estimation; Consistency; Central limit theorem; Vector-valued martingale

\vspace{0.5em}

{\normalsize\handwriting\textbf{Author affiliations}}\\
Romain Aza\"{i}s, Inria Lyon, France \\
Solune Denis, Univ Angers, CNRS, LAREMA, SFR MATHSTIC, F-49000 Angers, France

\vspace{0.5em}

{\normalsize\handwriting\textbf{Acknowledgment}}\\
This work has been supported by the Inria Action Exploratoire ALAMO. Solune Denis also thanks the Chair Stress Test, RISK Management and Financial Steering of the Foundation \'Ecole Polytechnique, and the France 2030 program Centre Henri Lebesgue ANR-11-LABX-0020-01. \\
 \makebox[\linewidth]{\rule{\textwidth}{0.4pt}}
}

\section{Introduction}

\subsection{Piecewise-deterministic Markov processes}

Piecewise-deterministic Markov processes (commonly abbreviated as PDMPs) were introduced by Davis in \cite{D84,D93} as a broad category of continuous-time stochastic models that exclude diffusion. These processes are well-suited for modeling deterministic dynamics where randomness manifests through discrete events. The motion of a PDMP $X_t$ (with jump times $T_n$) on $\R^d$ (endowed with the Borel algebra $\calB(\R^d)$) is defined from its local characteristics:
\begin{itemize}[wide=4.5pt,labelsep=4.5pt]
\item $\Phi:\R\times\R^d\to\R^d$ is the deterministic flow, which satisfies the semi-group property,
$$\forall\,x\in\R^d,~\forall\,t,\,s\in\R,~\Phi(t+s|x) = \Phi(s|\Phi(t|x)).$$
In some contexts, $\Phi(\cdot|x)$ will also be denoted $\Phi_x$.
\item $\lambda:\R^d\to\R_+$ is the jump rate, related to the flow by the following condition,
$$\forall\,x\in E,~\exists\,\varepsilon>0,~\int_0^\varepsilon \lambda(\Phi(t|x))\ud t <\infty .$$
\item $Q:\calB(\R^d)\times\R^d\to[0,1]$ is the transition kernel such that,
$$\forall\,x\in\R^d,~Q(\R^d\setminus\{x\}|x) = 1.$$
\end{itemize}

The jump mechanism, that is, the discrete part of the dynamics, can be expressed as follows. Starting from $X_{0}$ at $T_0=0$, for any integer $n$, for any bounded measurable test function $\varphi$,
\begin{eqnarray*}
\P(T_{n+1}-T_n>t\,|\,X_{T_n},T_n) &=& \exp\left(-\int_0^t\lambda\left(\Phi(s|X_{T_n})\right)\ud s\right),\\
\E\left[\varphi(X_{T_{n+1}})\,|\,\Phi(T_{n+1}-T_n|X_{T_n})\right] &=& \int_{\R^d} \varphi(z)\, Q\left(\ud z|\Phi(T_{n+1}-T_n|X_{T_n})\right).
\end{eqnarray*}
These equations iteratively define the jump times $T_n$ and the post-jump locations $Z_n = X_{T_n}$. Between two consecutive jump times $T_n$ and $T_{n+1}$, the motion is expressed as,
$$\forall\,T_n\leq t<T_{n+1},~X_{t} = \Phi(t-T_n|X_{T_n}).$$
Beyond jump times and post-jump locations, one may also consider the inter-jumping times $S_n=T_n-T_{n-1}$ and the pre-jump locations $Z_n^- = \Phi(S_n|Z_{n-1})$. An illustrative PDMP trajectory is depicted in Figure \ref{fig:pdmp:traj}. If the process evolves in continuous time, all randomness is contained in the discrete-time characteristics $Z_n$, $Z_n^-$, $S_n$, and $T_n$. For example, when the flow is known, it is sufficient to know the pair $(Z_n, S_{n+1})$ (or $(Z_n, Z_{n+1}^-)$), which notably form Markov chains, to reconstruct the entire trajectory.

\begin{figure}[ht]
    \centering
    \begin{tikzpicture}[scale=2]

\draw[black, ultra thick]
  plot[smooth] coordinates {
    (0.5,0.5) (0.9,0.9) (1.1,0.9) (1.3,0.7)(1.5,0.7) (1.7,1)
  };

\coordinate (P) at (1.65,0.95);
\coordinate (Q) at (1.7,1);

\pgfmathsetmacro{\angle}{atan2(0.05, 0.05)}

\begin{scope}[shift={(P)}, rotate=\angle] 
    \draw[black, line width=1.5pt] 
        (4pt,2pt) arc[start angle=90, end angle=270, radius=2pt] ;
\end{scope}

\draw[gray, ultra thick, dashed]
  (1.7,1) -- (2.5,0.8);

\draw[black, ultra thick]
  plot[smooth] coordinates {
    (2.5,0.8) (2.8,0.7) (3.0,0.7)(3.21,0.8) (3.2,1.0) (3.0,1.3)  (3.0,1.5) (3.3,1.6)
  };
  
\coordinate (P1) at (3.25,1.55);
\coordinate (Q1) at (3.3,1.6);

\pgfmathsetmacro{\angle}{atan2(0.05, 0.05)}

\begin{scope}[shift={(P1)}, rotate=\angle] 
    \draw[black, line width=1.5pt] 
        (4pt,2pt) arc[start angle=90, end angle=270, radius=2pt] ;
\end{scope}

\draw[gray, ultra thick, dashed]
  (3.3,1.6) -- (2.0,1.9);

  \draw[black, ultra thick]
  plot[smooth] coordinates {
    (2.0,1.9) (1.6,1.7) (1.4,1.7)(1.3,1.9)
  };
\draw[black, ultra thick, dashed]
  plot[smooth] coordinates {
    (1.3,1.9) (1.1,2.1)(0.6,1.9)
  };

\fill[black] (0.5,0.5) circle (1pt) node[below] {$Z_{n-1}$};
\fill[black] (2.5,0.8) circle (1pt) node[below] {$Z_{n}$};
\fill[black] (2.0,1.9) circle (1pt) node[above] {$Z_{n+1}$};

\node at (1.0,1.05) {$S_n$};
\node at (3.45,0.8) {$S_{n+1}$};
\node at (1.7,1.3) {$Z_n^-$};
\node at (3.65,1.6) {$Z_{n+1}^-$};

\end{tikzpicture}
    \caption{A typical trajectory of a PDMP starting at $Z_{n-1}=X_{T_{n-1}}$, evolving deterministically along the flow $\Phi$ for the (random) duration $S_n$, then jumping from $Z_n^-=\Phi(S_n|Z_{n-1})$ to the (random) position $Z_n = X_{T_n}$ before resuming its deterministic dynamics.}
    \label{fig:pdmp:traj}
\end{figure}

\subsection{A canonical example: the TCP model}
The TCP process, formally known as the one-dimensional PDMP with linear flow $\Phi(t|x) = x+t$, linear jump rate $\lambda(x)=x$, and deterministic fragmentation $Q(\cdot|x)=\delta_{\{\kappa x\}}$, $0<\kappa<1$, is named after its role in modeling the well-known Transmission Control Protocol, a key mechanism for data transmission over the Internet (see \cite{CMP10} and the references therein). This growth–fragmentation model is a typical one-dimensional application of PDMPs, and has been thoroughly studied from a probabilistic perspective \cite{BardetChristenGuillinMalrieuZitt2013,CMP10,DGR02}. For any $0<\kappa<1$, both the continuous-time process $X_t$ and the embedded chain $Z_n$ admit an explicit invariant distribution \cite{DGR02},
\begin{eqnarray}
\invCT(x)&=&\frac{\sqrt{2/\pi}}{\prod\limits_{n=0}^{\infty}(1-\kappa^{2n+1})}\sum\limits_{n=0}^{\infty}\frac{\kappa^{-2n}}{\prod\limits_{k=1}^{n}(1-\kappa^{-2k})}\exp(-\kappa^{-2n}x^2/2), \nonumber\\
\mu(x)&=&\frac{1}{\prod\limits_{n=1}^{\infty}(1-\kappa^{2n})}\sum\limits_{n=1}^{\infty}\frac{\kappa^{-2n}}{\prod\limits_{k=1}^{n-1}(1-\kappa^{-2k})}x\exp(-\kappa^{-2n}x^2/2). \label{eq:mu:tcp}
\end{eqnarray}
Figure~\ref{fig:simu:data} shows a sampled trajectory along with its distribution.

\begin{figure}[ht]
    \centering
    \includegraphics[height=4.4cm]{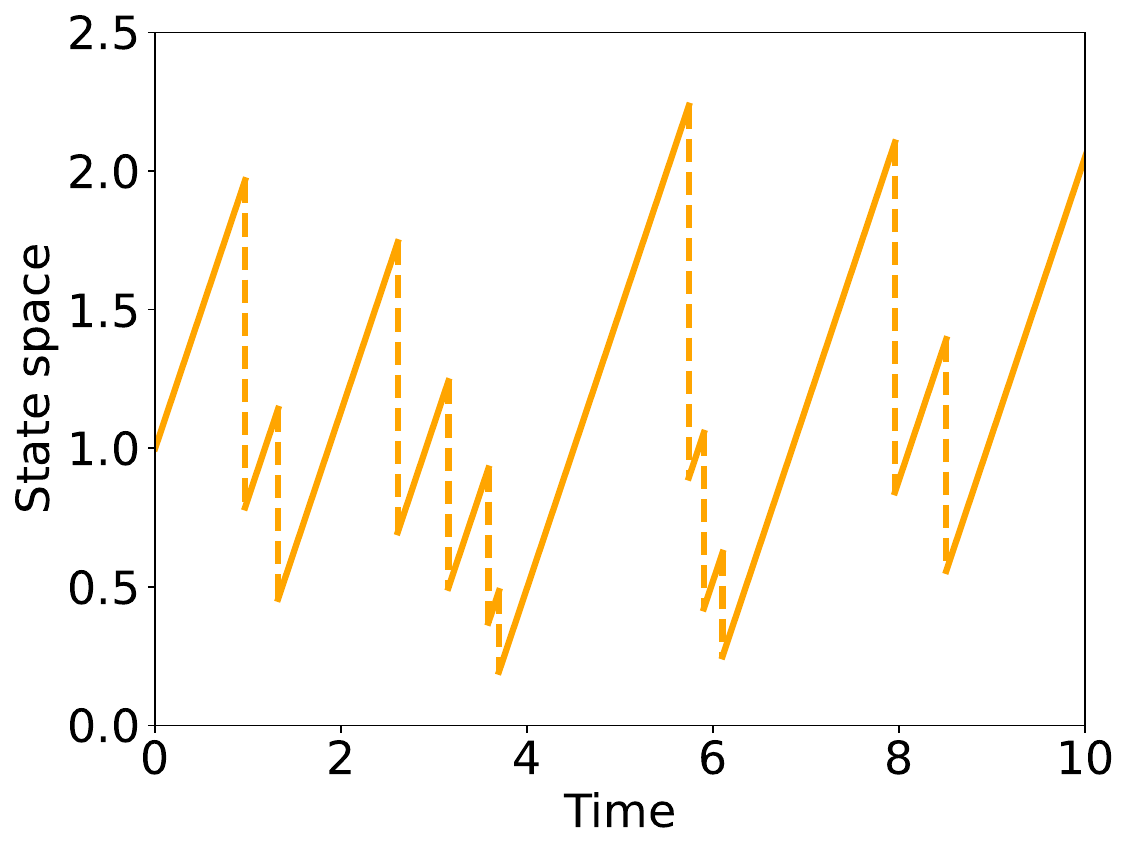}\hspace{0.75cm}
    \includegraphics[height=4.4cm]{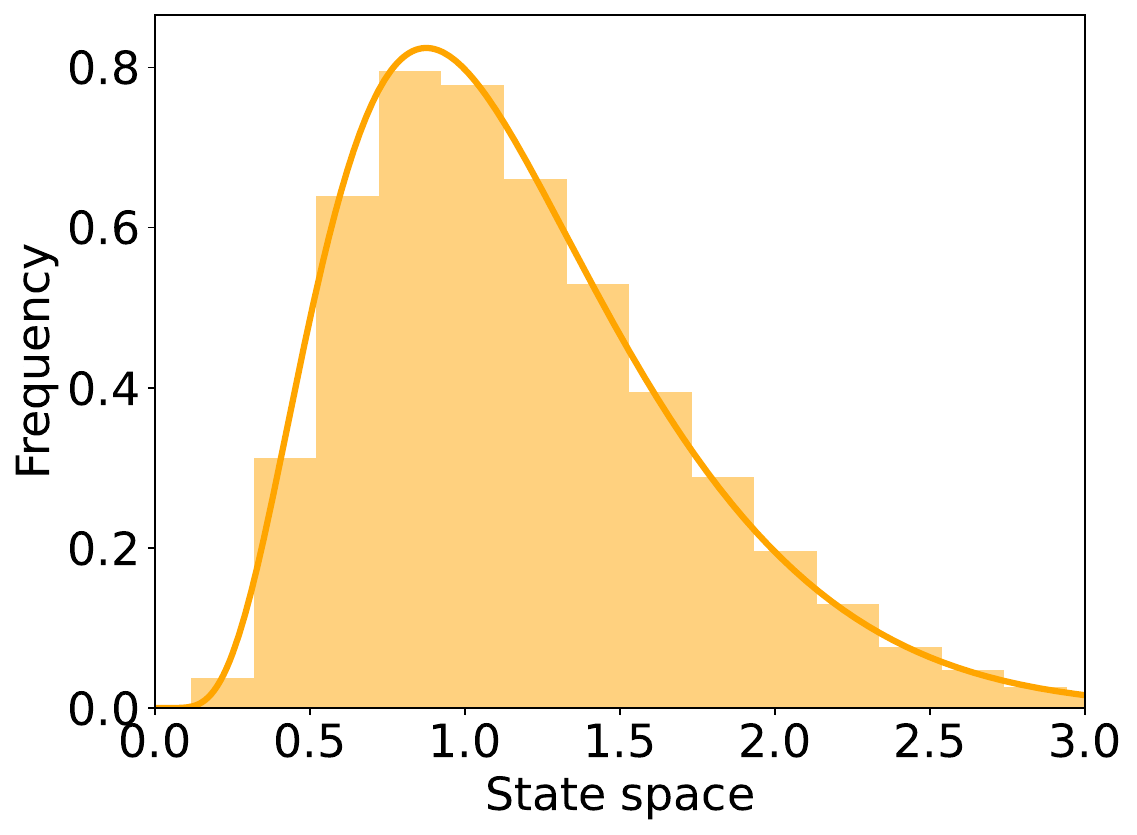}
    \caption{Simulated trajectory from the TCP model with parameter $\kappa=0.4$ until time $10$ (left) and its empirical distribution (in continuous time) evaluated from the first $10\,000$ jumps together with its theoretical version $\invCT$ overlaid (right).}
    \label{fig:simu:data}
\end{figure}

\subsection{Goal: non-parametric estimation of the jump rate}\label{ss:intro:tcp}

Given the wide range of applications for the class of PDMPs (in reliability, biology, insurance, see for instance \cite{CDGMMY17,SDZE12,KP11} and references therein), the question of estimating local characteristics is crucial. In particular, the jump rate is a key quantity of interest in this type of model. Depending on the context, it may govern the onset of congestion (in the TCP model \cite{CMP10}), a cell division (in a growth–fragmentation model \cite{DHKR15,K24,RHKARD14}), a change in the propagation regime (in reliability \cite{doi:10.1177/1748006X16651170,SDZE12}), or even a relapse (in a cancer model \cite{Cleynen2025}). Several statistical frameworks can be considered for this purpose. In this paper, we adopt a particularly general framework, aiming not only to develop statistical methods that can be applied to a large variety of problems, but also to better understand what distinguishes PDMPs. Specifically, we focus on the non-parametric estimation of the jump rate $\lambda$.

Various observation schemes for this type of process can be considered, the most challenging probably being noisy observations on a time grid, which generally do not allow direct measurement of the jumps \cite{doi:10.1177/1748006X16651170,Cleynen2025}. In this article, we instead assume that the embedded chain is perfectly observed over a long time horizon, i.e. the first $n$ jumps of a single trajectory are observed. Assuming that the flow $\Phi$ is known, this scheme, notably adopted in \cite{A14,ADGP14,AM16,F13,K16,KS21}, is equivalent to fully observing the continuous-time trajectory. As shown in recent literature \cite{ADGP14,AM16,F13,K16,KS21}, even under this ideal observation framework, the non-parametric estimation of the jump rate remains a non-trivial problem.

We consider general processes without imposing specific assumptions on the flow, the transition kernel, or the jump rate. However, to ensure convergence properties of the estimators (as $n$ tends to infinity), it is typically required to assume or state an ergodicity condition on the embedded chain $Z_n$. Interestingly, there is an equivalence on stability, recurrence and ergodicity between the continuous-time process $X_t$ and a discrete-time embedded Markov chain, which is fully defined from the local characteristics \cite{CD08}. In this paper, the PDMP of interest $X_t$ is supposed to admit a unique invariant measure $\invCT$. Additionally, $\mu$ represents the invariant distribution of post-jump locations $Z_n$, while $\mu^-$ denotes the invariant distribution of pre-jump locations $Z_n^-$.

To capture the jump rate of interest, one approach is to observe that the jump intensity of the process starting from $\xi$ is in fact the composition $\lambda(\Phi(t|\xi))$, which can be written as the ratio of the conditional density function $f(t|\xi)$ of $S_{n+1}$ given $Z_n=\xi$ over the related conditional survival function $G(t|\xi)$, with
\begin{eqnarray}
f(t|\xi) &=& \lambda(\Phi(t|\xi))\exp\left(-\int_0^t\lambda(\Phi(s|\xi))\ud s\right),\nonumber\\
G(t|\xi) &=& \exp\left(-\int_0^t\lambda(\Phi(s|\xi))\ud s\right).\label{eq:def:condG}
\end{eqnarray}
Multiplying the numerator and the denominator by the invariant measure $\mu$ of $Z_n$, one obtains the formula
\begin{equation}\label{eq:lambdacircphi:intro}
\lambda(\Phi(t|\xi)) = \frac{\mu(\xi) f(t|\xi)}{\mu(\xi)G(t|\xi)},
\end{equation}
where the numerator can be interpreted as the invariant density of $(Z_n,S_{n+1})$ at $(\xi,t)$ and the denominator can be seen as the measure of $(Z_n,\I_{\{S_{n+1}>t\}})$ at $(\xi,1)$ under the invariant distribution. By estimating these two invariant measures using a non-parametric method, one obtains a quotient estimator of $\lambda(\Phi(t|\xi))$ in any dimension. Any pair $(\xi,t)$ such that $\Phi(t|\xi)=x$ then provides an estimator of $\lambda(\Phi(t|\xi)) = \lambda(x)$.

In dimension $1$, if the flow defines an increasing change of variable, it is straightforward to see from \eqref{eq:lambdacircphi:intro} that
\begin{equation}\label{eq:lambda:intro}
\lambda(x) = \frac{\Delta(x) \mu^{-}(x)}{\P_\mu\left(Z_0\leq x<Z_1^-\right)},
\end{equation}
where $\Delta(x) = \partial_t\Phi(0|x)$. This provides another strategy for estimating the jump rate of interest: both the numerator and the denominator are related to the invariant measure of the embedded chain, which makes them amenable to non-parametric estimation.

Consequently, at this stage we have two quotient-type formulas for non-parame\-tri\-cally estimating the jump rate of interest in dimension $1$. In a sense, they differ according to when the flow is used as a change of variable: either before or after the estimation step. The main objective of this paper is to compare these two approaches for one-dimensional PDMPs when the invariant distributions are estimated using non-recursive kernel methods. It should be noted that the aforementioned applications \cite{doi:10.1177/1748006X16651170,CMP10,Cleynen2025,CDGMMY17,SDZE12,DHKR15,KP11,K24,RHKARD14} involve one-dimensional models, which justify the development and study of estimation techniques tailored to this case.

\subsection{A commented state of the art}

We review here, in chronological order, the non-parametric methods developed in the literature and their link with formulas \eqref{eq:lambdacircphi:intro} and \eqref{eq:lambda:intro} above. Some connections and similarities in the construction of the estimators, highlighted in Remarks~\ref{rem:f}, \ref{rem:k} and \ref{rem:ks}, do not appear to have been reported in the existing bibliography.

\paragraph{Fujii (2013)} The method designed in \cite{F13} to estimate the jump rate of a PDMP is based on Rice's formula derived for one-dimensional processes in \cite{BL08}, which highlights the link between local time and stationary distribution. The author assumes that the flow of the PDMP under consideration $X_t$ is of the form,
\begin{equation}
    \label{eq:def:flow:fujii}
\Phi(t|x) = x + \int_0^t \Delta(\Phi(s|x)) \ud s ,
\end{equation}
where $\Delta$ is a real-valued function such that $\inf\Delta>0$. By definition, $\Delta$ coincides with the time-derivative of the flow appearing in \eqref{eq:lambda:intro}. The local time of $X_t$ is then defined as
$$r(x) = \frac{\#\{t\in[0,T]\,:\,X_t = x\}}{\Delta(x)}.$$
After noting that the normalized local time $r(x)/T$ estimates the invariant density (of the continuous-time process) $\invCT(x)$, the author turns to the estimation of the jump rate and proposes to estimate $\lambda(x)$ by $\lambdahat{T}(x) = T A(x)/ r(x)$, with
$$A(x) = \frac{1}{T} \int_0^T \int_{\R_+} K_{h_T}\left(X_{t^-} - x\right)\ud \chi(t,z),$$
where $K_{h_T}$ is a kernel with bandwidth $h_T$ and $\chi$ is the counting measure defined by
$$\chi(t,A) = \#\{n\in\N\,:\,T_n\leq t,\,Z_n\in A\}.$$
Under the assumption that jumps are only additive and downwards, i.e. $Z_n = Z_n^- - J_n$ with $J_n>0$ almost surely, the uniform convergence (when $T$ goes to infinity) in probability of this estimate is notably established. It should be noted that the observation scheme deviates from the one presented earlier in this paper: the observation window is in the time of the process $X_t$ and not in the discrete time of the embedded chain.

\begin{remark}\label{rem:f}
Denoting $n_T$ the (random) number of jumps before $T$ and using the monotonicity of the flow, we rewrite Fujii's estimator in the more conventional form,
\begin{equation}\label{eq:fujii:lambdahat} \lambdahat{T}(x) =\Delta(x)\times\frac{\frac{1}{n_T}\sum_{i=1}^{n_T} K_{h_T}(Z_i^--x)}{\frac{1}{n_T}\sum_{i=0}^{n_T-1}\I_{\{Z_i\leq x<Z_{i+1}^-\}}}.
\end{equation}
Interestingly, the numerator is a kernel estimator of the invariant distribution $\mu^-$ while the denominator is the empirical version of $\P_\mu(Z_0\leq x<Z_1^-)$. Therefore, Fujii's estimator implicitly uses \eqref{eq:lambda:intro} to capture the jump rate of interest.
\end{remark}

\paragraph{Aza\"{\i}s, Dufour and G\'egout-Petit (2014)} The approach developed in \cite{ADGP14} does not rely on \eqref{eq:lambdacircphi:intro} or \eqref{eq:lambda:intro} to capture the jump rate but on the multiplicative intensity model developed by Aalen \cite{A78}. The multiplicative intensity model assumes the observation of a continuous-time counting process $N(t)$ whose stochastic intensity is of the form
$\lambda(t)Y(t)$ for some predictable process $Y(t)$. Equivalently, the process $M(t) = N(t) - \int_0^t\lambda(s)Y(s)\ud s$ is a continuous-time martingale. Within this framework, the Nelson–Aalen estimator provides a consistent estimator of the cumulative jump rate $\Lambda(t) = \int_0^t\lambda(s)\ud s$. Kernel-based methods can then be developed to estimate directly the quantity of interest $\lambda(t)$ \cite{RamlauHansen1983}.

The multiplicative intensity assumption holds in a wide range of applications \cite{ABGK93}, notably for non-homogeneous marked renewal processes whose dynamics resemble those of PDMPs \cite{ADGP13}. However, the jump mechanism of PDMPs differs from that of marked renewal processes, so that the multiplicative intensity model is generally not satisfied. The approach proposed in \cite{ADGP14} circumvents this difficulty by identifying a suitable transformation of the process of interest that satisfies the multiplicative intensity assumption.

The return to the jump rate of interest is also studied, leading to an estimate of the conditional density $f$ of inter-jumping times. 
The main drawback of this technique is that, in order to return to the quantity of interest, it requires integration over the whole space, even for local estimation. Nevertheless, convergence in probability, uniform on any compact set, of the conditional density estimator is demonstrated for processes on very general spaces (more general than $\R^d$) and involving forced jumps at the boundary.

\paragraph{Aza\"{\i}s and Muller-Gueudin (2016)} PDMPs investigated in \cite{AM16} are defined on $\R^d$, may involve forced jumps at the boundary, and possess a transition kernel that is continuous with respect to the Lebesgue measure. The key is the quotient formula \eqref{eq:lambdacircphi:intro} for $\lambda(\Phi(t|\xi))$, where both the numerator and the denominator are directly related to the invariant distribution of the embedded chain. Both are estimated using recursive kernel methods with $h_n^\textrm{\tiny s}$ and $h_n^\textrm{\tiny t}$ denoting respectively space and time bandwidths. Under mainly an ergodicity assumption and a Lipschitz mixing condition, pointwise almost sure convergence and asymptotic normality with expected rate $\sqrt{n(h_n^\textrm{\tiny s})^d h_n^\textrm{\tiny t}}$ of this estimator have been established in \cite{AM16}. As aforementioned, the authors notice that any couple $(\xi,\tau_x(\xi))$ such that $\Phi(\tau_x(\xi)|\xi)=x$ provides a consistent estimator of $\lambda(x)$. Indeed,
$$ \widehat{\lambda\circ\Phi}_n(\tau_x(\xi)|\xi) \toas \lambda(\Phi(\tau_x(\xi)|\xi)) = \lambda(x).$$
It is then proposed to select the one with minimal asymptotic variance, which yields to maximize $\mu(\cdot)G(\tau_x(\cdot)|\cdot)$ on $\calC_x = \{\xi\in\R^d\,:\,\exists\,t\geq0,\,x=\Phi(t|\xi)\}$. Of course, this quantity is unknown but well-estimated by the denominator of $\widehat{\lambda\circ\Phi}_n(\tau_x(\xi)|\xi)$. The estimator obtained with optimal argument selection from the estimated criterion is not theoretically investigated but numerical simulations show its good properties.

\paragraph{Krell (2016)} PDMPs under consideration in \cite{K16} are one-dimensional and with a specific transition kernel of the form $Q(\cdot|x) = \delta_{\{h(x)\}}$, i.e. jumps are deterministic with $Z_n = h(Z_n^-)$. The statistical approach relies on the assumption that both the flow $\Phi(\cdot|x)$ (which will also be denoted $\Phi_x$ to shorten some equations) and the fragmentation function $h$ define a change of variable on $\R_+$, which makes this strategy specific to dimension $1$. In that context, it is shown that the targeted jump rate can be written as $\lambda(x) = \mu(h(x)) / \Dk{x}$, with
\begin{equation}\label{eq:def:Dk}
\Dk{x} = \E_{\mu}\left[g_{Z_0}(h(x)) \I_{\{h(Z_0)\leq h(x)\}} \I_{\{Z_1\geq h(x)\}}\right],
\end{equation}
where
\begin{equation*}\label{eq:def:g}
g_{x}(y)=\displaystyle\frac{1}{(h\circ \Phi_x)'((h\circ \Phi_x)^{-1}(y))} .
\end{equation*}
The numerator is evaluated as a kernel estimator of the invariant distribution $\mu$ composed with $h$, while the denominator is estimated by its empirical version. The main result of \cite{K16} states the convergence in $\LL^1$-norm to $0$ of the square error with a rate depending on the regularity of $\lambda$. In contrast with previous approaches \cite{ADGP14,AM16,F13}, this paper does not assume the ergodicity of the process; rather, it establishes it as an intermediate outcome derived from regularity assumptions on its local characteristics. It should be noted that this estimator requires the knowledge of the transition kernel through the fragmentation function $h$. However, this part does not involve any randomness, which makes it easy to be estimated.

\begin{remark}\label{rem:k}
In fact, $g_x(y)$ does not depend on x. Indeed, using the flow property, a direct calculus shows that
\begin{equation*}
g_x(y) = \frac{(h^{-1})'(y)}{\Delta(h^{-1}(y))},
\end{equation*}
where $\Delta(x) = \Phi_x'(0)$ appears in \eqref{eq:lambda:intro} and \eqref{eq:def:flow:fujii}. This allows us to simplify the formula \eqref{eq:def:Dk} of $\Dk{x}$,
\begin{eqnarray*}
    \Dk{x} &=& \frac{\P_\mu\left(Z_0\leq x,Z_1\geq h(x)\right)}{h'(x)\Delta(x)}, \nonumber \\
    &=&\frac{\P_\mu\left(Z_0\leq x<Z_1^-\right)}{h'(x)\Delta(x)}, \label{eq:Dk:2}
\end{eqnarray*}
showing
\begin{equation}\label{eq:lambda:intro:k}
\lambda(x) = h'(x)\Delta(x)\frac{\mu(h(x))}{\P_\mu\left(Z_0\leq x<Z_1^-\right)}.\end{equation}
Together with $\mu^-(x) = \mu(h(x))h'(x)$, this makes it possible to point out that Krell's approach is actually based on \eqref{eq:lambda:intro} to capture the jump rate of interest, but tackles the numerator via $\mu$ (and not $\mu^-$) using the knowledge on the deterministic transition kernel $Q$.
\end{remark}

\paragraph{Krell and Schmisser (2021)} The last paper \cite{KS21} of this review generalizes the approach of \cite{K16} still to one-dimensional processes but with general transition kernel (in particular not assumed to be continuous with respect to the Lebesgue measure as in \cite{AM16}). Assuming again that the flow defines a change of variable on $\R_+$, it is shown that
$\lambda(x) = \mu^-(x)/\Dks{x}$, with
\begin{equation*}\label{eq:def:Dks}
\Dks{x} = \E_\mu\left[ (\Phi_{Z_0}^{-1})'(x)\I_{\{Z_0\leq x<Z_1^-\}}\right].
\end{equation*}
As in \cite{K16}, the denominator is estimated by its empirical version, while the numerator is evaluated by an adaptive projection estimator of $\mu^-$. The main result of \cite{KS21} states that, under regularity assumptions, the estimator under consideration is nearly minimax (up to a $\log^2n$ factor) in $\LL^2$-norm. In contrast with \cite{ADGP14,AM16,F13} and as in \cite{K16}, the ergodicity of the embedded chain is not assumed but is instead established on the basis of the properties of the process’s local characteristics.

\begin{remark}\label{rem:ks}
Reusing the rationale of Remark~\ref{rem:k}, it is easy to see that
$$\Dks{x} = \frac{\P_\mu\left(Z_0\leq x<Z_1^-\right)}{\Delta(x)} ,$$
proving again \eqref{eq:lambda:intro}. This highlights the fact that, like Fujii’s approach, the procedure of Krell and Schmisser estimates the jump rate of interest via the quotient \eqref{eq:lambda:intro}. Besides the observation window (in the time of the PDMP or in the time of the embedded chain), the main difference in terms of estimator construction between the two approaches lies in the technique used to handle the numerator: kernel estimator in Fujii's paper \cite{F13} vs. adaptive projection estimator in Krell and Schmisser's article \cite{KS21}. It is important to point out, however, that the theoretical results demonstrated in these papers are different and complementary: uniform convergence in probability in \cite{F13} vs. minimax rate in $\LL^2$-norm in \cite{KS21}.
\end{remark}

\subsection{Kernel estimators of the jump rate}

The first conclusion to be drawn from this review of the literature is that the results obtained, albeit within the same framework, i.e. general PDMPs observed over a long period of time, are difficult to compare, notably:
\begin{itemize}[wide=4.5pt,labelsep=4.5pt]
\item \cite{F13} focuses on the asymptotics in the (continuous) true time of the PDMP, while the other publications deal with the number of jumps observed.
\item \cite{ADGP14} does not estimate the jump rate but the associated conditional density.
\item \cite{AM16} implements a recursive kernel method, \cite{ADGP14,F13,K16} assume a non-recursive kernel estimator, while \cite{KS21} uses an adaptive projection estimator.
\item \cite{F13,K16,KS21} deal with one-dimensional processes, while the state space at hand in \cite{AM16} is $\R^d$ and \cite{ADGP14} considers general metric spaces.
\item Transitions at jump times are deterministic in \cite{K16}, continuous with respect to the Lebesgue measure in \cite{AM16}, and of a general type in \cite{ADGP14,F13,KS21}.
\item The convergence results obtained do not involve the same measures of estimation error.
\item Several strategies are employed to capture the jump rate of interest: \cite{F13,KS21} rely on the representation \eqref{eq:lambda:intro}, \cite{K16} uses \eqref{eq:lambda:intro:k}, while \cite{AM16} is based on \eqref{eq:lambdacircphi:intro}.
\end{itemize}

The main goal of this paper is to make the first rigorous, both theoretical and numerical, comparison of the strategies used to capture the jump rate of interest. To achieve this, we assume a unique statistical framework and apply the same estimation method to the different strategies. That is why, in the present paper, we only consider one-dimensional processes observed via their embedded Markov chain, which entails asymptotic analyses in the number of jumps observed. In addition, in order to work with estimators built on a single method, we restrict ourselves to non-recursive kernel estimators.

Here and throughout the paper, $K^d$ denotes a kernel function in dimension $d$, i.e. a density on $\R^d$. To shorten equations, $K_h^d = K^d(\cdot/h)/h^d$ for any $h>0$. In addition, we omit the exponent when $d=1$, i.e. $K^1 = K$. The estimators selected for the rest of the study are as follows:
\begin{itemize}
    \item $\lambdak{n}(x)$ is the estimator based on \eqref{eq:lambda:intro:k},
\begin{equation}\label{eq:def:k}
\lambdak{n}(x) = h'(x)\Delta(x)\times\frac{\frac{1}{n}\sum_{i=0}^{n-1} K_{h_n}(Z_i-h(x))}{\frac{1}{n}\sum_{i=0}^{n-1}\I_{\{Z_i\leq x\}}\I_{\{Z_{i+1}\geq h(x)\}}},
\end{equation}
where $h$ is the (deterministic) fragmentation function (assumed to be known), i.e. $Q(\cdot|x) = \delta_{\{h(x)\}}$.

    \item $\lambdaks{n}(x)$ is the estimator based on \eqref{eq:lambda:intro},
    \begin{equation}\label{eq:def:ks}
\lambdaks{n}(x) = \Delta(x)\times\frac{\frac{1}{n}\sum_{i=1}^{n} K_{h_n}(Z_i^--x)}{\frac{1}{n}\sum_{i=0}^{n-1}\I_{\{Z_i\leq x<Z_{i+1}^-\} }}.
\end{equation}
It is a generalization of \eqref{eq:def:k} when the transition kernel is unknown and not assumed to be deterministic.

    \item $\widehat{\lambda\circ\Phi}_n(t|\xi)$ is the estimator of $\lambda(\Phi(t|\xi))$ based on \eqref{eq:lambdacircphi:intro},
    \begin{equation*}
\widehat{\lambda\circ\Phi}_n(t|\xi) = \frac{\frac{1}{n}\sum_{i=0}^{n-1} K_{h_n^\textrm{\tiny s}}(Z_i-\xi) K_{h_n^\textrm{\tiny t}}(S_{i+1}-t)}{\frac{1}{n}\sum_{i=0}^{n-1} K_{h_n^\textrm{\tiny s}}(Z_i-\xi)\I_{\{S_{i+1}>t\}}} .
\end{equation*}
Following the optimal argument selection approach developed in \cite{AM16}, this yields the (oracle) estimator of the jump rate,
\begin{equation}\label{eq:def:amgo} \lambdaamgo{n}(x) = \widehat{\lambda\circ\Phi}_n(\tau_x(\xi)|\xi) \quad\text{with}\quad \xi = \argmax_{\calC_x} \mu(\cdot)G(\tau_x(\cdot)|\cdot),
\end{equation}
where $\tau_x(\xi)$ is the unique solution of $x=\Phi(\cdot|\xi)$ and $G$ is the conditional survival function of inter-jumping times given in \eqref{eq:def:condG}. $\lambdaamgo{n}(x)$ can not be evaluated in real world application scenarios since both $\mu$ and $G$ are unknown. However, with an estimated argument selection, one gets
\begin{equation}\label{eq:def:amg} \lambdaamg{n}(x) = \widehat{\lambda\circ\Phi}_n(\tau_x(\xi)|\xi) \quad\text{with}\quad \xi = \argmax_{\calC_x}\sum_{i=0}^{n-1} K_{h_n^\textrm{\tiny s}}(Z_i-\cdot)\I_{\{S_{i+1}>\tau_x(\cdot)\}}.
\end{equation}
\end{itemize}

\subsection{Contribution}

The main goal of this paper is to compare the estimators mentioned above, both theoretically and numerically. In particular, we aim to evaluate their rates of convergence and, if the rates are equal, their asymptotic variances. For this purpose, we shall rely on the following limit result.

\begin{theorem}{\normalfont\cite[Corollary 3.10]{AM16}}\label{theo:lambdaamg}
Under mainly an ergodicity assumption and a Lipschitz mixing condition, when $n$ goes to infinity,
$$\sqrt{nh_n^\textrm{\normalfont\tiny s}h_n^\textrm{\normalfont\tiny t}}\left[\lambdaamgo{n}(x)-\lambda(x)\right] \todist\calN(0,\sigmaamg{x}),$$
for bandwidth sequences of the form $h_n^\textrm{\normalfont\tiny s}\propto n^{-\alpha}$ and $h_n^\textrm{\normalfont\tiny t}\propto n^{-\beta}$ (with $\alpha,\beta>0$ and $1-2\min(\alpha,\beta)<\alpha+\beta<1$), with $\tau^2 = \int_\R K^2(x)\ud x$ and
\begin{equation*}
\sigmaamg{x}=\frac{\tau^4\lambda(x)}{\max_{\xi\in\calC_x}\mu(\xi)G(\tau_x(\xi)|\xi)},
\end{equation*}
where $\tau_x(\xi)$ is the unique solution of $x=\Phi(\cdot|\xi)$ and $G$ is the conditional survival function \eqref{eq:def:condG} associated with jump intensity $\lambda\circ\Phi$.
\end{theorem}
\begin{remark}
Theorem~\ref{theo:lambdaamg} above is stated for the non-recursive estimator \eqref{eq:def:amgo}, wher\-eas the reference result \cite[Corollary 3.10]{AM16} is established for its recursive version. However, a careful review of the proof in the reference paper reveals that the two asymptotic variances are identical up to a factor $\alpha+\beta+1$ (in dimension 1).
\end{remark}

\begin{remark}\label{rem:lambdaamgo:vs:lambdaamg}
According to \cite[Corollary~3.10]{AM16}, the estimator $\widehat{\lambda\circ\Phi}_n(t|\xi)$ is consistent for any fixed argument. However, in the definition \eqref{eq:def:amg} of $\lambdaamg{n}(x)$ (in contrast to $\lambdaamgo{n}(x)$), the argument $(\xi,t)$ of $\widehat{\lambda\circ\Phi}_n(t|\xi)$ is random, since it is defined as the maximizer of a random function. This substantially complicates the theoretical analysis of the estimator, which is not developed in \cite{AM16}. Nevertheless, the random function being maximized converges almost surely, at rate $\sqrt{n h_n^\textrm{\normalfont\tiny s}}$, to the deterministic function maximized in the definition \eqref{eq:def:amgo} of $\lambdaamgo{n}(x)$ \cite[Theorem~3.3]{AM16}. It is therefore expected that the estimators $\lambdaamgo{n}(x)$ and $\lambdaamg{n}(x)$ exhibit similar behavior.
\end{remark}

We aim to establish a result analogous to the above theorem for $\lambdak{n}$ and $\lambdaks{n}$. This will make it possible to compare both convergence rates and limit variances, and thus establish the first quantitative comparison of the estimators chosen in this study. Better still, we may expect this comparison to remain somehow valid for evaluating the strategies used to capture the jump rate of interest, whatever the estimation technique employed. The theoretical study is presented in Section~\ref{s:clt}, the model assumptions are stated in Subsection~\ref{ss:modelass}, the results are provided in Subsection~\ref{ss:res} and their proof is outlined in Subsection~\ref{ss:sketch}. In particular, the consistency of the estimators is established in Theorem~\ref{main_ps}, while the asymptotic normality is demonstrated in Theorem~\ref{main_tcl}. The detailed proofs are deferred to Appendix~\ref{app:proof}.

The theoretical results that we will obtain with Theorems~\ref{main_ps} and \ref{main_tcl} indicate that $\lambdak{n}$ and $\lambdaks{n}$ can not generally be ordered, at least based on the criterion of asymptotic variance, even within the same model (see Remark~\ref{rem:comp:var:k:ks}). To delve deeper and include $\lambdaamgo{n}$ and $\lambdaamg{n}$ in this comparison, Section~\ref{s:tcp} focuses on the TCP model, for which the stationary distribution is explicit (see Subsection~\ref{ss:tcp:tcp}). After carefully selecting the smoothing parameters in Subsection~\ref{ss:tcp:bandwidth}, we establish a rigorous comparison of the variances of the estimators under consideration, taking convergence rates into account, in Subsection~\ref{ss:tcp:var}. Furthermore, we demonstrate through intensive numerical simulations that these variances allow for a qualitative prediction of the ordering of pointwise estimation errors (at least for $\lambdaks{n}$ and $\lambdaamg{n}$), making the link between theory and practice. In Subsection~\ref{ss:adaptive:simus}, we introduce adaptive projection–based variants of the estimators $\lambdak{n}$ and $\lambdaks{n}$. Numerical results show that the pointwise error of these estimators follows the trend identified both theoretically and numerically for kernel estimators, with comparable orders of magnitude. This finding appears to confirm the importance of carefully choosing the strategy used to estimate the jump rate, regardless of the method employed to estimate the underlying invariant distribution.

Applying estimators to real-world data allows us to test them outside the idealized framework of numerical simulations. This is the goal of Section~\ref{s:data}, where we focus on a model of cell growth and division, a typical application of one-dimensional PDMPs in biology \cite{CDGMMY17,DHKR15,K24}. Specifically, we analyze single-cell data from \textit{Escherichia coli} \cite{TPPHBY17} through the lens of piecewise-deterministic models. In this context, the jump rate governs the timing of cell division based on cell size. Our objective is to estimate this quantity using the methods described earlier. The biological context is detailed in Subsection~\ref{ss:data:context}, while the model fitting is carried out in Subsection~\ref{ss:data:37}. The computation of the estimators (including an adaptive projection-based version of $\lambdaks{n}$) is presented in Subsection~\ref{ss:data:jr}, with an a posteriori validation developed in Subsection~\ref{ss:data:valid}. This approach allows us to complement the theoretical and numerical investigations with an evaluation of the estimators in a real-world application.

The conclusion of our study, encompassing the theoretical, numerical, and real-world data application aspects of the estimators under consideration, is presented in Section~\ref{s:conclusion}.

\section{Consistency and asymptotic normality}
\label{s:clt}

\subsection{Model assumptions}
\label{ss:modelass}

The PDMP $X_t$ under consideration in this section is defined on $\R_+$. The set of sufficient conditions to establish the (pointwise) almost sure convergence and asympotic normality of the estimators is given below, starting with the main assumptions (on the general dynamics of the process in Assumptions~\ref{ass:local} and on its asymptotic behavior in Assumption~\ref{ass:ergodicity}).

\begin{assumptions}\label{ass:local}The local parameters $\Phi$ and $Q$ satisfy the following conditions.
\begin{itemize}
    \item[\!\circledletter{colK}{colK}{$\clubsuit$}\!\!\circledletter{colKS}{colKS}{$\vardiamondsuit$}] For any $x\in\R_+$, the flow $\Phi_x:\R_+\to[x,\infty)$ is of class $C^1$ and strictly increasing.
    \item[\!\circledletter{colK}{colK}{$\clubsuit$}]The transition kernel writes $Q(\cdot|x) = \delta_{\{h(x)\}}$, where $h:\R_+\to\R_+$ is of class $C^1$, strictly increasing, and sublinear ($h(x)\leq x$).
    \item[\!\circledletter{colKS}{colKS}{$\vardiamondsuit$}]The transition kernel $Q(\cdot|x)$ admits a density with respect to the Lebesgue measure and is such that $Q([0,x)|x)=1$.
\end{itemize}
\end{assumptions}

\begin{assumption}\label{ass:ergodicity}
For any initial distribution $\mu_0=\delta_{\{x\}}$, $x\in\R_+$, when $n$ goes to infinity,
$$\tvnorm{\mu_n-\mu}\to0,$$
where $\mu_n$ is the distribution of $Z_n$ and $\tvnorm{\cdot}$ stands for the total variation norm.
\end{assumption}

This ergodicity assumption can be directly linked to the transition kernel of the Markov chain $Z_n$, for example, through the existence of a Foster-Lyapunov function or Doeblin’s condition. For further details on such connections, we refer the interested reader to \cite{MT09} and references therein. Following the arguments provided in \cite[3.2~Ergodicity]{ADGP14}, this guarantees that both $Z_n$ and $Z_n^-$ are irreducible, positive Harris-recurrent \cite[9.1.2~Harris recurrent chains]{MT09}, and aperiodic, each with a unique invariant probability measure ($\mu$ for $Z_n$ and $\mu^-$ for $Z_n^-$).
Consequently, the almost sure ergodic theorem \cite[Theorem~17.1.7]{MT09} and the central limit theorem for Markov chains \cite[Theorem~17.5.3]{MT09} apply to $Z_n$ and $Z_n^-$.

The next (technical) conditions operate on the conditional distribution $R$ of $Z_{n+1}^-$ given $Z_n$ and the transition kernel $P$ of the Markov chain $Z_n$. Specifically, with a change of variable, straightforward calculations show that $R$ admits a density with respect to the Lebesgue measure, expressed as
$$
R(z|x) = \frac{\lambda(z)}{\Delta(z)}\exp\left(-\int_x^z\frac{\lambda(u)}{\Delta(u)}\ud u\right)\I_{[x,\infty)}(z).
$$
If $Q$ possesses a density, then $P$ is given by
$$P(y|x) = \int_x^\infty Q(y|z)R(z|x)\ud z.$$
In the case of deterministic transitions where $Q(\mathrm{d}y|z) = \delta_{\{h(z)\}}(\mathrm{d}y)$, the transition kernel simplifies to
$$P(y|x) = (h^{-1})'(y) R(h^{-1}(y)|x).$$
Interestingly, $P$ admits a density in both scenarios. It is worth noting that this implies that $\mu$ and $\mu^-$ possess a density regardless of the transition type considered.

\begin{assumptions}\label{ass:technical}The regularity class $\mathfrak{C}$ of conditional densities $\rho$ on $\R_+$ is defined by the following conditions.
\begin{itemize}[wide=4.5pt,labelsep=4.5pt]
    \item $\rho$ is bounded.
    \item $\rho$ is uniformly Lipschitz in the first variable, i.e. there exists $\lip{\rho}>0$ such that,
    $$\forall\,x,y,z\in \R_+,\ |\rho(y|x)-\rho(z|x)|\leq \lip{\rho}|y-z|. $$
\end{itemize}
\begin{itemize}
    \item[\!\circledletter{colK}{colK}{$\clubsuit$}] The transition density $P$ of the Markov chain $Z_n$ belongs to $\mathfrak{C}$.
    \item[\!\circledletter{colKS}{colKS}{$\vardiamondsuit$}] The conditional density $R$ of $Z_{n+1}^-$ given $Z_n$, for any $n\in\N$, belongs to $\mathfrak{C}$.
\end{itemize}
\end{assumptions}

\begin{assumptions}\label{ass:clt}
The transition density $P$ of $Z_n$ and the conditional density $R$ of $Z_{n+1}^-$ given $Z_n$, for any $n\in\N$, satisfy the following conditions.
\begin{itemize}
    \item[\!\circledletter{colK}{colK}{$\clubsuit$}\!\!\circledletter{colKS}{colKS}{$\vardiamondsuit$}] There exist $a_1\geq 1$ and $a_2<1$ such that,
    \begin{equation}\label{eq:condition:lipschitzmixing}
    	\forall\,x,y\in\R_+,~\int_{\R_+\times \R_+}|u-v|^{a_1}P(u|x)P(v|y)\mathrm{d}u\,\mathrm{d}v\leq a_2|x-y|^{a_1}.
    \end{equation}
\end{itemize}    
The regularity class $\emph{\normalfont Li}(r_1,r_2)$ \cite[6.3.2~Lipschitz mixing]{D97} is the family of transition densities $A$ such that
\begin{align*}
&|A(y_1|x_1) - A(y_2|x_2)| \\
&= O\left(\|(x_1,y_1) - (x_2,y_2)\|^{r_2}\left( \|(x_1,y_1)\|^{r_1} + \|(x_2,y_2)\|^{r_2} + 1 \right)\right).
\end{align*}
\vspace{-0.75cm}
\begin{itemize}
    \item[\!\circledletter{colK}{colK}{$\clubsuit$}] $P$ belongs to the regularity class $\emph{\normalfont Li}(r_1,r_2)$ for some positive numbers $r_1$ and $r_2$, satisfying $2(r_1+r_2)\leq a_1$.
    \item[\!\circledletter{colKS}{colKS}{$\vardiamondsuit$}] $R$ belongs to the regularity class $\emph{\normalfont Li}(r_1,r_2)$ for some positive numbers $r_1$ and $r_2$, satisfying $2(r_1+r_2)\leq a_1$.
    \end{itemize}
\end{assumptions}

The two points of Assumptions~\ref{ass:clt} are of very different kinds. The first one is a so-called Lipschitz mixing hypothesis on the Markov chain $Z_n$. Iterating equation \eqref{eq:condition:lipschitzmixing}, for two independent chains $Z_n$ and $\widetilde{Z}_n$ (with same transition kernel $P$) starting from $z$ and $\tilde{z}$, it yields
$$\| Z_n-\widetilde{Z}_n\|_{\mathbb{L}^{a_1}}\leq a_2^{n/a_1}\lvert z-\tilde{z}\rvert.$$ This provides a geometric control in $\mathbb{L}^{a_1}$-norm of the dependence on the initial state. This stability property is notably required in \cite[Theorem~6.3.17]{D97} to control the rate of convergence of means along trajectories, as an intermediate step to establish central limit theorems (see also \cite[7.6~Commentary]{MT09} for another insight on this approach). In our case, due to the spatial smoothing in estimators \eqref{eq:def:k} and \eqref{eq:def:ks}, the proof of the asymptotic normality involves a remainder term whose rate of convergence is controlled thanks to this Lipschitz mixing condition. 

In comparison, in \cite[Lemma 3]{KS21}, the authors derive a geometric $\beta$-mixing property in order to apply Talagrand's inequalities for $\beta$-mixing variables. In the stationary case, geometric $\beta$-mixing translates to $$\int_{\mathbb{R}_+}\tvnorm{ P^n(\cdot | x)-\mu(\cdot)}\mu(\ud x)=o(\alpha^n),$$
for some $\alpha<1$ (see \cite{MS21} and references therein). It ensures a control of the convergence to the invariant measure with geometric rate, while the above Lipschitz mixing condition allows a control of the distance in $\mathbb{L}^{a_1}$-norm of two independent chains, also with geometric rate. 

The second point of Assumptions~\ref{ass:clt} is a regularity condition used to control the aforementioned remainder term, which is the mean along the chain of some functional $\varphi$. In addition to the stability of $Z_n$ imposed by the Lipschitz mixing hypothesis, $\varphi$ needs to be smooth enough to apply the approach of \cite[Theorem~6.3.17]{D97}. This function $\varphi$ appears to be the transition kernel of $Z_n$ (or $Z_n^-$), but the assumption is not directly used to govern its asymptotic behavior. The regularity conditions imposed in Assumptions~\ref{ass:technical} for $P$ or $R$, depending on the estimator under consideration, are, by contrast, rather mild.

It should be noted that Assumptions~\ref{ass:technical} and \ref{ass:clt} can be difficult to verify in practice because they concern the transition kernels of $Z_n$ and $Z_n^-$, which are non-trivial functions of the local characteristics of the PDMP.

\subsection{Main results}
\label{ss:res}

Before stating our convergence theorems, we group together below the regularity conditions imposed on the kernel function $K$.

\begin{assumptions}\label{ass:kernel}
The kernel function $K$ is a bounded and Lipschitz density with compact support. In addition, we denote $\tau^2 = \int_\R K^2(x)\,\ud x$.
\end{assumptions}

The following result establishes the strong consistency of the two estimators under consideration.

\begin{theorem}\label{main_ps}
If the model satisfies Assumptions~\ref{ass:local}, \ref{ass:ergodicity} and \ref{ass:technical}, and if the kernel $K$ satisfies Assumptions~\ref{ass:kernel}, then, for any bandwidth $h_n\propto n^{-\gamma}$, $0<\gamma<1$, when $n$ goes to infinity,
\begin{itemize}
    \item[\!\circledletter{colK}{colK}{$\clubsuit$}] $\displaystyle\lambdak{n}(x) \toas \lambda(x)$,
    \item[\!\circledletter{colKS}{colKS}{$\vardiamondsuit$}] $\displaystyle\lambdaks{n}(x)\toas \lambda(x)$,
\end{itemize}
for any $x$ such that $\P_\mu(Z_0\leq x<Z_1^-)>0$.
\end{theorem}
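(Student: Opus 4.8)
The strategy is to write each estimator as a ratio $\widehat{N}_n(x)/\widehat{D}_n(x)$ and to apply the almost sure ergodic theorem (available under Assumption~\ref{ass:ergodicity}, as recalled in the excerpt) to numerator and denominator separately, using the representations \eqref{eq:lambda:mum} and its deterministic-kernel counterpart. For $\lambdaks{n}$, the denominator $\frac1n\sum_{i=0}^{n-1}\I_{\{Z_i\leq x<Z_{i+1}^-\}}$ is an additive functional of the Markov chain $(Z_i,Z_{i+1}^-)$ applied to the bounded measurable test function $(z,z')\mapsto\I_{\{z\leq x<z'\}}$, hence converges almost surely to $\P_\mu(Z_0\leq x<Z_1^-)$; the hypothesis on $x$ guarantees this limit is strictly positive, so the ratio is eventually well defined. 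The numerator $\frac1n\sum_{i=1}^n K_{h_n}(Z_i^--x)$ is a kernel density estimator of $\mu^-$ evaluated at $x$; I would prove $\frac1n\sum_{i=1}^n K_{h_n}(Z_i^--x)\toas\mu^-(x)$ by the usual bias–variance split for recursive/non-recursive kernel estimators along a Markov chain — the bias term $\to0$ by continuity of $\mu^-$ (which exists and is, under Assumptions~\ref{ass:technical}, Lipschitz) and the standard kernel moment conditions in Assumptions~\ref{ass:kernel}, while the fluctuation term is controlled almost surely using the mixing/Lipschitz-mixing conditions of Assumptions~\ref{ass:clt} together with a Borel–Cantelli argument on a polynomial subsequence and a monotonicity/continuity interpolation between consecutive indices. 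Multiplying by the deterministic factor $\Delta(x)$ and invoking \eqref{eq:lambda:mum} gives $\lambdaks{n}(x)\toas\lambda(x)$.

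For $\lambdak{n}$, the argument is structurally identical: the denominator $\frac1n\sum_{i=0}^{n-1}\I_{\{Z_i\leq x\}}\I_{\{Z_{i+1}\geq h(x)\}}$ converges almost surely to $\P_\mu(Z_0\leq x, Z_1\geq h(x)) = \P_\mu(Z_0\leq x< Z_1^-)$ (the last equality is exactly the identity used in Remark~\ref{rem:k}, since $Z_1^-=\Phi(S_1|Z_0)$ and $Z_1=h(Z_1^-)$ with $h$ increasing), again positive by hypothesis on $x$; and $\frac1n\sum_{i=0}^{n-1}K_{h_n}(Z_i-h(x))\toas\mu(h(x))$ by the same kernel-estimator-along-a-Markov-chain lemma, now applied to the chain $Z_i$ under Assumption~\ref{ass:technical}~\circledletter{colK}{colK}{$\clubsuit$} and Assumption~\ref{ass:clt}~\circledletter{colK}{colK}{$\clubsuit$}. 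Multiplying by the deterministic factor $h'(x)\Delta(x)$ and using $\mu^-(x)=\mu(h(x))h'(x)$ together with \eqref{eq:lambda:mum} yields $\lambdak{n}(x)\toas\lambda(x)$.

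I would isolate the kernel-estimator convergence as a single auxiliary lemma — ``if a Markov chain with invariant density $\pi\in\mathfrak{C}$ satisfies the mixing conditions of Assumptions~\ref{ass:clt}, then $\frac1n\sum_{i=1}^n K_{h_n}(W_i-x)\toas\pi(x)$ for $h_n\propto n^{-\gamma}$, $0<\gamma<1$'' — so that both bullets follow by one application each. The main obstacle is the almost sure (not merely $\LL^2$ or in probability) control of the stochastic fluctuation of the kernel estimator: one needs a maximal/deviation inequality for sums of the weakly dependent, bandwidth-dependent array $K_{h_n}(W_i-x)-\E[K_{h_n}(W_i-x)]$, strong enough to survive a Borel–Cantelli step; this is where the quantitative mixing hypotheses in Assumptions~\ref{ass:clt} (the $\mathrm{Li}(r_1,r_2)$ condition and the contraction in $a_1$-Wasserstein-type distance) do the real work, essentially replaying the argument of \cite{AM16} in the present one-dimensional, non-recursive setting. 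Everything else — continuity of the limiting densities, positivity of the denominators, and the algebraic identities linking $\mu$, $\mu^-$, $\Delta$, $h$ and $\lambda$ — is routine given the results quoted earlier in the excerpt.
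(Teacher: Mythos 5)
Your overall plan — write each estimator as a ratio, apply the ergodic theorem to the empirical indicator denominator, prove a.s.\ convergence of the kernel estimator at the numerator, then use the algebraic identities of Remarks~\ref{rem:k} and \ref{rem:ks} to identify the limit — is the right skeleton, and the identities you invoke (in particular $Z_1\geq h(x)\Leftrightarrow Z_1^->x$ a.s.\ and $\mu^-(x)=\mu(h(x))h'(x)$) are correctly stated. But there is a genuine gap in the step you identify yourself as ``where the real work is done''.

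Theorem~\ref{main_ps} is stated under Assumptions~\ref{ass:local}, \ref{ass:ergodicity} and \ref{ass:technical} only; Assumptions~\ref{ass:clt} (the $a_1$-Wasserstein-type contraction and the $\mathrm{Li}(r_1,r_2)$ Lipschitz-mixing condition) are reserved for the central limit theorem. Your auxiliary lemma for the kernel estimator is explicitly stated with those mixing hypotheses as inputs, and they are indeed essential to your plan: without a quantitative mixing rate you cannot bound the covariances of the bandwidth-dependent array $K_{h_n}(W_i-x)$, and Chebyshev plus Borel--Cantelli on a polynomial subsequence does not close without it. So your argument proves the right conclusion but under strictly stronger hypotheses than the theorem claims.

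The paper sidesteps this by a different decomposition of the error. Rather than a bias/variance split of the kernel estimator, it writes the bivariate error $(\widehat{\mu}_n(h(x)),\widehat{D}_n(x))^\top-(\mu(h(x)),D(x))^\top$ as $M_n/n+R_n$, where $M_n$ is a vector martingale adapted to the filtration of $(Z_k)$ (see~\eqref{eq:def:Mn} and \eqref{eq:decompo:error}), and the remainder $R_n$ involves only the conditional expectations $B_k^{(i)}=\E[A_k^{(i)}\,|\,F_{k-1}]$. The remainder is handled by the almost sure ergodic theorem applied to the \emph{bounded Lipschitz} functional $z\mapsto P(h(x)\,|\,z)$ (here Assumptions~\ref{ass:technical}, not \ref{ass:clt}, do the work, via the Lipschitz bound~\eqref{eq:Rn11}). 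The martingale part is controlled by the strong law of large numbers for vector martingales: martingale increments are automatically orthogonal, so no mixing hypothesis is needed, and the rate falls out of the predictable square variation $\langle M\rangle_n$ computed in Lemma~\ref{lem:asym:crochet} ($\langle M\rangle_n^{(1,1)}\sim\omega_1 n^{1+\gamma}$, $\langle M\rangle_n^{(2,2)}\sim\omega_2 n$, off-diagonal $O(n)$), giving $\|M_n\|^2/n^2=o(n^{\gamma-2}\log(n)^{1+\eta})\to 0$ for any $0<\gamma<1$. This is precisely why the paper can drop Assumptions~\ref{ass:clt} from the consistency result, and it also has the advantage of sharing all of its preparatory work with the proof of Theorem~\ref{main_tcl}.

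To repair your proposal you should either (a) add Assumptions~\ref{ass:clt} to the statement you are proving, which would make it weaker than Theorem~\ref{main_ps}, or (b) replace the Borel--Cantelli step by a martingale argument: condition each kernel term on $F_{k-1}$, observe that $A_k^{(1)}-\E[A_k^{(1)}\,|\,F_{k-1}]$ are martingale differences, bound their conditional variances using only the boundedness of $P$ from Assumptions~\ref{ass:technical}, and invoke the martingale strong law. This is exactly the paper's route.
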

\begin{proof} An outline of the proof of these two convergence results is provided in Subsection~\ref{ss:sketch}, with the detailed derivations collected in Appendix~\ref{app:proof}.\end{proof}
Asymptotic normality is stated in the theorem below.

 \begin{theorem}\label{main_tcl}
 If the model satisfies Assumptions~\ref{ass:local}, \ref{ass:ergodicity}, \ref{ass:technical} and \ref{ass:clt}, and if the kernel $K$ satisfies Assumptions~\ref{ass:kernel}, then, for any bandwidth $h_n\propto n^{-\gamma}$, $1/3<\gamma<1$, when $n$ goes to infinity,
 \begin{itemize}
 \item[\!\circledletter{colK}{colK}{$\clubsuit$}] $\displaystyle\sqrt{nh_n}\left[\lambdak{n}(x)-\lambda(x)\right] \todist \calN(0,\sigmak{x})$,
 \item[\!\circledletter{colKS}{colKS}{$\vardiamondsuit$}]  $\displaystyle\sqrt{nh_n}\left[\lambdaks{n}(x)-\lambda(x)\right] \todist \calN(0,\sigmaks{x})$,
 \end{itemize}
with
$$
 \sigmak{x}=\frac{\tau^2\lambda(x)^2 h'(x)}{\mu^-(x)}\qquad\text{and}\qquad
 \sigmaks{x}=\frac{\tau^2\lambda(x)^2}{\mu^-(x)},
$$
 for any $x$ such that $\mu^-(x)>0$ and $\P_\mu(Z_0\leq x<Z_1^-)>0$.
\end{theorem}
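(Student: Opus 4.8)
The plan is to exploit the common ratio structure of the two estimators and reduce everything to a central limit theorem for a kernel density estimator of the invariant density of a geometrically ergodic one‑dimensional Markov chain. Write $\widehat\lambda_n(x)=c(x)\,N_n(x)/D_n(x)$, where for $\lambdaks{n}$ (case $\vardiamondsuit$) $c(x)=\Delta(x)$, $N_n(x)=\frac1n\sum_{i=1}^nK_{h_n}(Z_i^--x)$, $D_n(x)=\frac1n\sum_{i=0}^{n-1}\I_{\{Z_i\leq x<Z_{i+1}^-\}}$, and for $\lambdak{n}$ (case $\clubsuit$) $c(x)=h'(x)\Delta(x)$, $N_n(x)=\frac1n\sum_{i=0}^{n-1}K_{h_n}(Z_i-h(x))$, while — since $h$ is a strictly increasing $C^1$‑bijection and $Z_{i+1}^-$ has a density — $\I_{\{Z_i\leq x\}}\I_{\{Z_{i+1}\geq h(x)\}}=\I_{\{Z_i\leq x<Z_{i+1}^-\}}$ almost surely, so $D_n(x)$ equals the same empirical frequency. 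In both cases $N_n(x)\toas m(x)$ and $D_n(x)\toas p(x):=\P_\mu(Z_0\leq x<Z_1^-)>0$, with $m(x)=\mu^-(x)$, resp.\ $\mu(h(x))=\mu^-(x)/h'(x)$, and $c(x)m(x)/p(x)=\lambda(x)$ by \eqref{eq:lambda:mum}. I would then linearise
$$\widehat\lambda_n(x)-\lambda(x)=\frac{c(x)}{D_n(x)}\left[\bigl(N_n(x)-m(x)\bigr)-\frac{m(x)}{p(x)}\bigl(D_n(x)-p(x)\bigr)\right],$$
and, since $c(x)/D_n(x)\toas c(x)/p(x)=\lambda(x)/m(x)$, invoke Slutsky's lemma: it suffices to prove that $\sqrt{nh_n}$ times the bracket converges in distribution to $\calN(0,\tau^2 m(x))$, because then the limit variance is $(\lambda(x)/m(x))^2\tau^2 m(x)=\tau^2\lambda(x)^2/m(x)$, which is $\sigmaks{x}$ when $m(x)=\mu^-(x)$ and $\sigmak{x}$ when $m(x)=\mu^-(x)/h'(x)$.

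Next I would discard the two innocuous pieces of the bracket. The denominator term $\sqrt{nh_n}(D_n(x)-p(x))$ tends to $0$ in probability: the indicator is a bounded functional of the ergodic chain $(Z_i,Z_{i+1}^-)$ — geometrically ergodic under Assumption~\ref{ass:clt}, whose bound with exponents $a_1,a_2$ is a Wasserstein contraction — so the Markov‑chain central limit theorem \cite[Theorem~17.5.3]{MT09} gives $\sqrt n(D_n(x)-p(x))=O_\P(1)$, and the extra factor $\sqrt{h_n}\to0$ closes the argument. The numerator bias is $\E[N_n(x)]-m(x)=O(h_n)$: writing $\E[N_n(x)]=\frac1n\sum_i\int_\R K(u)\,\mu_i^\bullet(x_0+uh_n)\,\ud u$ with $x_0\in\{x,h(x)\}$ and $\mu_i^\bullet$ the density of $Z_i^-$ or $Z_i$, one uses that every $\mu_i^\bullet$ is Lipschitz with a constant inherited from the uniform Lipschitz bound on $R$ or $P$ in Assumptions~\ref{ass:technical} (and that $K$ is compactly supported, Assumption~\ref{ass:kernel}), together with $\mu_i^\bullet(x_0)\to m(x)$ — a bounded test function integrated against $\mu_{i-1}-\mu$, whose total variation decays geometrically by Assumption~\ref{ass:clt}. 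Hence $|\E[N_n(x)]-m(x)|=O(h_n)+O(1/n)$, so $\sqrt{nh_n}(\E[N_n(x)]-m(x))=O(\sqrt{nh_n^3})+o(1)\to0$ precisely because $\gamma>1/3$. It therefore remains to establish $\sqrt{nh_n}\bigl(N_n(x)-\E[N_n(x)]\bigr)\todist\calN(0,\tau^2 m(x))$.

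For this core step I would use a martingale approximation. Let $W_i$ denote the chain ($Z_i^-$ in case $\vardiamondsuit$, $Z_i$ in case $\clubsuit$), $\bar P$ its transition kernel, $\mu_\infty$ its invariant density, all with bounded transition densities; solving the Poisson equation $\widehat g_n-\bar P\widehat g_n=g_n-\mu_\infty(g_n)$ with $g_n=K_{h_n}(\cdot-x_0)$ — the resolvent series converges geometrically by Assumption~\ref{ass:clt} — rewrites $\sum_i\bigl(g_n(W_i)-\E[g_n(W_i)]\bigr)$ as $\sum_i\Delta M_i^{(n)}$ plus a telescoping remainder of order $O(1)$, where $\Delta M_i^{(n)}=\widehat g_n(W_i)-\bar P\widehat g_n(W_{i-1})$ are martingale differences bounded by $O(1/h_n)$. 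I would then apply a triangular‑array martingale central limit theorem: the conditional Lindeberg condition is automatic since $1/h_n=o(\sqrt{n/h_n})$ once $nh_n\to\infty$, i.e.\ $\gamma<1$, and the rescaled conditional variance $\frac{h_n}{n}\sum_i\E[(\Delta M_i^{(n)})^2\mid\mathcal{F}_{i-1}]$ converges in probability to $\tau^2\mu_\infty(x_0)=\tau^2 m(x)$, because $h_n\bar P(g_n^2)(w)=\int_\R K^2(u)\bar P(x_0+uh_n\mid w)\,\ud u\to\tau^2\bar P(x_0\mid w)$ uniformly (the cross and lower‑order pieces of $\widehat g_n^{\,2}$ being $O(1/h_n)\cdot O(h_n)=O(h_n)$ after rescaling), followed by the ergodic theorem \cite[Theorem~17.1.7]{MT09} giving $\frac1n\sum_i\tau^2\bar P(x_0\mid W_{i-1})\to\tau^2\int\bar P(x_0\mid w)\,\mu_\infty(\ud w)=\tau^2\mu_\infty(x_0)$. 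A direct covariance check confirms the variance guess: the off‑diagonal terms satisfy $|\mathrm{Cov}(g_n(W_i),g_n(W_j))|\leq C\varrho^{|i-j|}$ uniformly in $n$ — here the boundedness of the iterated conditional densities and the $\mathrm{Li}(r_1,r_2)$ Lipschitz‑mixing of Assumption~\ref{ass:clt} are what bound an integral of $K_{h_n}$ against $\bar P^{\,k}(\cdot\mid w)-\mu_\infty$ by an $O(1)$ geometrically decaying quantity — so $\frac{h_n}{n}\sum_{i\neq j}|\mathrm{Cov}|=O(h_n)\to0$, while the diagonal gives $h_n\,\var(g_n(W_i))\to\tau^2\mu_\infty(x_0)$. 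One may equivalently bundle the numerator and the negligible denominator increments into a single $\R^2$‑valued martingale and read off the result from a vector martingale CLT; this is only cosmetic.

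The hard part is this last step, and within it the reconciliation of two opposing scales: the summands $g_n(W_i)$ blow up like $1/h_n$, so the mixing/martingale machinery must be quantitative enough to guarantee both that the chain's short‑range dependence perturbs the variance only at order $h_n$ and that the Lindeberg condition still holds — which is precisely why the technical hypotheses (uniform Lipschitz regularity of $P,R$ in Assumptions~\ref{ass:technical}; the $a_1,a_2$ contraction and $\mathrm{Li}(r_1,r_2)$ regularity in Assumption~\ref{ass:clt}) are imposed. The admissible range $1/3<\gamma<1$ is tight from both ends for this method: $\gamma>1/3$ annihilates the $O(h_n)$ kernel bias at rate $\sqrt{nh_n}$ (a symmetric higher‑order kernel plus more smoothness on $\mu^-$ would weaken this to $\gamma>1/5$), and $\gamma<1$ ensures $nh_n\to\infty$. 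Finally, the non‑stationarity caused by the deterministic start $\mu_0=\delta_{\{x\}}$ enters only through boundary and Cesàro terms that vanish geometrically by Assumption~\ref{ass:clt}, and the mismatch between the summation ranges $\sum_{i=1}^n$ and $\sum_{i=0}^{n-1}$ is an $O(1/n)$ correction.
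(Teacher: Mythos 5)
Your overall skeleton is sound and, at a high level, parallels the paper's: linearize the ratio, kill the denominator fluctuation with the Markov-chain CLT (the paper does the same for its remainder $R_n^{(2)}$ via \cite[Theorem~17.5.3]{MT09}), and reduce everything to a CLT at rate $\sqrt{nh_n}$ for the kernel term; your variance bookkeeping $(\lambda(x)/m(x))^2\tau^2 m(x)=\tau^2\lambda(x)^2/m(x)$ correctly reproduces $\sigmak{x}$ and $\sigmaks{x}$. Where you genuinely diverge is the core step: the paper never solves a Poisson equation. It centers \emph{conditionally}, forming the two-dimensional Doob martingale with increments $A_k-\E[A_k\mid F_{k-1}]$, computes its predictable quadratic variation using only the ergodic theorem and the Lipschitz-in-first-variable hypothesis of Assumptions~\ref{ass:technical}, and then controls the leftover conditional-mean fluctuation $\frac1n\sum_k P(h(x)\mid Z_{k-1})-\mu(h(x))$ at rate $n^{(1-\gamma)/2}$ via Duflo's Theorem~6.3.17, which is the \emph{only} place Assumptions~\ref{ass:clt} enter. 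You instead center at unconditional expectations and build the martingale through the resolvent.

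This is where your argument has a real gap. Everything in your core step (increments $O(1/h_n)$, Lindeberg, conditional variance carried by $g_n^2$) requires the Poisson solution to satisfy $\widehat g_n=g_n-\mu_\infty(g_n)+O(1)$ uniformly in $n$, and your justification --- ``the resolvent series converges geometrically by Assumption~\ref{ass:clt}'' --- does not deliver this as stated. The contraction in Assumptions~\ref{ass:clt} is a Wasserstein-type contraction acting through Lipschitz seminorms, and $g_n=K_{h_n}(\cdot-x_0)$ has Lipschitz constant of order $h_n^{-2}=n^{2\gamma}$; the naive iteration therefore bounds the resolvent tail, hence the martingale increments, only by $O(n^{2\gamma})$, and the Lindeberg requirement $o(n^{(1+\gamma)/2})$ then fails exactly when $\gamma\geq 1/3$, i.e. in the whole range covered by Theorem~\ref{main_tcl}. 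To get the $O(1)$ tail you must first smooth by one step and show that $w\mapsto\int K(u)\,\bar P(x_0+uh_n\mid w)\,\ud u$ is Lipschitz in $w$ uniformly in $n$; but Assumptions~\ref{ass:technical} impose Lipschitz continuity of $P$ and $R$ only in the \emph{first} variable, and the $\mathrm{Li}(r_1,r_2)$ hypothesis is used by the paper purely as input to Duflo's a.s.\ rate theorem, not as a regularizing property of $\bar P$ in its conditioning argument. The same issue affects your uniform covariance bound $|\mathrm{Cov}(g_n(W_i),g_n(W_j))|\leq C\varrho^{|i-j|}$. A second, smaller gap: your bias control invokes ``total variation decays geometrically by Assumption~\ref{ass:clt}'', but that assumption is not a TV statement, and Assumption~\ref{ass:ergodicity} gives TV convergence with no rate; since you center at unconditional means under the non-stationary start $\delta_{\{x\}}$, you do need a quantitative rate for $\frac1n\sum_i\bigl(\E[g_n(W_i)]-\mu_\infty(g_n)\bigr)$, which must be extracted (with work) from the contraction plus regularity, or avoided altogether by conditional centering as in the paper. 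In short, the route is repairable but the two quantitative claims at its heart are not consequences of the stated assumptions as you use them, and the first one breaks under the naive estimate precisely for $\gamma>1/3$.
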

\begin{proof} An outline of the proof of these two convergence results is provided in Subsection~\ref{ss:sketch}, with the detailed derivations collected in Appendix~\ref{app:proof}.\end{proof}

The following remarks draw initial conclusions on the theoretical comparison of the estimators under consideration. These will be further developed and supported by numerical experiments in the next section through the example of TCP process.

\begin{remark}\label{rem:comp:var:k:ks}
In light of Theorem~\ref{main_tcl}, estimators $\lambdak{n}(x)$ and $\lambdaks{n}(x)$ of $\lambda(x)$ converge at the same rate $\sqrt{nh_n}$. Furthermore, in the case of deterministic fragmentation $Q(\cdot|x)=\delta_{\{h(x)\}}$, the asymptotic variances are identical up to the multiplicative factor $h'(x)$. In the particular case of linear fragmentation, $h(x)=\kappa x$ with $0<\kappa<1$, estimator $\lambdak{n}$ always has a better variance than $\lambdaks{n}$. However, variances can not generally be ordered: for instance, if $h(x)=x/(1+\exp(-x))$, which satisfies $0\leq h(x)\leq x$, we observe that $h'(1)<1$ and $h'(2)>1$. This occurs even though $\lambdak{n}$ uses additional information (the form of the transition kernel) compared to $\lambdaks{n}$.
\end{remark}

\begin{remark}\label{rem:comp:var:k:ks:amg}
As stated in Theorem~\ref{theo:lambdaamg}, estimator $\lambdaamgo{n}(x)$ converges at the rate $\sqrt{nh_n^\textrm{\normalfont\tiny s}h_n^\textrm{\normalfont\tiny t}}$, whereas we have just shown in Theorem~\ref{main_tcl} that estimators $\lambdak{n}(x)$ and $\lambdaks{n}(x)$ converge at the rate $\sqrt{nh_n}$. This difference is due to the presence of temporal smoothing in the former. Consequently, if we assume that the spatial windows of the three estimators should be equivalent, then estimator $\lambdaamgo{n}(x)$ is slower and, asymptotically, disqualified. Conversely, if we assume $h_n^\textrm{\normalfont\tiny s}h_n^\textrm{\normalfont\tiny t}\sim h_n$, the comparison of asymptotic variances $\sigmaamg{x}$, $\sigmak{x}$ and $\sigmaks{x}$ becomes meaningful and determines which estimator is asymptotically superior (at least based on the variance). In practice, we generally work with a large but finite dataset. The variances of the estimators are therefore on the order of the asymptotic variance in the central limit theorem normalized by the convergence rate. For not-too-large values of $n$, it is possible that a slower estimator achieves a lower variance than a faster one (because the asymptotic variance in the central limit theorem for the former is significantly smaller than that of the latter). The theoretical comparison of variances (normalized by the convergence rate or not), is inherently complex given the generality of our problem. Therefore, this analysis will be undertaken using the example of the TCP model in Section~\ref{s:tcp}.
\end{remark}

\subsection{Sketch of the proof}\label{ss:sketch}

We give here an overview of the proof of the results announced in Subsection~\ref{ss:res}. The full details are provided in Appendix~\ref{app:proof}.

The study of $\lambdak{n}$ is based on formula $\lambda(x) = \mu(h(x))h'(x)/D(x)$ stated in \eqref{eq:lambda:intro:k}, while the investigation on $\lambdaks{n}$ relies on $\lambda(x) = \mu^-(x)/D(x)$ given in \eqref{eq:lambda:intro}, where
\begin{equation}\label{eq:def:D:proof}
D(x) = \frac{\P_\mu\left(Z_{k-1} \le x \le Z_k^{-}\right)}{\Delta(x)}.
\end{equation}
In the first formula, on which the definition of $\lambdak{n}(x)$ is based, the numerator is estimated as follows,
    \begin{equation*}
    \widehat{\mu}_n(h(x)) = \frac{1}{n}\sum_{i=1}^{n} K_{h_n}(Z_i-h(x)),
    \end{equation*}
and in the second formula, yielding to the definition of $\lambdaks{n}(x)$, the numerator is handled analogously,
    \begin{equation*}
    \widehat{\mu}^-_n(x) = \frac{1}{n}\sum_{i=1}^{n} K_{h_n}(Z_i^--x),
    \end{equation*}
where $h_n=cn^{-\gamma}$, $1/3<\gamma<1$ (for the sake of readability, $c=1$ in the sequel of the proof).
The quotient $D(x)$ is finally estimated by 
$$\widehat{D}_n(x) =\frac{1}{n \Delta(x)}\sum_{i=1}^{n}\I_{\{Z_{k-1} \le x \le Z_k^{-}\}}.$$
Estimator $\lambdak{n}(x)$ given in \eqref{eq:def:k} is basically the quotient of $\widehat{\mu}_n(h(x))$ over $\widehat{D}_n(x)/h'(x)$ while $\lambdaks{n}(x)$ given in \eqref{eq:def:ks} is the quotient of $\widehat{\mu}^-_n(x)$ over $\widehat{D}_n(x)$. In order to establish the asymptotic properties of $\lambdak{n}(x)$ and $\lambdaks{n}(x)$, we shall study the two-dimensional vectors $(\widehat{\mu}_n(h(x)),\widehat{D}_n(x)/h'(x))^\top$ and $(\widehat{\mu}^-_n(x),\widehat{D}_n(x))^\top$. Our main goal is to show that they tend towards their deterministic counterparts, $(\mu(h(x)),D(x)/h'(x))^\top$ and $(\mu^-(x),D(x))^\top$, and catch the rates of convergence.

To this end, we introduce the following pieces of notation. First, $F_n$ denotes the $\sigma$-algebra generated by $Z_0,\dots,Z_n$, while $G_n$ is the $\sigma$-algebra generated by
$Z_0,Z_1^-,Z_1,\dots,Z_n^-,Z_n$. We also introduce $M_n = (M_n^{(1)},M_n^{(2)})^\top$  and $\widetilde{M}_n = (\widetilde{M}_n^{(1)},\widetilde{M}_n^{(2)})^\top$, the vector martingales (adapted to $F_n$ and $G_n$, respectively) defined, for $i\in\{1,2\}$, by
\begin{eqnarray}
M_n^{(i)}&=&\sum_{k=1}^n\left(A_k^{(i)}-B_k^{(i)}\right),\label{eq:def:Mn}\\  \widetilde{M}_n^{(i)}&=&\sum_{k=1}^n\left(\widetilde{A}_k^{(i)}-\widetilde{B}_k^{(i)}\right),\nonumber
\end{eqnarray}
where
\[
\begin{aligned}
A_k^{(1)} \;&=\; K_{h_n}\!\left(Z_k - h(x)\right),
&\qquad 
\widetilde{A}_k^{(1)} \;&=\; K_{h_n}\!\left(Z_k^{-} - x\right),
\\[0.6em]
{A_k^{(2)}} \;&=\;
{\frac{\I_{\{Z_{k-1} \le x \le Z_k^{-}\}}}{h'(x)\Delta(x)}},
&\qquad 
\widetilde{A}_k^{(2)} \;&=\;
{\frac{\I_{\{Z_{k-1} \le x \le Z_k^{-}\}}}{\Delta(x)}},
\\[0.6em]
B_k^{(i)} \;&=\; \mathbb{E}\!\left[A_k^{(i)} \mid F_{k-1}\right],
&\qquad 
\widetilde{B}_k^{(i)} \;&=\; \mathbb{E}\!\left[\widetilde{A}_k^{(i)} \mid G_{k-1}\right].
\end{aligned}
\]
It is worth noting that in the case of $\lambdak{n}(x)$, because the jumps are deterministic, the numerator of $A_k^{(2)}$ can be rewritten as a function of $Z_k$ and $Z_{k-1}$ only, with $\I_{\{Z_{k-1}\leq x\leq Z_k^-\}}=\I_{\{Z_{k-1}\leq x, Z_k\geq h(x)\}}$.

We also consider $R_n = (R_n^{(1)},R_n^{(2)})^\top$ and
$\widetilde{R}_n = (\widetilde{R}_n^{(1)},\widetilde{R}_n^{(2)})^\top$ defined by
\[
\begin{aligned}
R_n^{(1)} \;&=\; \frac{1}{n}\sum_{k=1}^nB_k^{(1)}-\mu(h(x)),
&\qquad 
\widetilde{R}_n^{(1)} \;&=\; \frac{1}{n}\sum_{k=1}^n\widetilde{B}_k^{(1)}-\mu^-(x),
\\[0.6em]
R_n^{(2)}\;&=\;\frac{1}{n}\sum_{k=1}^nB_k^{(2)}-\frac{D(x)}{h'(x)},
&\qquad 
\widetilde{R}_n^{(2)} \;&=\;\frac{1}{n}\sum_{k=1}^n\widetilde{B}_k^{(2)}-D(x).
\end{aligned}
\]

Then a direct calculus shows that the two-dimensional estimation errors are given by
\begin{equation}
    \label{eq:decompo:error}
\left(\begin{matrix}
 \widehat{\mu}_n(h(x))\\ \widehat{D}_n(x)/h'(x)
\end{matrix}\right) -
\left(\begin{matrix}
 \mu(h(x))\\ D(x)/h'(x)
\end{matrix}\right)
= \frac{M_n}{n} + R_n
\end{equation}
and 
\begin{equation*}
\left(\begin{matrix}
 \widehat{\mu}^-_n(x)\\ \widehat{D}_n(x)
\end{matrix}\right) -
\left(\begin{matrix}
 \mu^-(x)\\ D(x)
\end{matrix}\right)
= \frac{\widetilde{M}_n}{n} + \widetilde{R}_n.
\end{equation*}

These two decompositions are at the core of our demonstration. In order to study $\lambdak{n}(x)$, we establish that $M_n/n$ almost surely goes to $0$ by virtue of the law of large numbers for vector martingales (see Appendix~\ref{proof_as}), and state that the remainder term $R_n$ also vanishes, establishing the consistency of the estimator given in Theorem~\ref{main_ps}. For the asymptotic normality, we first state that $n^{(1-\gamma)/2}R_n$ goes to $0$ in probability (see Appendix~\ref{proof_tcl}). The convergence of the estimation error is therefore carried by the martingale term, which we study in light of the central limit theorem for vector martingales. This proves the asymptotic normality of $\lambdak{n}(x)$ given in Theorem~\ref{main_tcl}. The investigations for $\widetilde{M}_n$ and $\widetilde{R}_n$ are not presented in this paper but follow the same steps and yield similar convergence properties for $\lambdaks{n}(x)$.

\section{Comparison in the TCP model}
\label{s:tcp}

\subsection{TCP process}
\label{ss:tcp:tcp}

The TCP model is presented in Subsection~\ref{ss:intro:tcp} as the PDMP with jump rate $\lambda(x)=x$, deterministic flow $\Phi(t|x) = x+t$, and transition kernel $Q(\cdot|x) = \delta_{\{\kappa x\}}$. It adheres to the most restrictive assumption made in this paper, namely deterministic fragmentation. However, it does not match the assumptions made for the theoretical study of $\lambdaks{n}$ in the present paper and of $\lambdaamgo{n}$ in \cite{AM16}, where the kernel $Q$ is assumed to admit a density with respect to the Lebesgue measure. Nevertheless, both investigations essentially require that the kernel of an embedded chain (the kernel $R$ of $Z_n^-$ for $\lambdaks{n}$ and the kernel $P$ of $Z_n$ for $\lambdaamgo{n}$) possess a density (which is the case when $Q$ itself has a density in the given frameworks). Indeed, these Markov chains lie at the core of the theoretical analyses. In the context of the TCP model, it is straightforward to show that the aforementioned discrete-time kernels admit a density (see Subsection~\ref{ss:modelass} or \cite{K16}), and thus the asymptotic normalities given in Theorems~\ref{theo:lambdaamg} and \ref{main_tcl} remain valid in this setting. As a consequence, the TCP model is an excellent example for comparing the theoretical variances of the three methods considered, completing Remarks~\ref{rem:comp:var:k:ks} and \ref{rem:comp:var:k:ks:amg}, and also for confronting the theory with numerical simulations.

As noted in Remark~\ref{rem:comp:var:k:ks}, in this model, variance $\sigmak{x}$ is smaller than variance $\sigmaks{x}$ by a factor of $\kappa$, uniformly in $x$. These two variances are compared to variance $\sigmaamg{x}$ (given in Theorem~\ref{theo:lambdaamg}) for different values of $\kappa$ in Figure~\ref{fig:simu:sdclt}. This is a numerical evaluation based on formula \eqref{eq:mu:tcp} (together with $\mu^-(x) = \kappa\mu(\kappa x)$) without any random generation. This comparison illustrates that, within the same model, the variances intersect, showing that, at least by this criterion, none of the techniques is uniformly superior to the others. However, this comparison criterion does not account for the bias of the estimators or the rate of convergence in the central limit theorems as mentioned in Remark~\ref{rem:comp:var:k:ks:amg}. For the rest of the simulation study, we restrict ourselves to $\kappa=0.4$ for which $\sigmaamg{x}$ intersects $\sigmak{x}$ around $x=2$ and $\sigmaks{x}$ around $x=2.5$.

\begin{figure}[ht]
    \centering
    \begin{tabular}{ccc}
    $\kappa=0.3$ & $\kappa=0.4$ & $\kappa=0.5$ \\
    \includegraphics[height=3.3cm]{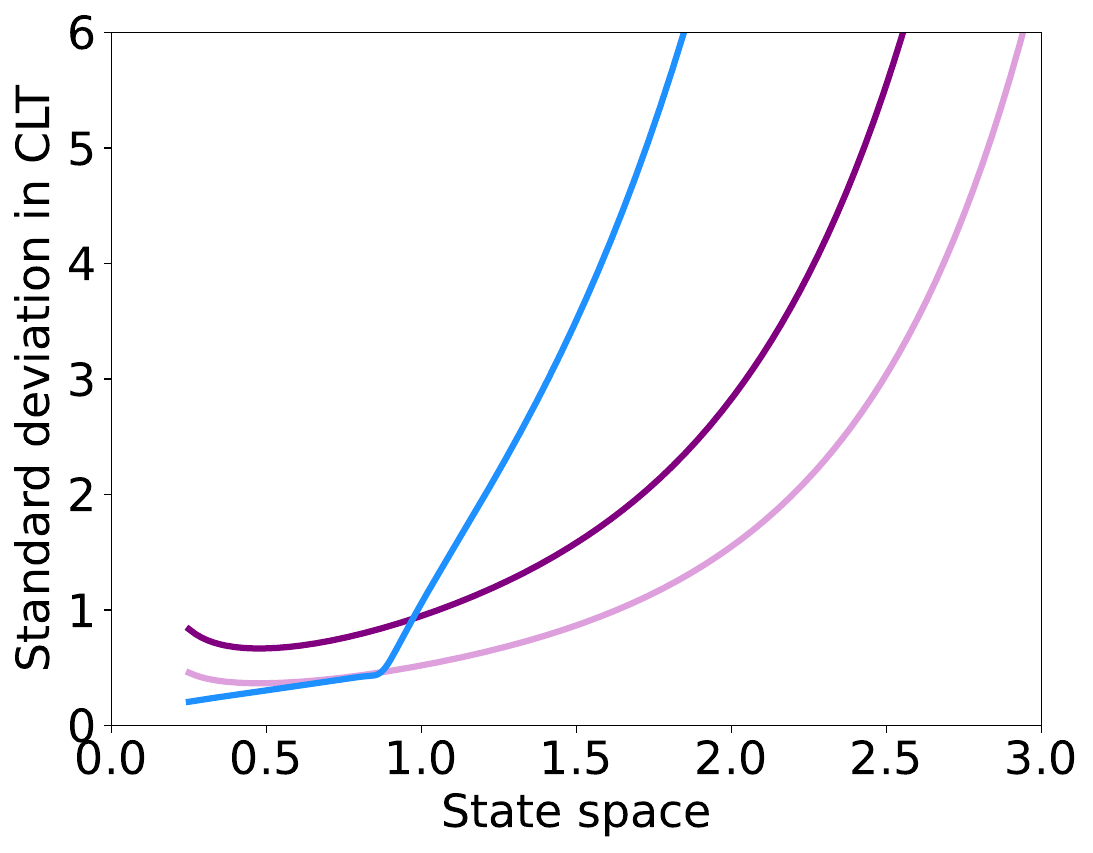}&
    \includegraphics[height=3.3cm]{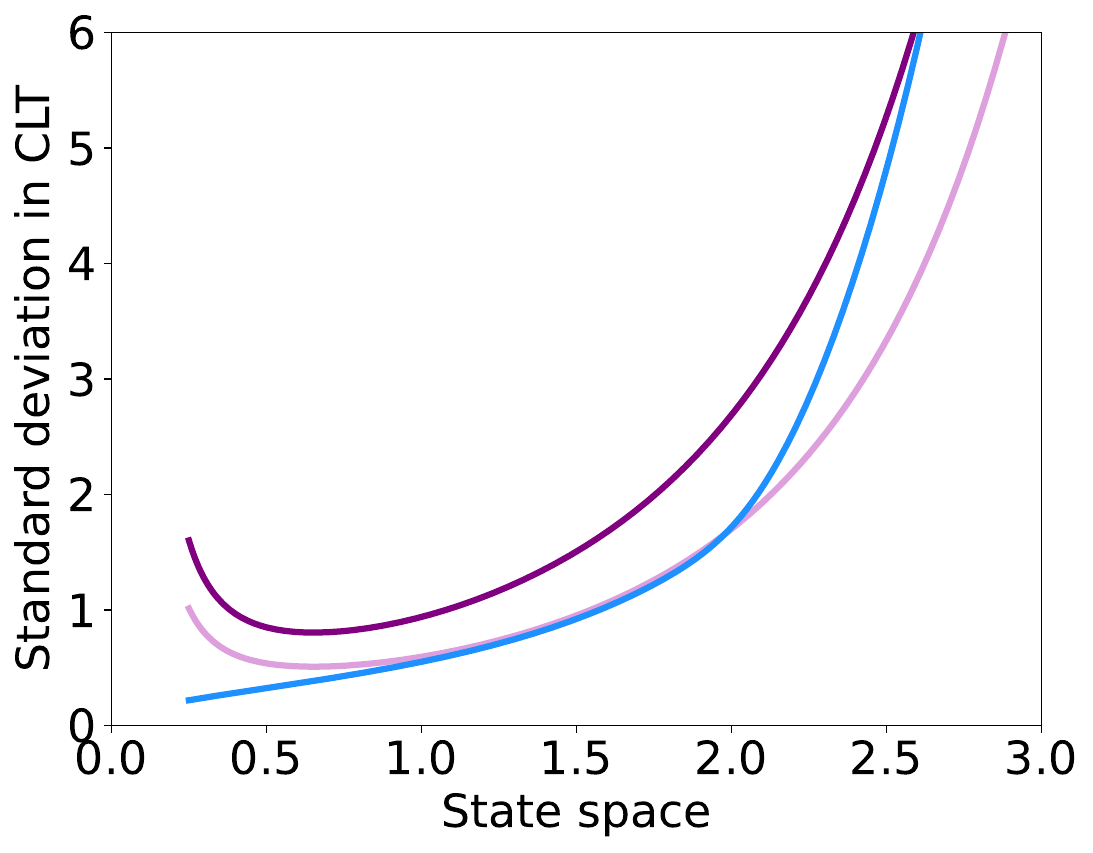}&
    \includegraphics[height=3.3cm]{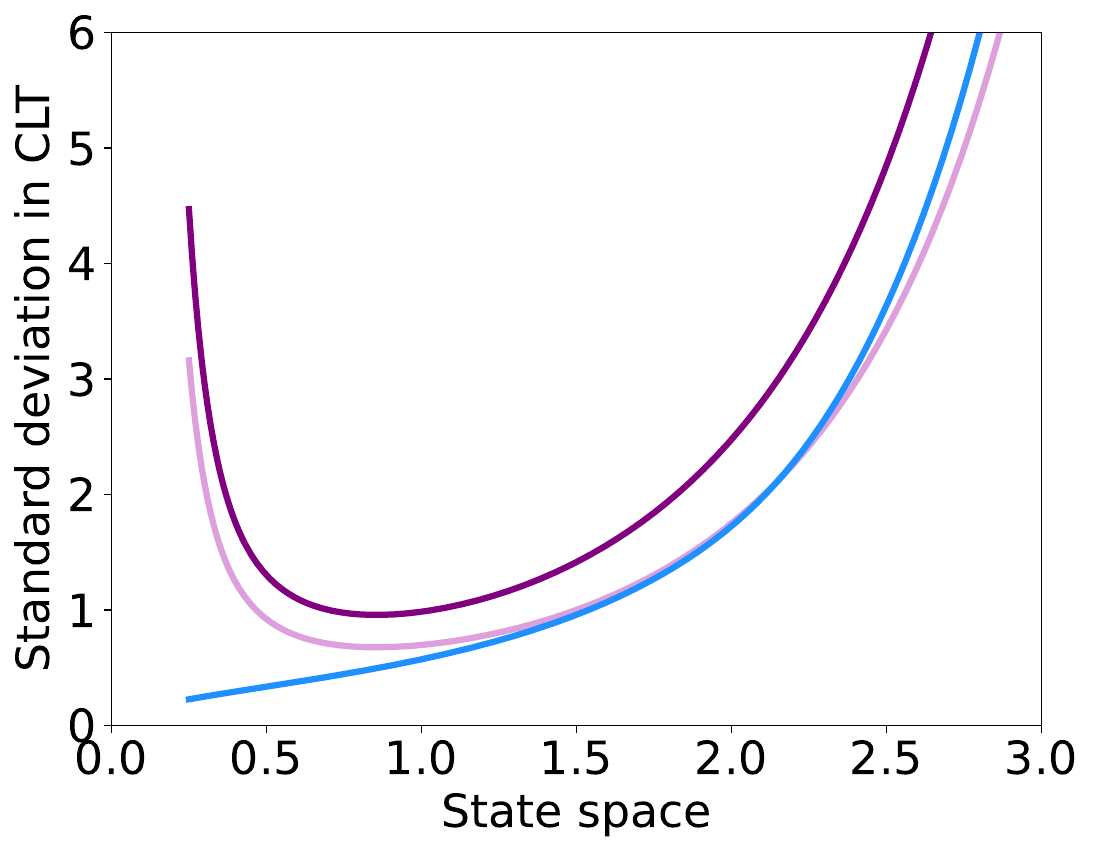} \vspace{0.5cm}\\
    
    $\kappa=0.6$ & $\kappa=0.7$ & $\kappa=0.8$ \\
    \includegraphics[height=3.3cm]{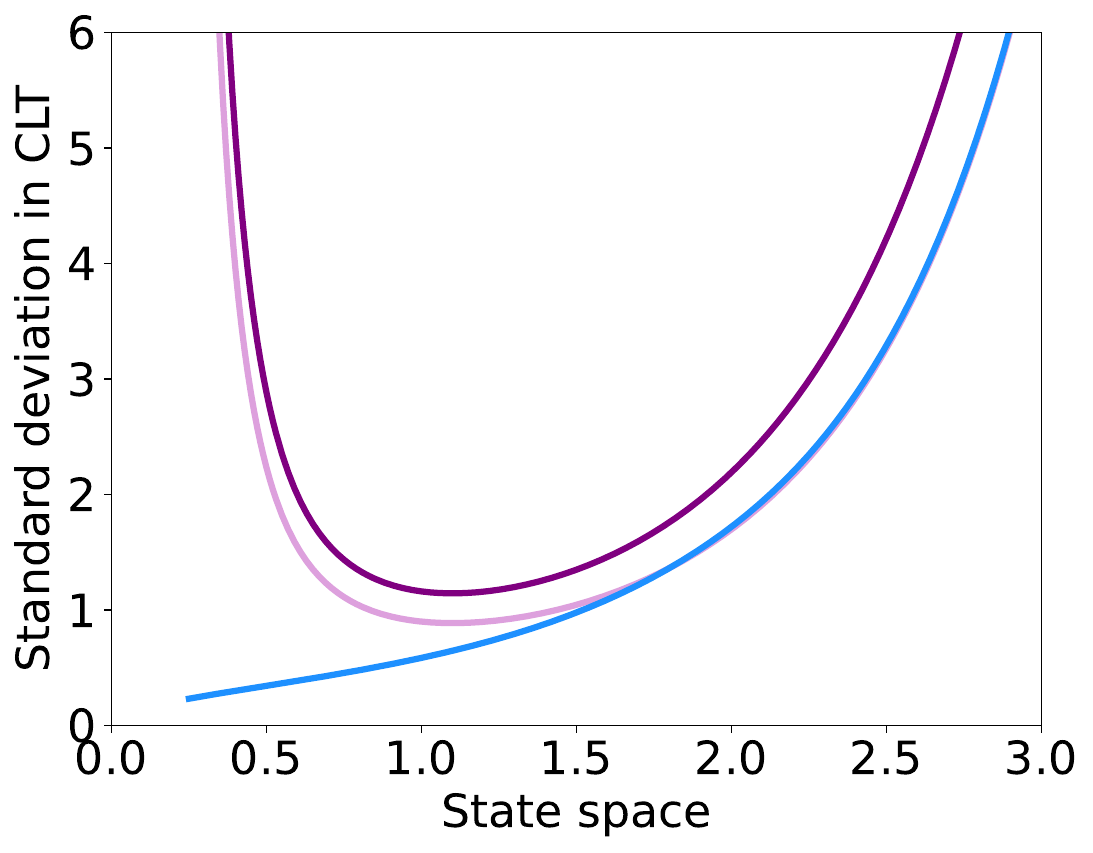}&
    \includegraphics[height=3.3cm]{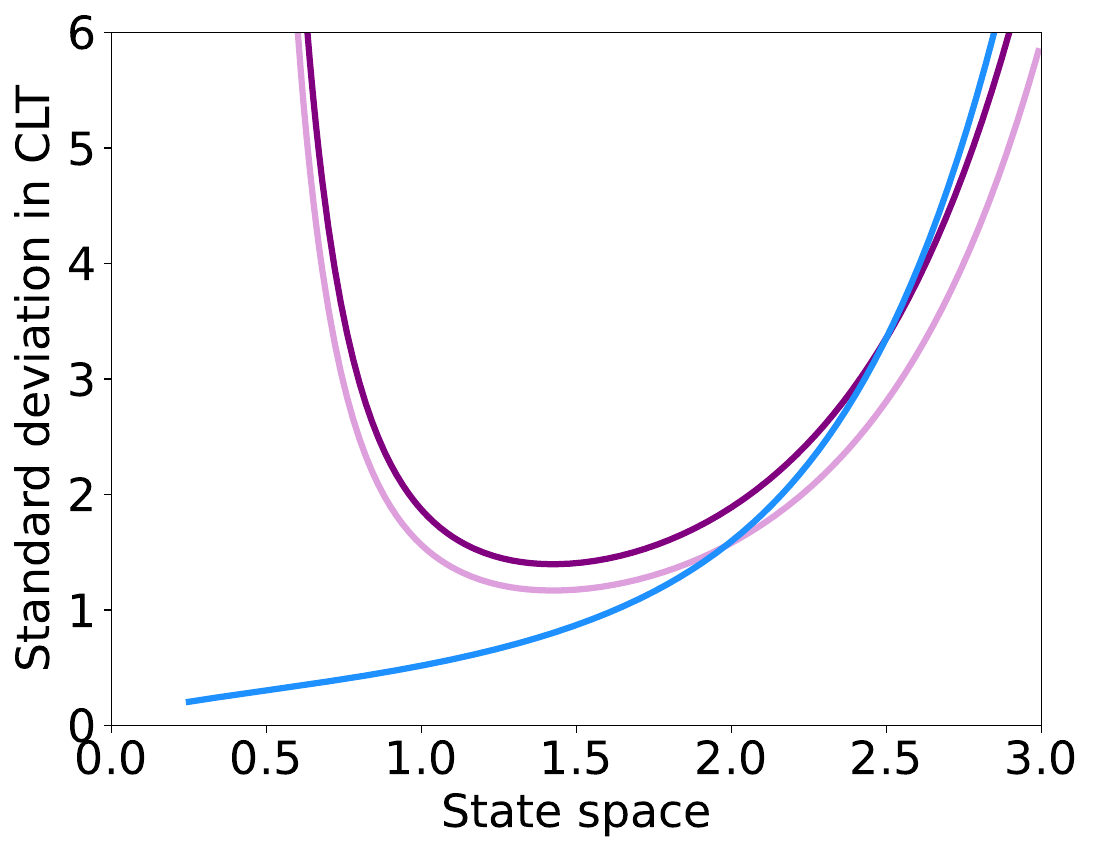}&
    \includegraphics[height=3.3cm]{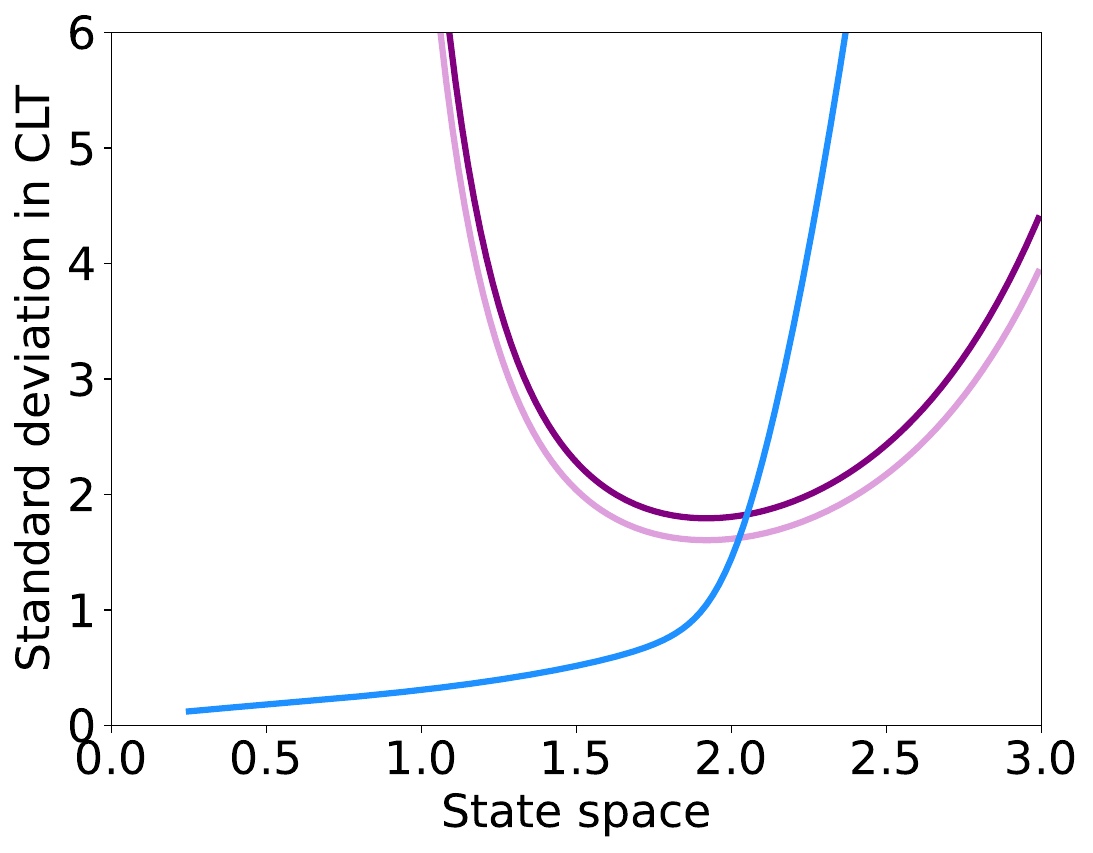}
    \end{tabular}
    \caption{Standard deviations $\sqrt{\sigmak{x}}$ in pink, $\sqrt{\sigmaks{x}}$ in purple and $\sqrt{\sigmaamg{x}}$ in light blue (given in Theorems~\ref{theo:lambdaamg} and \ref{main_tcl}) for the TCP model with different values of $\kappa$.}
    \label{fig:simu:sdclt}
\end{figure}

\subsection{Bandwidth selection}
\label{ss:tcp:bandwidth}

Kernel methods are highly sensitive to the smoothing parameter, which must therefore be chosen carefully. The estimators under consideration rely on estimating different invariant measures. For instance, $\lambdaks{n}$ involves estimating $\mu^-$, whereas $\lambdak{n}$ relies on estimating $\mu$. Consequently, there is no reason for them to share the same bandwidth. Moreover, there is no a priori justification for selecting the bandwidth independently for the kernel component of the jump rate estimator; instead, it should be based on the overall formula of the estimator. Additionally, developing a method for adjusting the smoothing bandwidth is a challenging and interesting question that goes far beyond the scope of this paper. Here, we propose a bandwidth selection procedure for each of the estimators based on our knowledge of the ground truth, which, therefore, can not be applied in practice to real-world data.

For the estimator $\widehat{\lambda}$, the smoothing parameter selected in this simulation study is the minimizer of the integrated square error
$$ \int_{0.5}^{2.5} \left[\widehat{\lambda}(x) -\lambda(x)\right]^2\ud x,$$
where the integral is evaluated with a step of $0.05$ and the numerical minimization is performed with a step of $0.025$ for $\lambdak{n}$ and $\lambdaks{n}$ and a step of $0.01$ in space and of $0.05$ in time for $\lambdaamgo{n}$ and $\lambdaamg{n}$. Indeed, it should be noted that the latter two estimators, given in \eqref{eq:def:amgo} and \eqref{eq:def:amg}, depend on two bandwidths, in space and in time. In addition, the optimal argument selection in $\lambdaamg{n}$ is operated with the same spatial bandwidth as the one used to compute the estimator. Figure~\ref{fig:simu:bandwidth} presents the results obtained over $100$ replicates for the four estimators from two different sample sizes.

\begin{figure}[ht]
    \centering
    \begin{tabular}{ccc}
    $n=1\,000$ & \hspace{0.25cm} &$n=10\,000$\\
    \includegraphics[height=4.4cm]{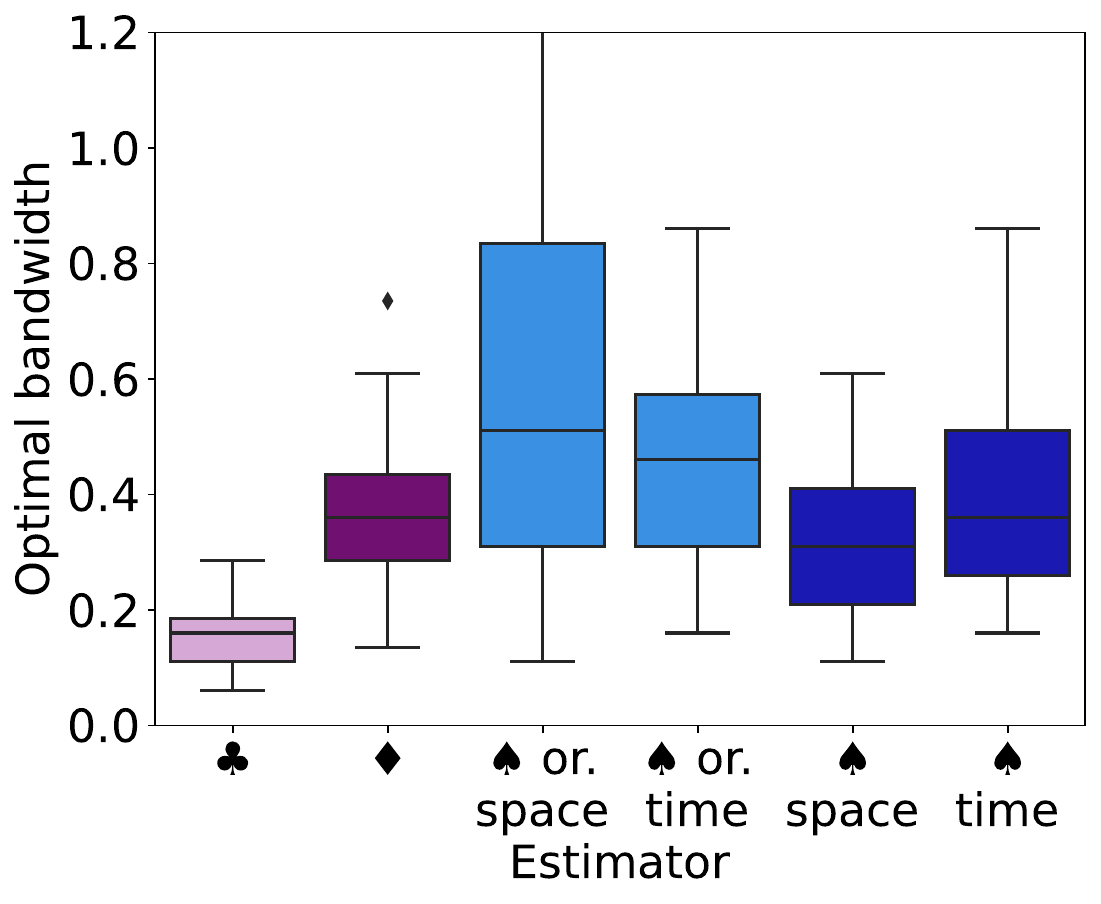}&&
    \includegraphics[height=4.4cm]{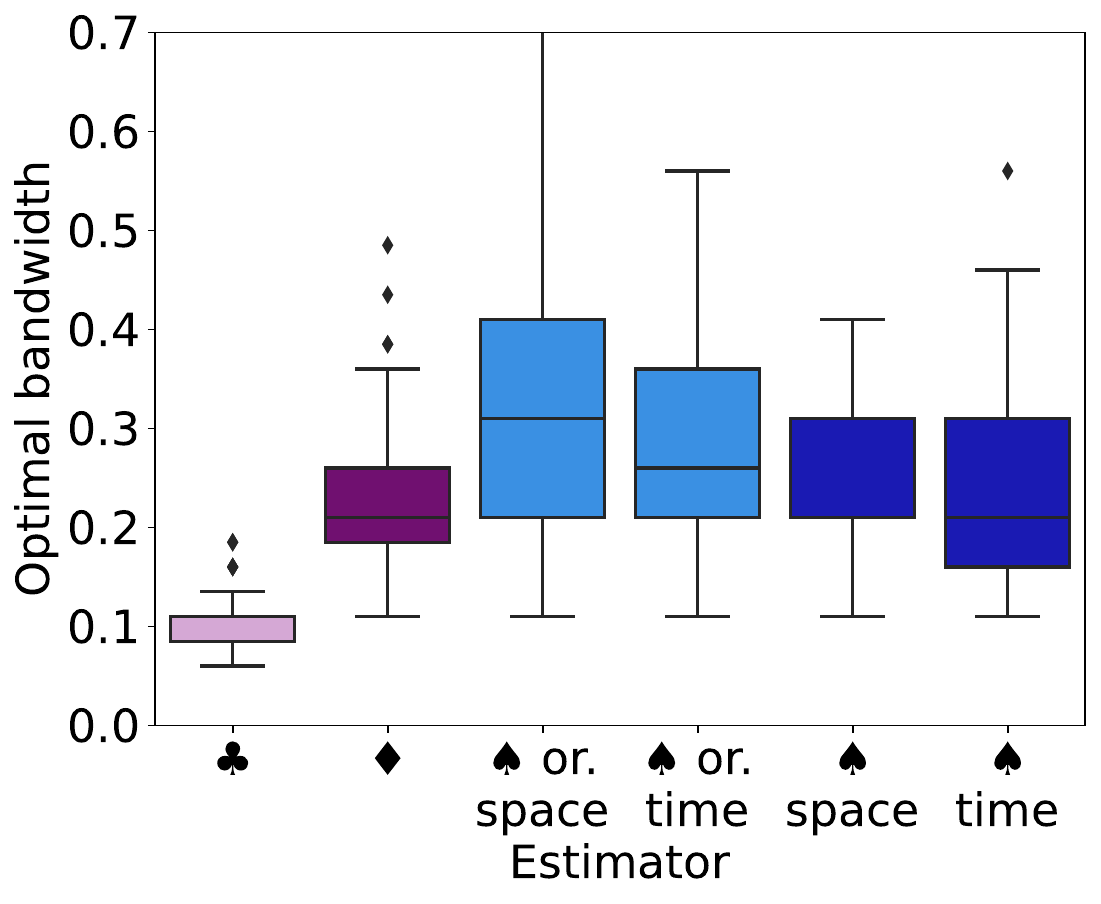}
    \end{tabular}
    \caption{Boxplots over $100$ replicates of smoothing bandwidths minimizing the integrated square error for the four estimators $\lambdak{n}$ in pink, $\lambdaks{n}$ in purple, $\lambdaamgo{n}$ in light blue and $\lambdaamg{n}$ in dark blue evaluated from trajectories of size $n=1\,000$ (left) and of size $n=10\,000$ (right) generated from the TCP model with parameter $\kappa=0.4$.}
    \label{fig:simu:bandwidth}
\end{figure}

The optimal bandwidths vary significantly depending on the estimator used, with a noticeable difference even between estimators $\lambdaamgo{n}$ and $\lambdaamg{n}$, though this difference diminishes with datasets of $10\,000$ observations. Furthermore, as expected, the bandwidths decrease as the number of data points increases. The variability is highest for estimators $\lambdaamgo{n}$ and $\lambdaamg{n}$, where smoothing is applied both to the numerator and the denominator, in both space and time. The variability of the bandwidths for all four estimators also decreases as the dataset size grows. To obtain smoothing parameters that depend only on the type of estimator and the number of available data points (and not on the data themselves), the medians of these empirical distributions are selected. Figure~\ref{fig:simu:examples} illustrates the behavior of the four estimators evaluated with these optimal bandwidths from a single trajectory.

\begin{figure}[ht]
    \centering
    \begin{tabular}{ccc}
    $n=1\,000$ & \hspace{0.25cm} &$n=10\,000$\\
    \includegraphics[height=4.4cm]{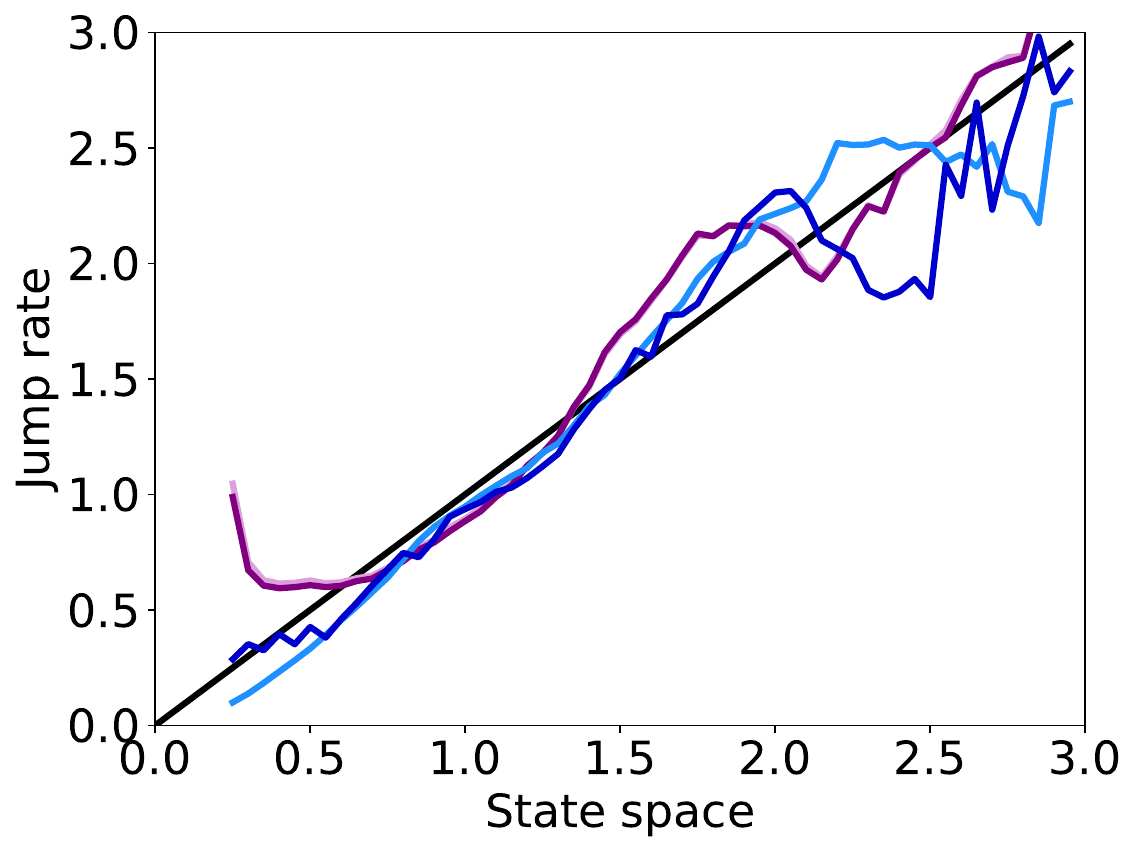}&&
    \includegraphics[height=4.4cm]{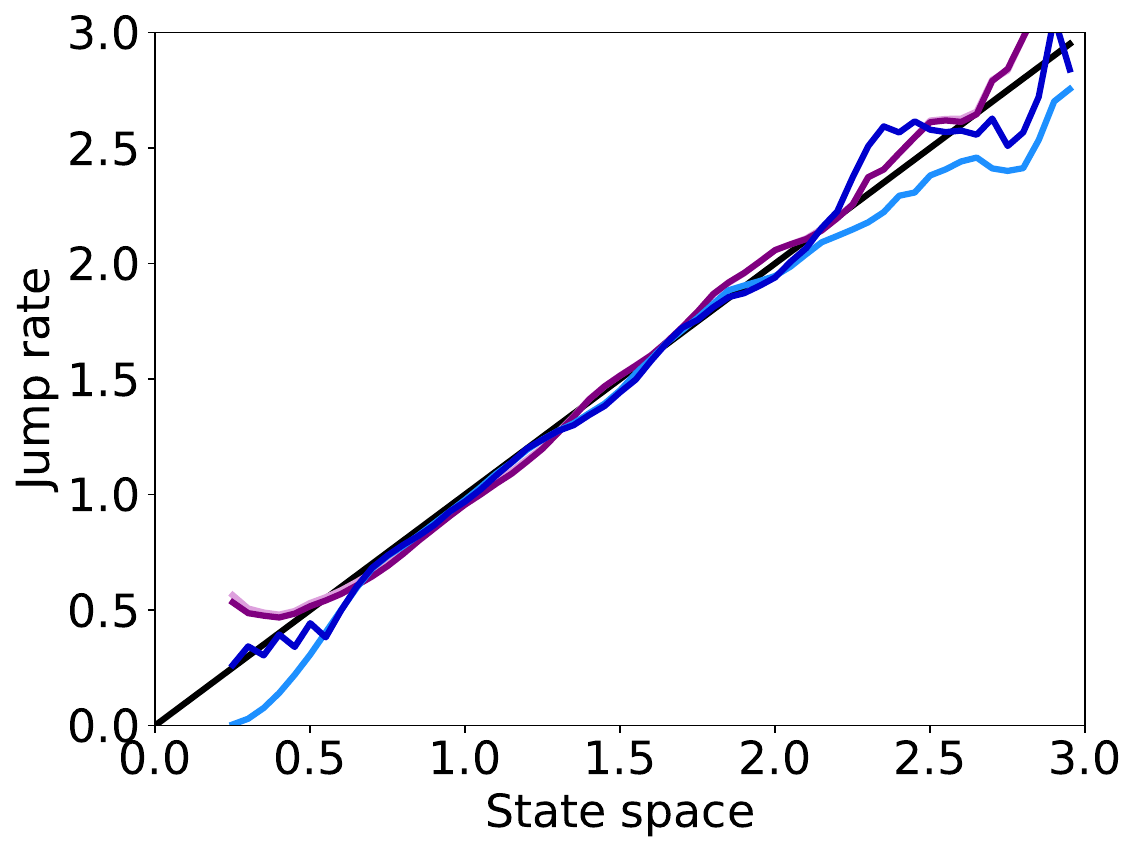}
    \end{tabular}
    \caption{Estimates $\lambdak{n}(x)$ in pink, $\lambdaks{n}$ in purple, $\lambdaamgo{n}(x)$ in light blue and $\lambdaamg{n}(x)$ in dark blue computed with the optimal bandwidths (medians of boxplots of Figure~\ref{fig:simu:bandwidth}) from a single trajectory of size $n=1\,000$ (left) and $n=10\,000$ (right) generated from the TCP model with $\kappa=0.4$. The referenced jump rate is given in black line.}
    \label{fig:simu:examples}
\end{figure}


\subsection{Asymptotic variances vs. estimation errors}
\label{ss:tcp:var}

The task now is to compare the asymptotic variances of the estimators while taking into account the rate of convergence in the central limit theorem. It should be noted that this can not be done for estimator $\lambdaamg{n}$, for which asymptotic normality has not been theoretically established. According to Theorems~\ref{theo:lambdaamg} and \ref{main_tcl}, and as mentioned in Remark~\ref{rem:comp:var:k:ks:amg}, the pointwise variance in the estimation of $\lambda(x)$ that we expect to observe is normalized by the rate of convergence and thus of the order $\sigmak{x}/(n\hk{n})$ for $\lambdak{n}(x)$, $\sigmaks{x}/(n\hks{n})$ for $\lambdaks{n}(x)$ and $\sigmaamg{x}/(n\hamgs{n}\hamgt{n})$, where $\hk{n}$, $\hks{n}$, $\hamgs{n}$ and $\hamgt{n}$ are the optimal bandwidths selected above (as medians of boxplots of Figure~\ref{fig:simu:bandwidth}). Figure~\ref{fig:simu:3tcl} illustrates the asymptotic normality at $x=2$.

\begin{figure}[ht]
    \centering
    \begin{tabular}{ccc}
    \multicolumn{3}{c}{$n=1\,000$} \\
    \includegraphics[height=3.3cm]{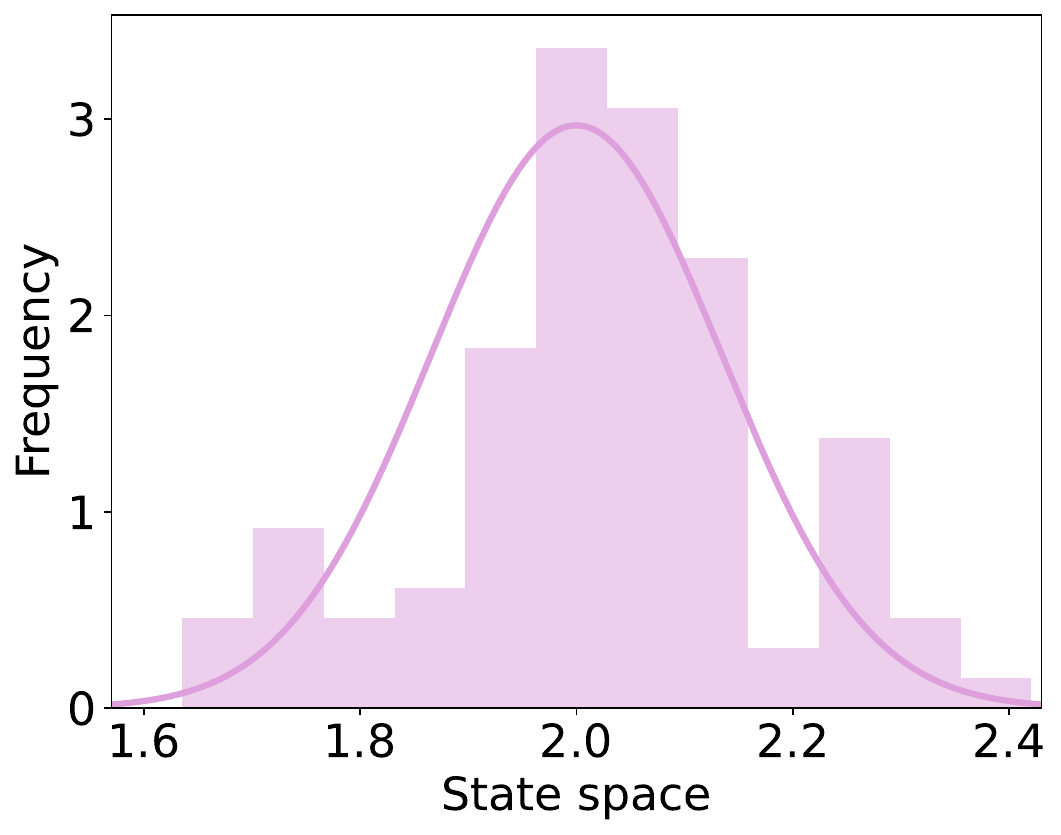}&
    \includegraphics[height=3.3cm]{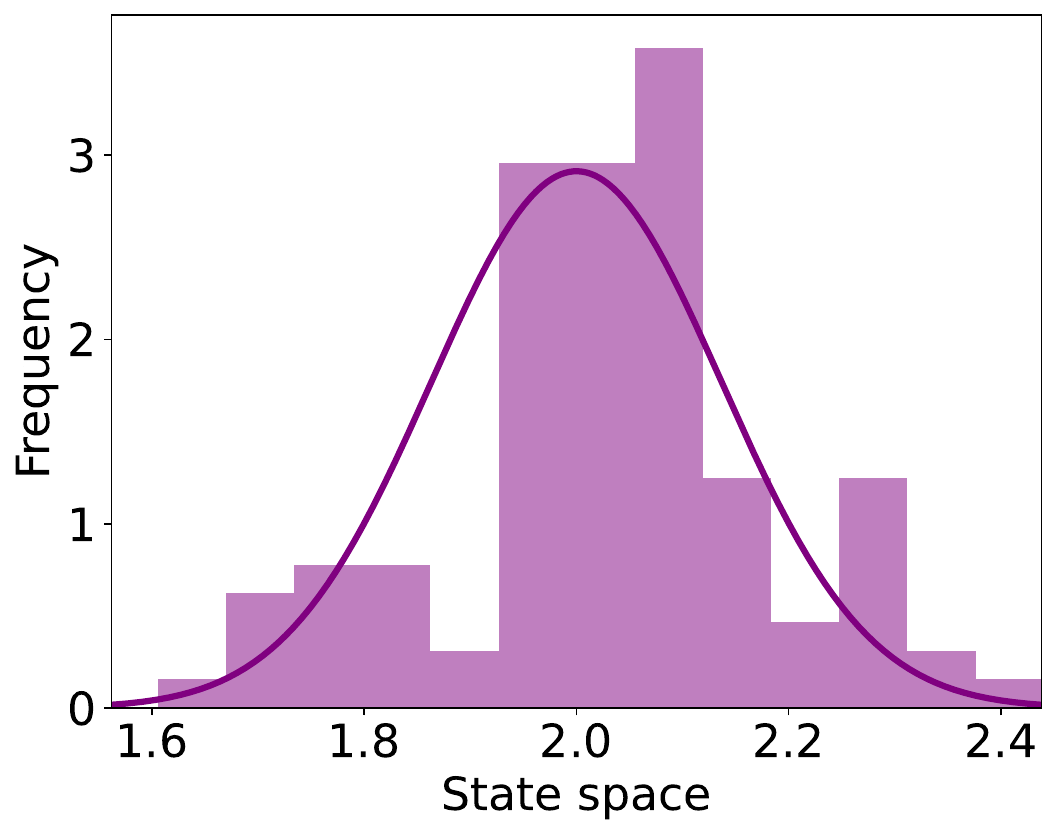}&
    \includegraphics[height=3.3cm]{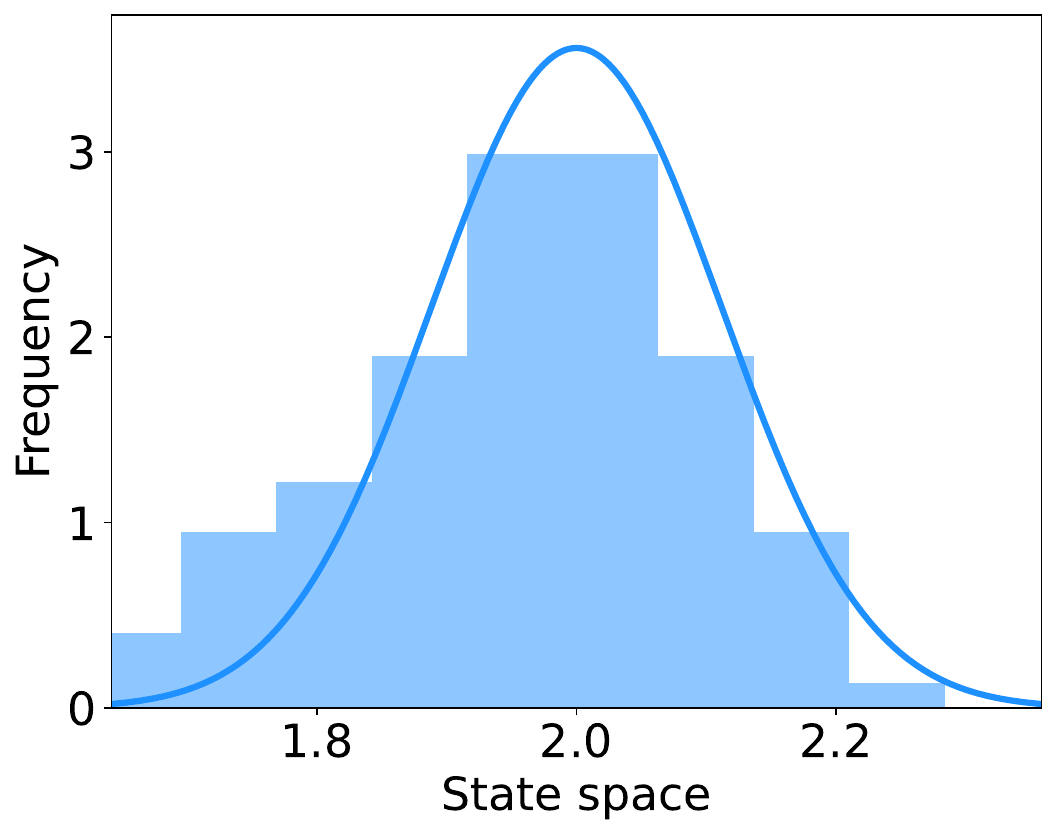}\\
    $\clubsuit$&$\vardiamondsuit$&$\spadesuit$ (oracle)\vspace{0.5cm}\\
    
    \multicolumn{3}{c}{$n=10\,000$} \\
    \includegraphics[height=3.3cm]{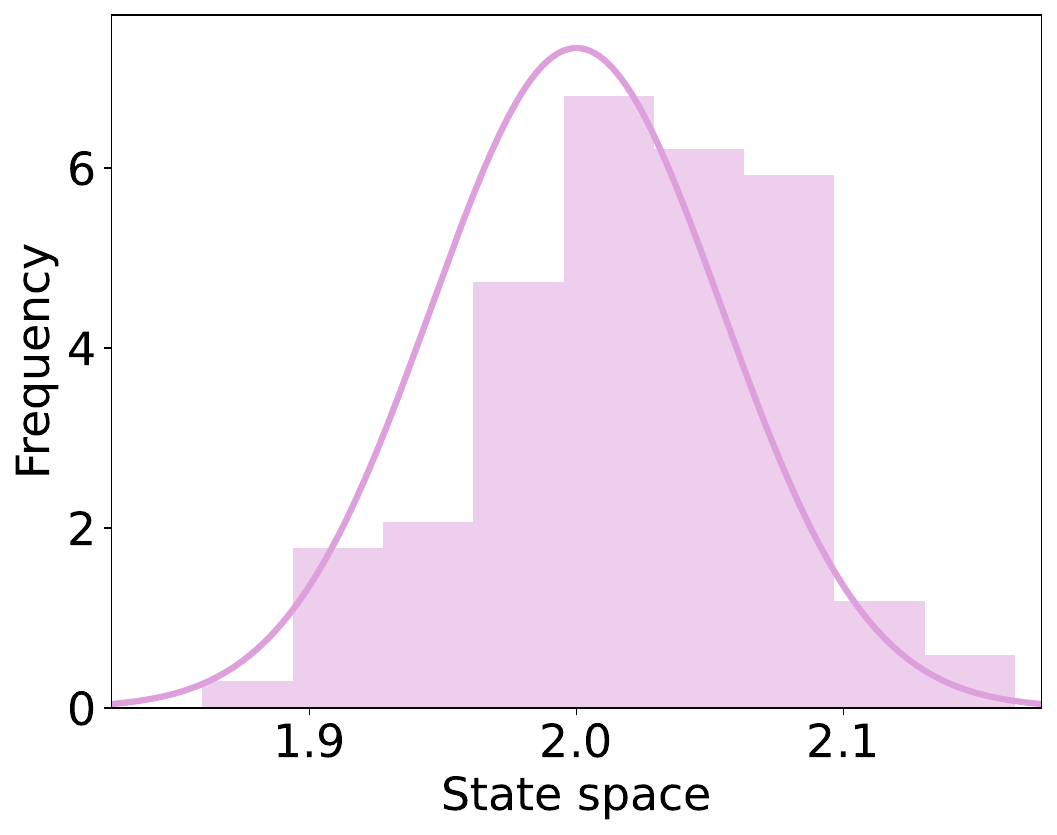}&
    \includegraphics[height=3.3cm]{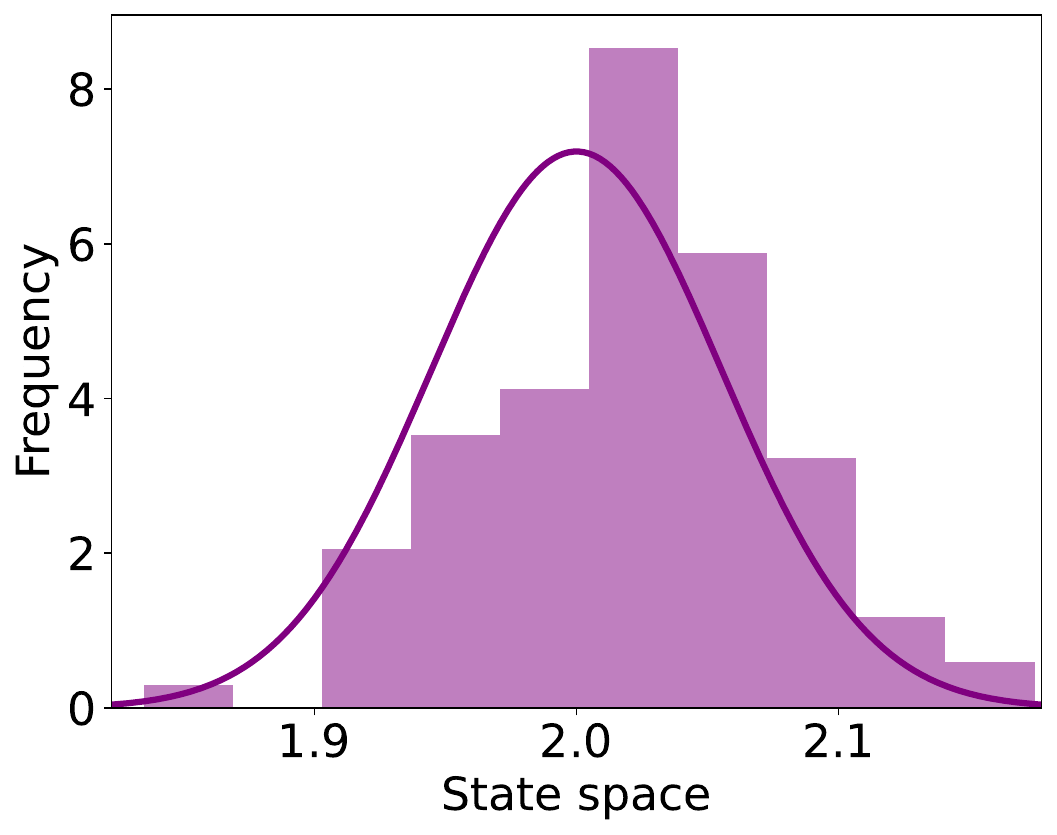}&
    \includegraphics[height=3.3cm]{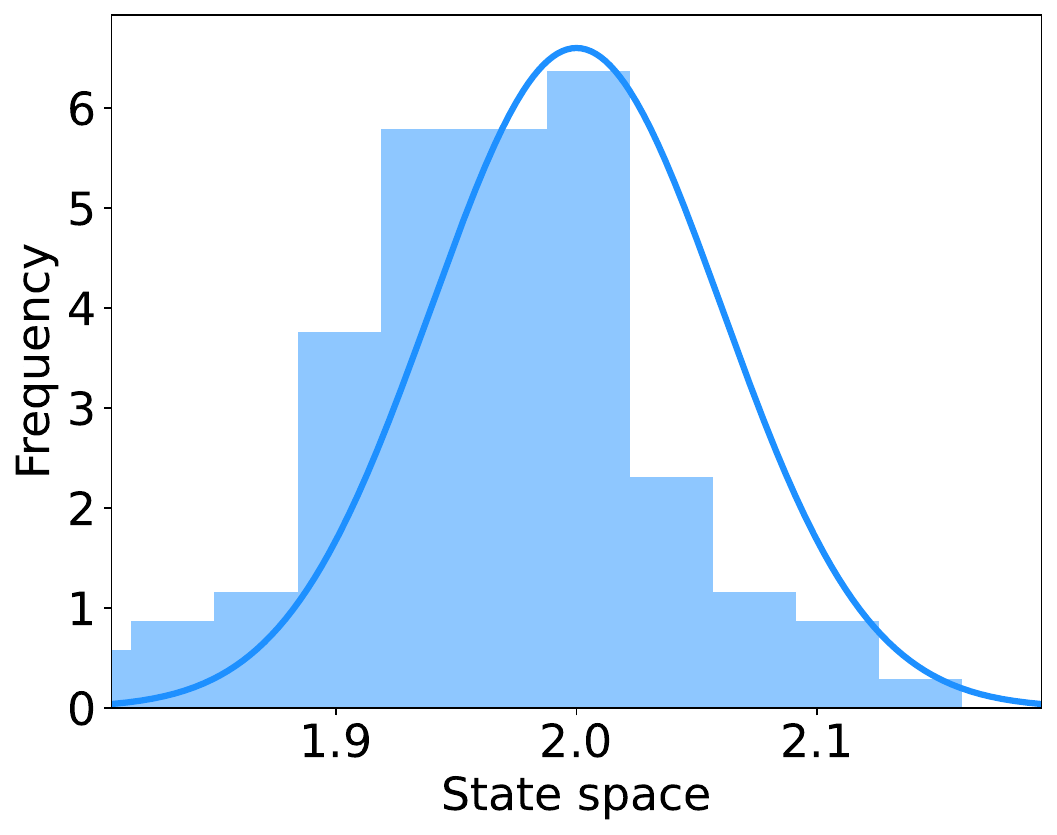} \\
    $\clubsuit$&$\vardiamondsuit$&$\spadesuit$ (oracle)
    \end{tabular}
    \caption{Distribution over $100$ replicates of $\lambdak{n}(2)$ in pink (left), $\lambdaks{n}(2)$ in purple (center) and $\lambdaamgo{n}(2)$ in light blue (right) evaluated with optimal bandwidth parameters $\hk{n}$, $\hks{n}$, $\hamgs{n}$ and $\hamgt{n}$, from trajectories of size $n=1\,000$ (top) and of size $n=10\,000$ (bottom) generated from the TCP model with parameter $\kappa=0.4$, and referenced Gaussian distributions with mean $\lambda(2)=2$ and variance $\sigmak{x}/(n\hk{n})$ in pink line (left), with variance $\sigmaks{x}/(n\hks{n})$ in purple line (center), and with variance $\sigmaamg{x}/(n\hamgs{n}\hamgt{n})$ in light blue line (right).}
    \label{fig:simu:3tcl}
\end{figure}

The empirical distributions shown in Figure~\ref{fig:simu:3tcl} align reasonably well with the expected theoretical distributions, particularly for trajectories of size $10\,000$. It should be noted that this analysis falls outside the scope of the theorems, as the bandwidths were chosen numerically and do not follow the theoretically prescribed decay scheme. A slight bias is observable when $n=10\,000$, which highlights (as is well-known) that asymptotic variance alone is insufficient to fully characterize the desirable properties of an estimator. Figure~\ref{fig:simu:norm:sdclt} presents the three normalized variances across the state space.

\begin{figure}[ht]
    \centering
    \begin{tabular}{ccc}
    $n=1\,000$ & \hspace{0.25cm} &$n=10\,000$\\
    \includegraphics[height=4.4cm]{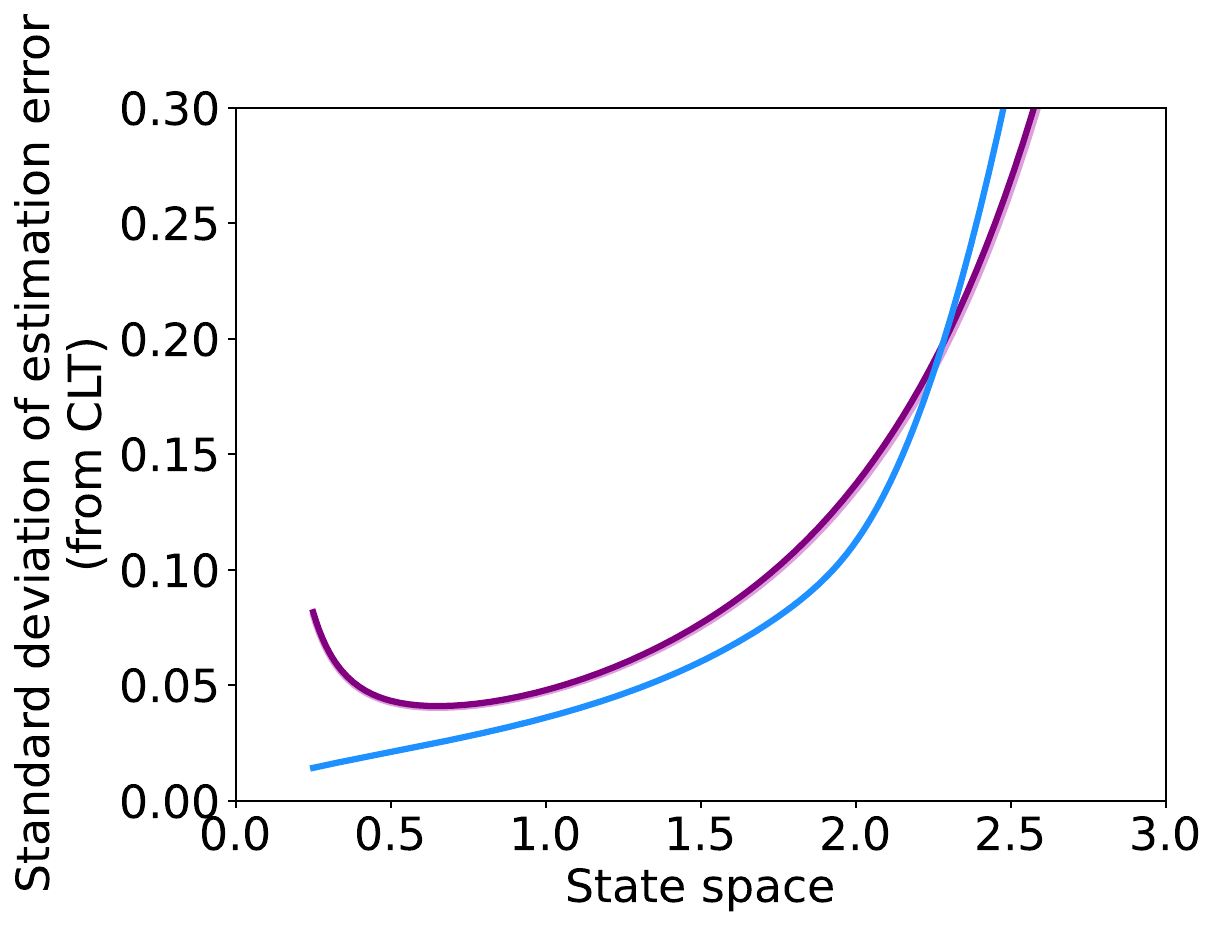}&&
    \includegraphics[height=4.4cm]{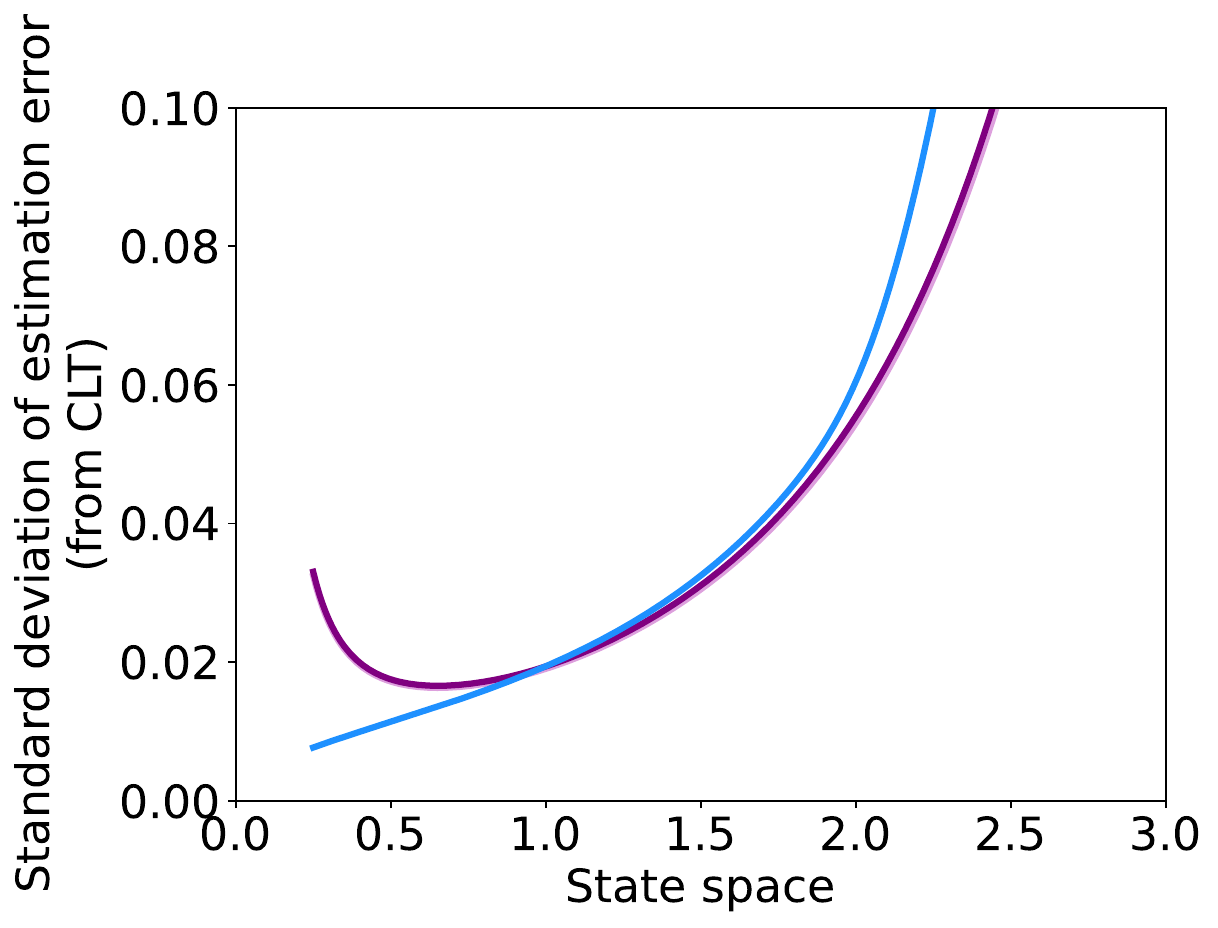}
    \end{tabular}
    \caption{Normalized standard deviations $\sqrt{\sigmak{x}/(n\hk{n})}$ in pink, $\sqrt{\sigmaks{x}/(n\hks{n})}$ in purple and $\sqrt{\sigmaamg{x}/(n\hamgs{n}\hamgt{n})}$ in light blue (where $\hk{n}$, $\hks{n}$, $\hamgs{n}$ and $\hamgt{n}$ are the bandwidths selected as medians of boxplots of Figure~\ref{fig:simu:bandwidth}) in the TCP model with parameter $\kappa=0.4$, and with $n=1\,000$ (left) and $n=10\,000$ (right).}
    \label{fig:simu:norm:sdclt}
\end{figure}

The normalized variances in Figure~\ref{fig:simu:norm:sdclt}, which approximate the variance of the estimation error for a finite data set, show that on the criterion of variance, the quality of estimation of the two estimators $\lambdak{n}$ and $\lambdaks{n}$ is very similar (since the curves almost coincide), as was already apparent in the example in Figure~\ref{fig:simu:examples}. Moreover, despite having a slower rate of convergence, the normalized variance of estimator $\lambdaamgo{n}$ remains better than the other two for certain regions of the state space: specifically, before approximately $x=2.25$ for $n=1\,000$ and before $x=1$ for $n=10\,000$.

These results suggest that oracle estimator $\lambdaamgo{n}$, and consequently estimator $\lambdaamg{n}$ (assuming their behaviors are expected to be similar as explained in Remark~\ref{rem:lambdaamgo:vs:lambdaamg}), should perform better than the other two on the left side of the state space, and conversely on the right, with a boundary around $x=2.25$ for 
$n=1\,000$ and $x=1$ for $n=10\,000$. However, this prediction does not account for the estimators' bias, the potential differences in behavior between estimators $\lambdaamgo{n}$ and $\lambdaamg{n}$, or the gap between theoretical results and the setup of numerical simulations. We now aim to assess the accuracy of this prediction. To this end, we evaluated over $100$ replicates the pointwise estimation error $|\widehat{\lambda}(x)-\lambda(x)|$ across the state space for the four estimators under consideration. Results are given in Figure~\ref{fig:simu:estim:errors}.

\begin{figure}[ht]
    \centering
    $n = 1\,000$\\
    \includegraphics[height=4.4cm]{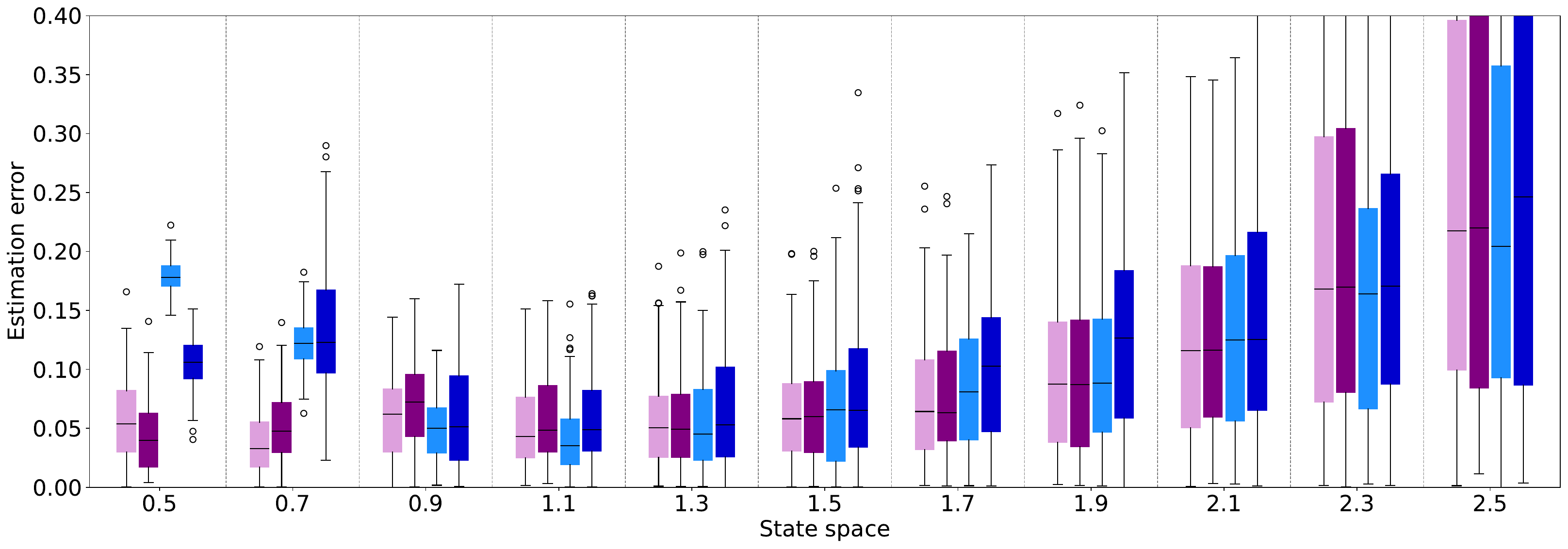}\vspace{0.5cm}\\
     $n = 10\,000$ \\
    \includegraphics[height=4.4cm]{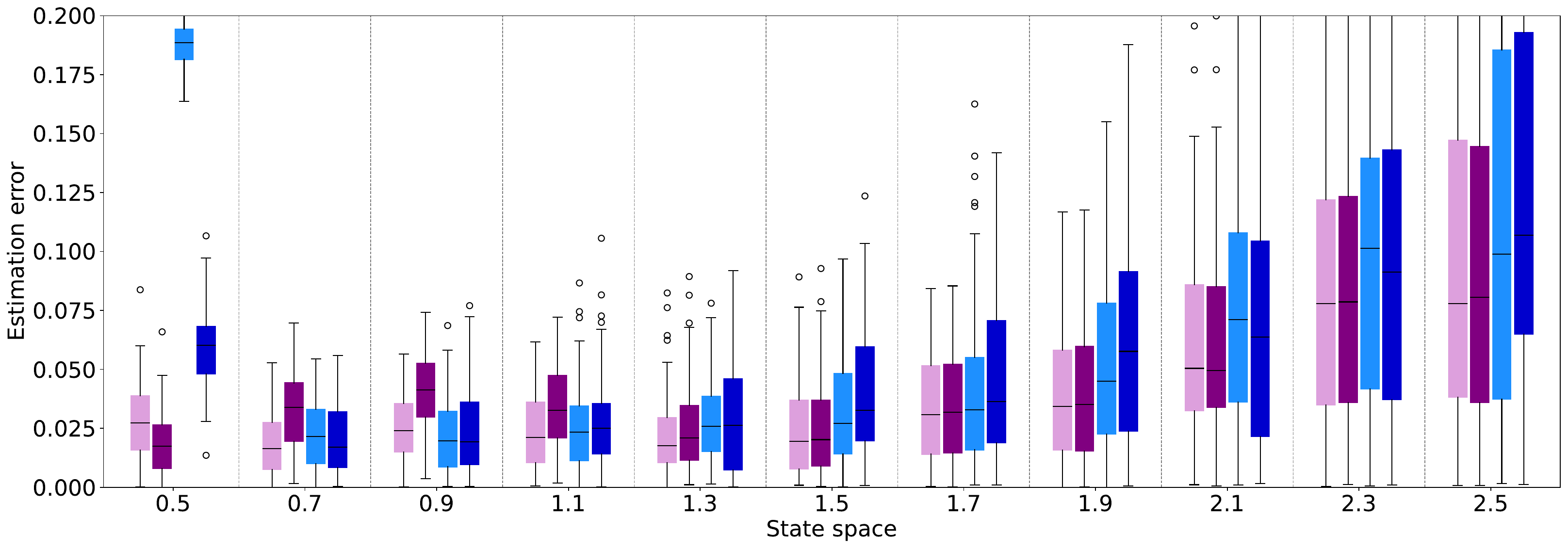}
    \caption{Pointwise estimation error over $100$ replicates for the four estimators $\lambdak{n}(x)$ in pink, $\lambdaks{n}(x)$ in purple, $\lambdaamgo{n}(x)$ in light blue and $\lambdaamg{n}(x)$ in dark blue, from left to right for each value of $x$ (with $x$ between $0.5$ and $2.5$ with a step of $0.2$), computed from a trajectory of size $n=1\,000$ (top) and of size $n=10\,000$ (bottom).}
    \label{fig:simu:estim:errors}
\end{figure}

The general pattern of variability aligns fairly well with the theoretical predictions from Figure~\ref{fig:simu:norm:sdclt}. However, in addition to exhibiting increasing variance, all four estimators appear to suffer from a bias toward the right side of the state space ($x>2$), regardless of the sample size. On the left side of the state space ($x<1$), only estimators $\lambdaamgo{n}$ and $\lambdaamg{n}$ experience significant bias, along with greatly reduced variance (as predicted by theory). In the central and right parts of the state space ($x>1$), all four estimators display very similar behavior, with estimator $\lambdak{n}$ having a nearly uniform advantage.

Before delving into details, we exclude estimator $\lambdak{n}$, which uses the form of the transition kernel, and the oracle estimator $\lambdaamgo{n}$, whose argument selection relies on the invariant law of the process. This leaves us comparing estimators $\lambdaks{n}$ and $\lambdaamg{n}$, which are based on exactly the same data. In this comparison, aside from the bias suffered by $\lambdaamg{n}$ on the left side of the state space, the theoretical predictions are generally supported by the numerical simulations, especially from trajectories of size $n=10\,000$: for $0.7\leq x\leq 1.1$, $\lambdaamg{n}(x)$ performs better than $\lambdaks{n}(x)$, while for $x\geq1.3$, the situation reverses.

While the previously mentioned limitations of applying theoretical results in this context remain valid (and are particularly evident when comparing the four estimators in detail), they nonetheless allow for a reasonably reliable comparison of the two main estimators studied. Moreover, the numerical analysis confirms, using the TCP model as an example, that none of the methods is uniformly better across the state space.

\subsection{Adaptive estimators}\label{ss:adaptive:simus}

The estimator constructed in \cite{KS21} relies on an adaptive projection estimator of the invariant distribution $\mu^-$. Although the theoretical framework of the present paper covers neither projection-based estimation nor adaptive methods, it is nonetheless of interest to investigate numerically whether the results previously obtained for kernel-based methods carry over to this setting. To this end, we propose to compare estimators $\lambdak{n}$ and $\lambdaks{n}$ when the invariant distributions ($\mu$ for $\lambdak{n}$ and $\mu^-$ for $\lambdaks{n}$) appearing in the numerator are estimated adaptively via projection. Estimators $\lambdaamgo{n}$ and $\lambdaamg{n}$ are excluded from this comparison, as implementing such an approach for them is substantially more technical and falls outside the scope of this paper. Indeed, both the numerator and the denominator would require an adaptive projection procedure, and the numerator would additionally involve a two-dimensional estimation problem.

We present below how to estimate by an adaptive projection method the invariant distribution $\mu$ appearing at the numerator of formula \eqref{eq:lambda:intro:k}. The estimator is evaluated from the first $n$ post-jump locations and is used to define an adaptive projection version of estimator $\lambdak{n}$. To this end, we follow the approach given in \cite{chagny:hal-02132884} to estimate a density function. The procedure is the same to estimate, from the first $n$ pre-jump locations, the distribution $\mu^-$ at the numerator of \eqref{eq:lambda:intro}, yielding an adaptive projection version of $\lambdaks{n}$.

Here, $\mu$ is assumed to be in $\LL^2_{[a,b]}$, where $a=0.05$ and $b=3$. As a consequence, one can decompose $\mu$ on the trigonometric basis,
$$\mu = \sum_{m\geq0}\alpha^{(m)}\varphi^{(m)},$$
making amenable to estimate $\mu$ as
$$\widehat{\mu}_n^{(M)} = \sum_{m=0}^M \widehat{\alpha}_n^{(m)}\varphi^{(m)},$$
with
$$\widehat{\alpha}_n^{(m)} = \frac{1}{n}\sum_{i=0}^{n-1}\varphi^{(m)}(Z_i).$$
The dimension of the model is then chosen by minimizing the penalized contrast,
$$M^\star_n = \argmin_{0\leq M\leq \overline{M}} \left\|\widehat{\mu}_n^{(M)}\right\|_{\LL^2_{[a,b]}}^2 - \frac{2}{n} \sum_{i=0}^{n-1} \widehat{\mu}_n^{(M)}(Z_i) + c\frac{M+1}{n},$$
with $\overline{M}=25$ and $c=1$. Figure~\ref{fig:simu:estim:errors:adaptive} presents the distribution of the pointwise error for adaptive projection versions of estimators $\lambdak{n}$ and $\lambdaks{n}$.

We first observe that the estimation error as a function of the state space follows the same general trend already highlighted for non-adaptive kernel estimators in Figure~\ref{fig:simu:estim:errors}. In particular, this trend can be seen from a theoretical perspective in Figure~\ref{fig:simu:sdclt} ($\kappa=0.4$), prior to normalization by the rate $\sqrt{nh_n}$ which is not relevant in adaptive projection estimation. In conclusion, the pattern does not appear to depend on the estimation technique used, at least in this example. A close comparison of Figures~\ref{fig:simu:estim:errors} and \ref{fig:simu:estim:errors:adaptive} shows that the order of magnitude of the pointwise error is the same for kernel estimation and for adaptive projection estimation. However, the simulations reveal that the behavior of the adaptive version of $\lambdak{n}$ is significantly better than that of the adaptive version of $\lambdaks{n}$, whereas their non-adaptive versions are very similar, both in the simulations and in the normalized theoretical variances (shown in Figure~\ref{fig:simu:norm:sdclt}).

\begin{figure}[ht]
    \centering
    $n = 1\,000$\\
    \includegraphics[height=4.4cm]{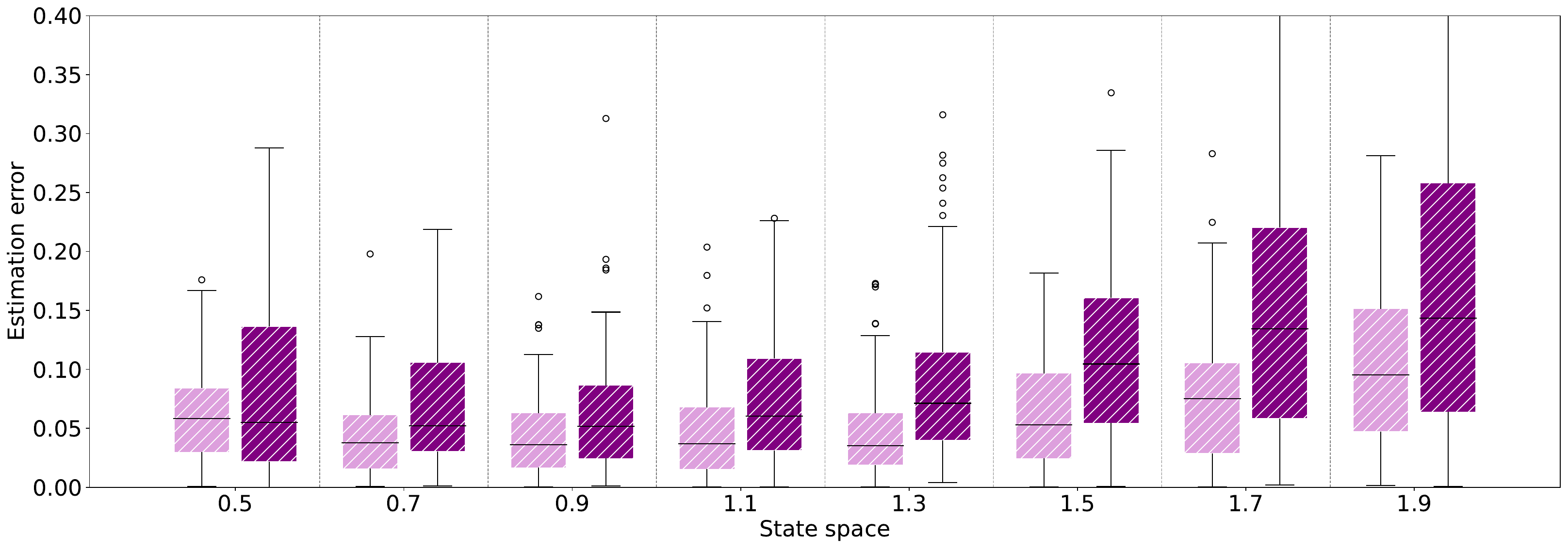}\vspace{0.5cm}\\
     $n = 10\,000$ \\
    \includegraphics[height=4.4cm]{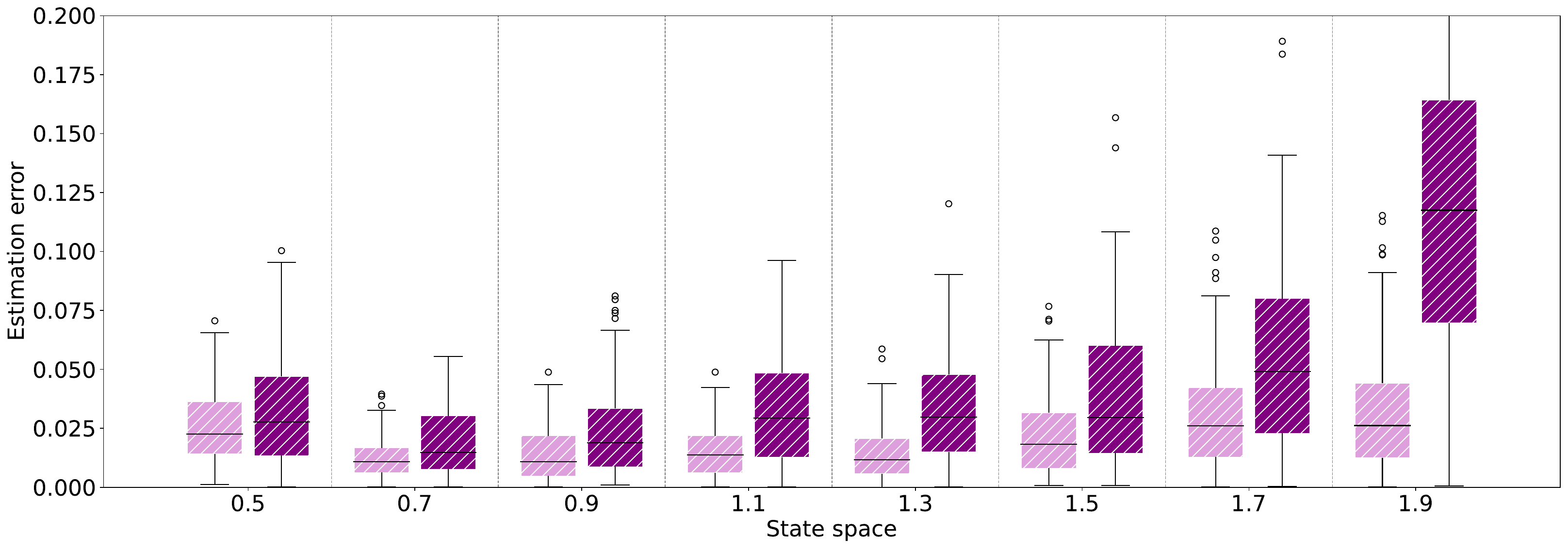}
    \caption{Pointwise estimation error over $100$ replicates for the adaptive projection versions of $\lambdak{n}(x)$ in pink and of $\lambdaks{n}(x)$ in purple, from left to right for each value of $x$ (with $x$ between $0.5$ and $1.9$ with a step of $0.2$), computed from a trajectory of size $n=1\,000$ (top) and of size $n=10\,000$ (bottom).}
    \label{fig:simu:estim:errors:adaptive}
\end{figure}

\section{Real data analysis}
\label{s:data}

\subsection{Context}
\label{ss:data:context}

The life cycle of a cell alternates phases of growth and division. This is a typical application of piecewise-deterministic models in dimension 1 (see for instance \cite{CDGMMY17,DHKR15,K24} and references therein): growth is considered exponential and divisions, assumed to be quasi-instantaneous, occur at random times. In addition, the division mechanism appears to be linked to cell size \cite{RHKARD14}, making the idea of a jump rate as a function of cell size relevant. In this context, the process under consideration models the size of a cell (and its progeny) over time: its flow is of the form $(x,t)\mapsto x\exp(\theta t)$, $\theta>0$, and its transition kernel can be written as $\delta_{\{x/2\}}$ (or more generally as any distribution with support $[0,x]$ to take into account both the intrinsic variability and an eventual bias in the division process). The jump rate, which is very difficult to parameterize in this kind of application, is the typical function of interest.

In this section, we propose to use single-cell data from \textit{Escherichia coli} \cite{TPPHBY17} to implement and compare, within the framework of the probabilistic model just described, the jump rate estimation strategies studied in this paper. The data in question are measurements of cell size, obtained by microscopy, under different temperature conditions (25°C, 27°C, or 37°C). For each, multiple independent data sets from different mother cells are available.

In Subsection~\ref{ss:data:37}, we describe the available data and explain how a piecewise-deterministic model is fitted to them. Summary statistics are provided in the main document only for the 37°C temperature condition; Appendix~\ref{app:fig} gathers corresponding figures for the other temperatures. The results of the jump rate estimation for the three temperature conditions are presented in Subsection~\ref{ss:data:jr}, with a numerical validation of the model developed in Subsection~\ref{ss:data:valid}.

\subsection{Data description and model fitting}
\label{ss:data:37}

Whatever the temperature condition, the data of interest are organized in different files corresponding to independent cell lineages. For each, cell size is measured every minute. In addition, a division indicator is available which, in a piecewise-deterministic model, precisely indicates the jump times. Consequently, the observation scheme chosen in the paper is exactly that of these data. Table~\ref{tab:data} provides information on the sample sizes available, while Figure~\ref{fig:data:37} shows some of the process statistics under a temperature condition of 37°C (see Figure~\ref{fig:data:25:27} in Appendix~\ref{app:fig} for temperatures of 25°C and 27°C). We can already see that the behavior of the process appears to be strongly temperature-dependent.

\begin{table}[ht]
    \centering
    \begin{tabular}{c|lccc}
     Temperature &       & Lineages & Measurements & Divisions \\ \hline 
    \multirow{ 2}{*}{25°C} & Total: & 65 & 307\,999 & 4\,485\\
    & Average (per lineage): & --- & 4738.44 & 68.67\\ \hline 
    \multirow{ 2}{*}{27°C} & Total: & 54 & 202\,086 & 3\,726\\
    & Average (per lineage): & --- & 3742.33 & 69 \\ \hline 
    \multirow{ 2}{*}{37°C} & Total: & 160 & 364\,920 & 11\,040\\
    & Average (per lineage): & --- & 2280.75 & 69
    \end{tabular}
    \caption{Summary of growth data under each of the temperature conditions.}
    \label{tab:data}
\end{table}

\begin{figure}[ht]
    \centering
    \begin{minipage}{0.6\textwidth}
        \centering
        \includegraphics[height=4.4cm]{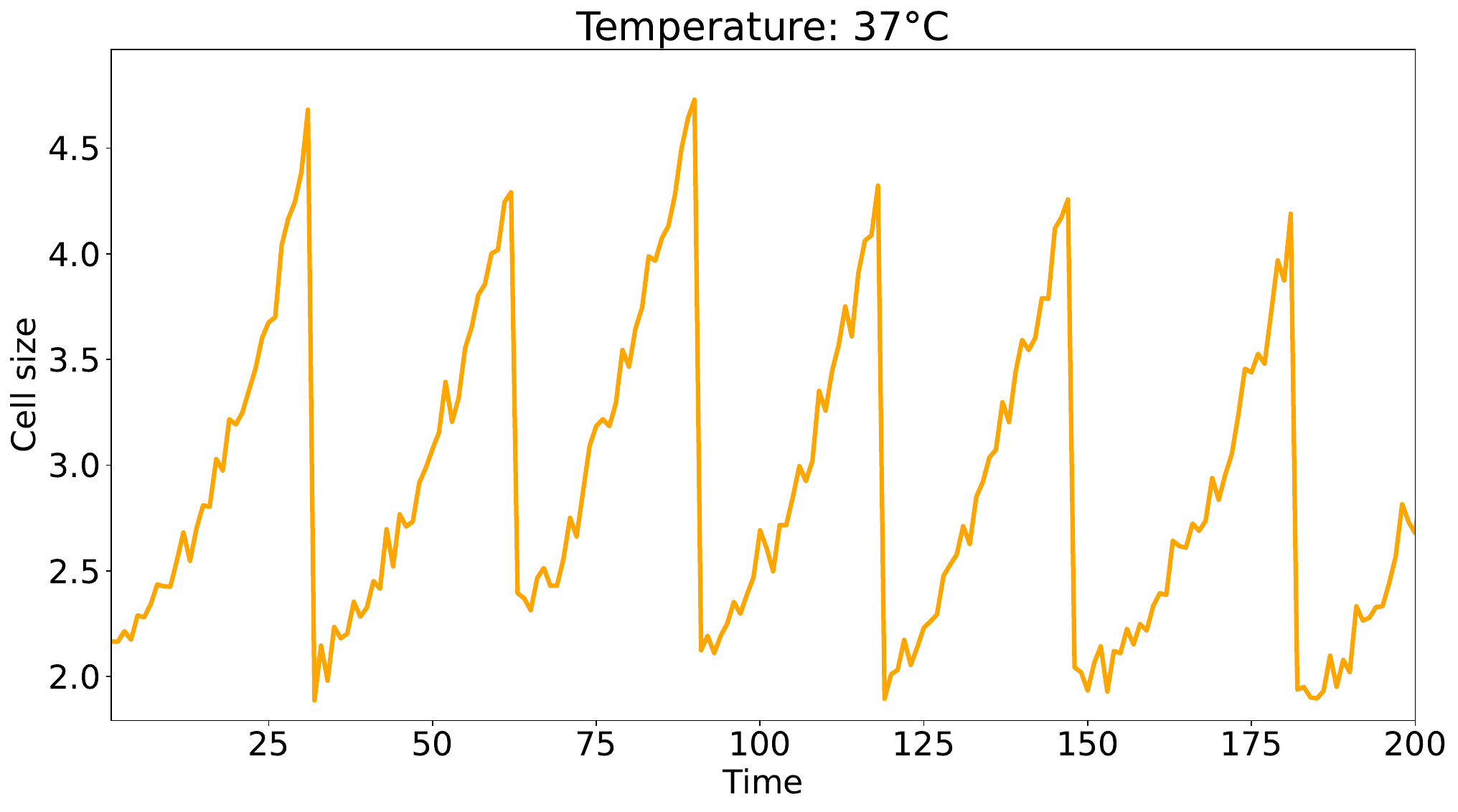}
    \end{minipage}%
    \begin{minipage}{0.35\textwidth}
        \centering
        \includegraphics[height=2.2cm]{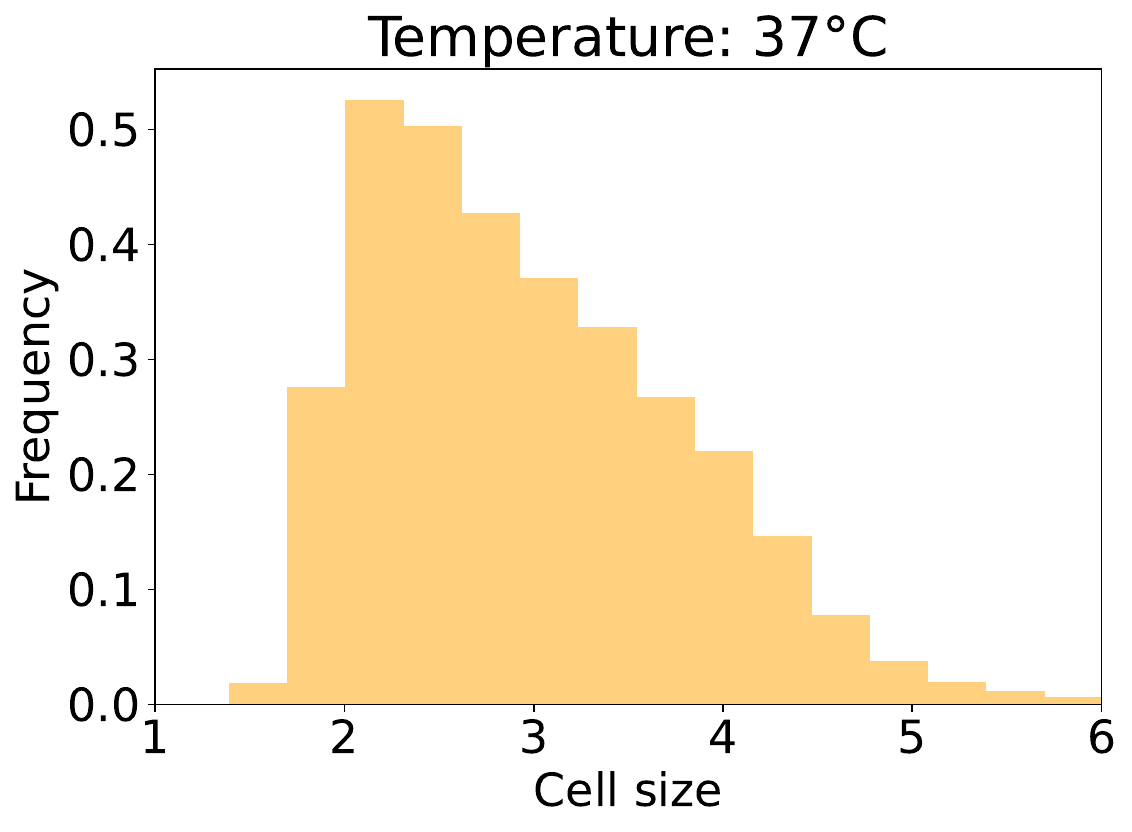} \\
        \includegraphics[height=2.2cm]{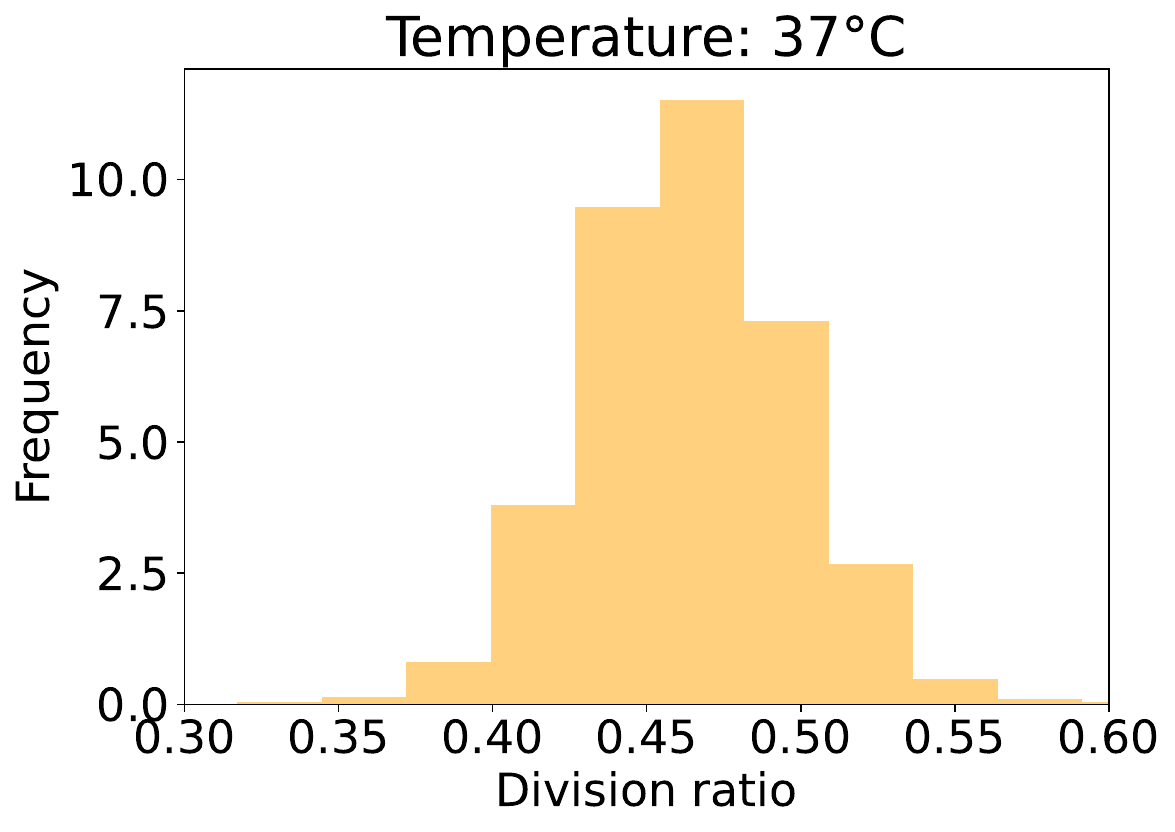}
    \end{minipage}
    
    \caption{Under a temperature condition of 37°C, cell size measurements (unit of size is micrometer) until time 200 (unit of time is minute) from lineage \texttt{xy01\_02} (left), distribution of cell size from $10\,000$ consecutive measurements (mixing different lineages) (top right), and distribution of division ratio (bottom right).}
    \label{fig:data:37}
\end{figure}

We now turn our attention to piecewise-deterministic Markov process modeling in order to apply the statistical procedures studied in this article. Looking at the empirical law of division ratios in Figures~\ref{fig:data:37} and \ref{fig:data:25:27}, we are convinced that, whatever the temperature experiment, the transition kernel can not be modeled by a Dirac mass, which rules out $\lambdak{n}$ that assumes deterministic fragmentations. That leaves us with the question of how to model the flow. For this, we assume exponential cell growth mentioned above, and model the logarithm of cell size by a linear function, i.e. $\Phi(t|x) = x+\theta t$. To obtain a one-dimensional model, $\theta$ must not depend on any quantity (except perhaps the temperature condition which is expected to play a significant role), especially on the cell in question. For each of the temperature condition, for each of the thousands of cells along the hundreds of lineages measured, we fit a linear model to their growth. Some results for a temperature of 37°C are given in Figure~\ref{fig:model:37} (see Figure~\ref{fig:model:25:27} in Appendix~\ref{app:fig} for temperatures of 25°C and 27°C). We observe that the histogram of estimated slopes depends on the temperature but is always unimodal with very low variance. We therefore accept the constant (but temperature-dependent) slope hypothesis and take for $\theta$ its mean value: $\theta = 0.012$ at 25°C, $\theta = 0.014$ at 27°C, and $\theta=0.025$ at 37°C. Figures~\ref{fig:model:37} and \ref{fig:model:25:27} also illustrate the quality of the piece\-wise-li\-near fit to the data. Interestingly, other types of models and experimental settings may also be considered, in particular those that track the entire cell genealogy, as in \cite{delyon:hal-01201923}, where the growth rate is not assumed to be constant but instead follows a Markov chain itself.

\begin{figure}[ht]
    \centering
    \begin{minipage}{0.55\textwidth}
        \centering
        \includegraphics[height=4.4cm]{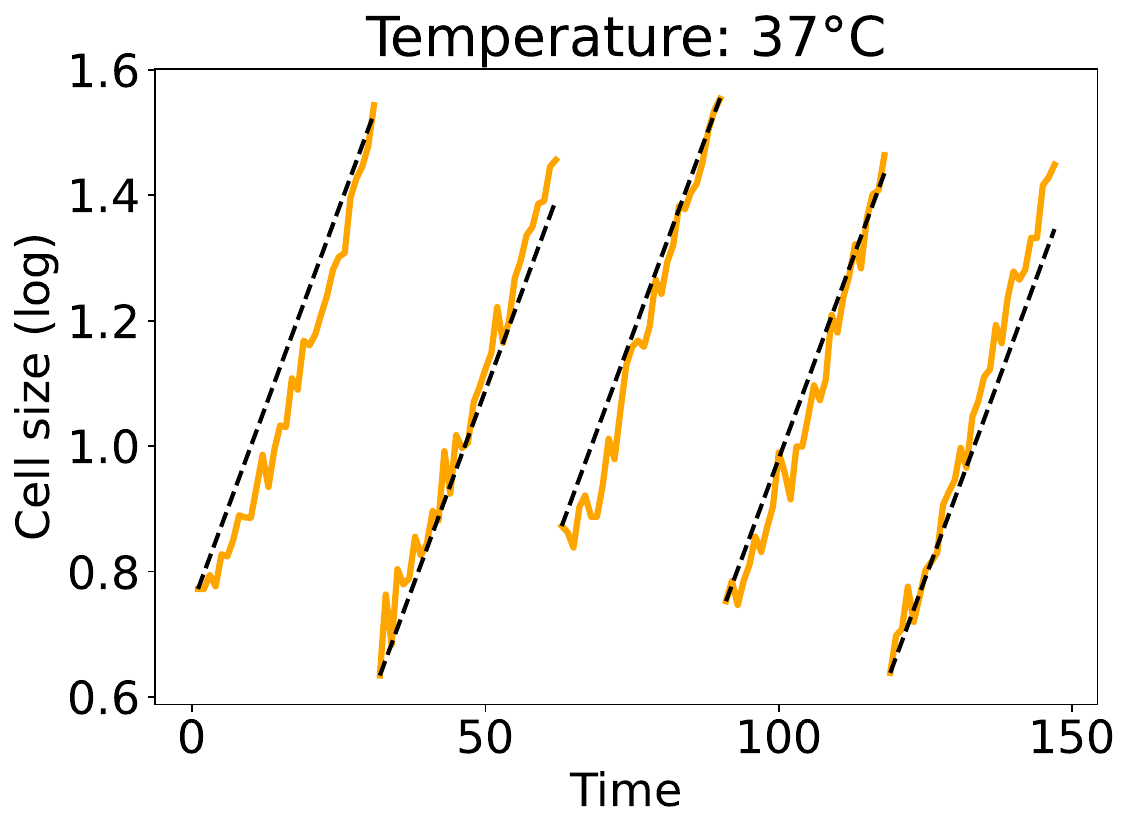}
    \end{minipage}%
    \begin{minipage}{0.2\textwidth}
        \centering
        \includegraphics[height=2cm]{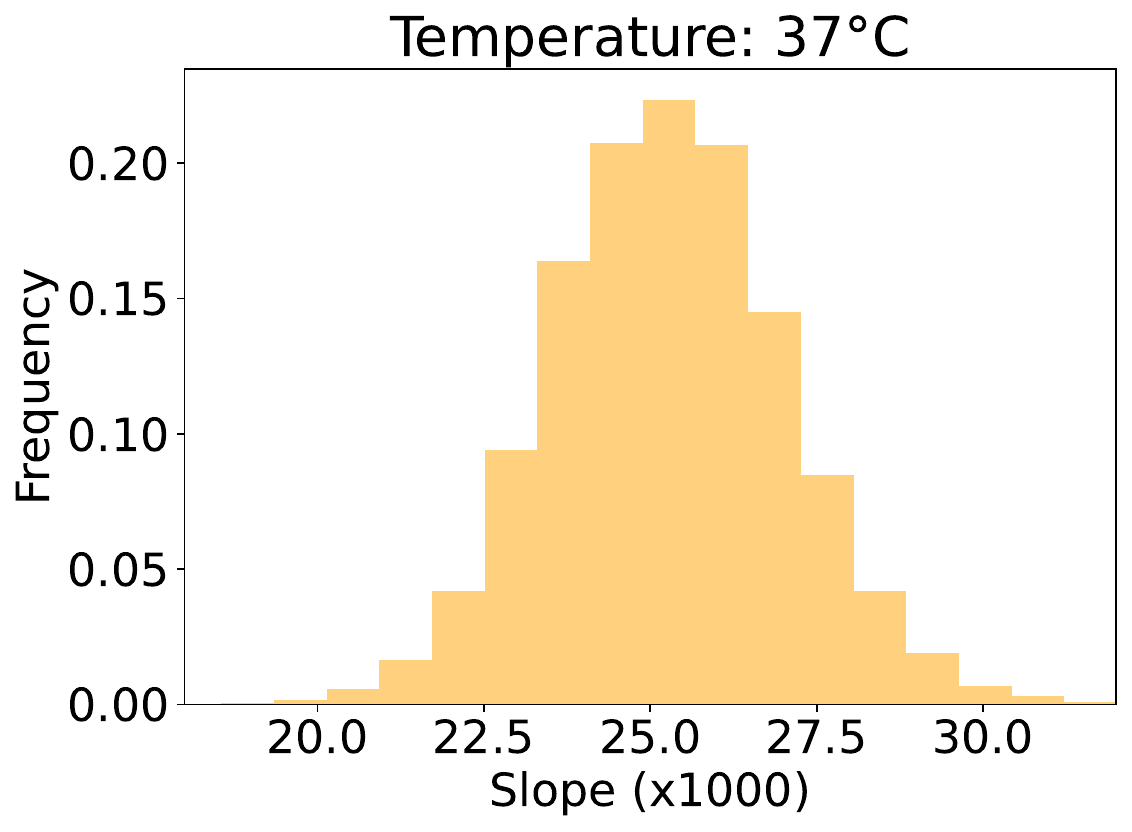} \\
        \includegraphics[height=2cm]{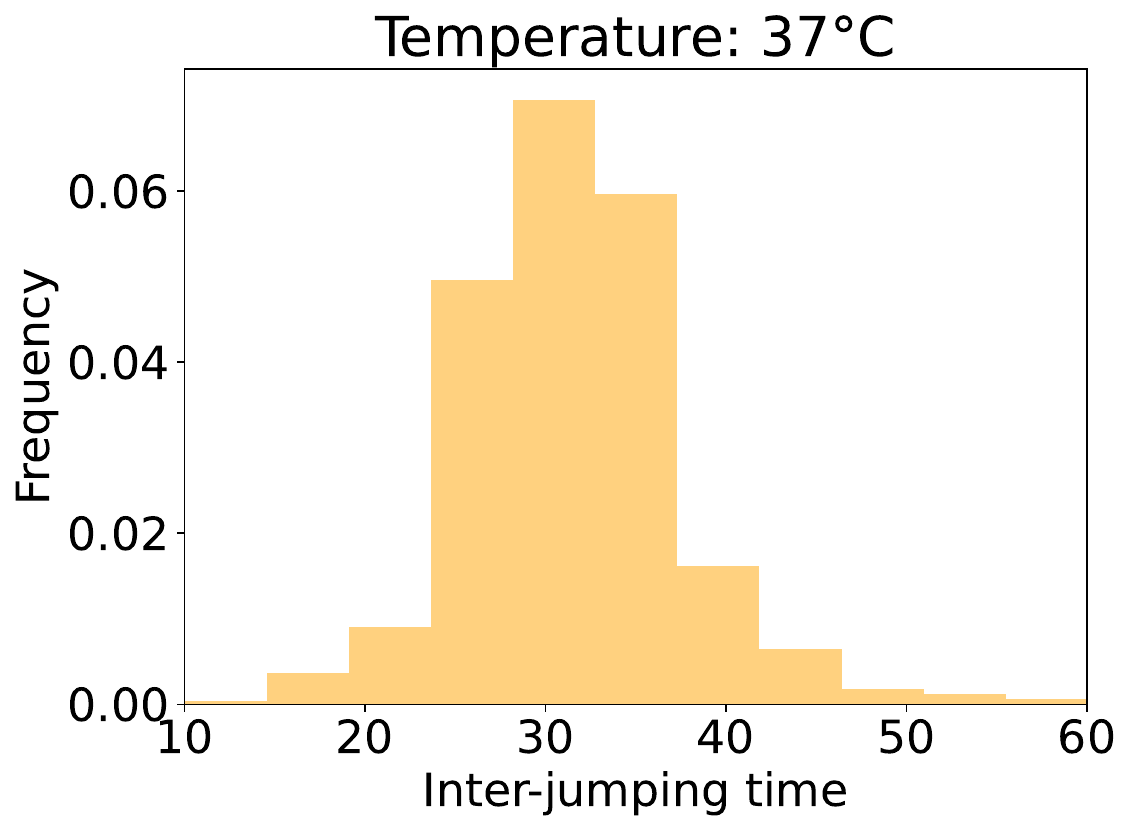}
    \end{minipage}
    \begin{minipage}{0.2\textwidth}
        \centering
        \includegraphics[height=2cm]{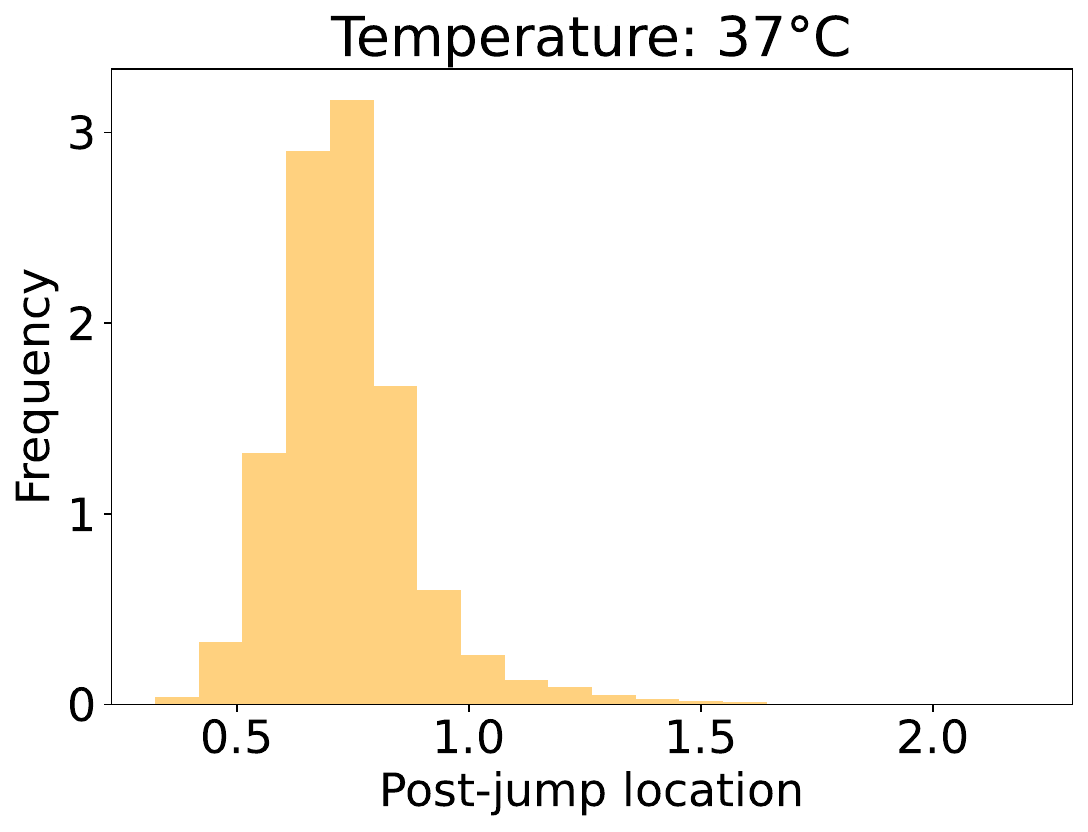} \\
        \includegraphics[height=2cm]{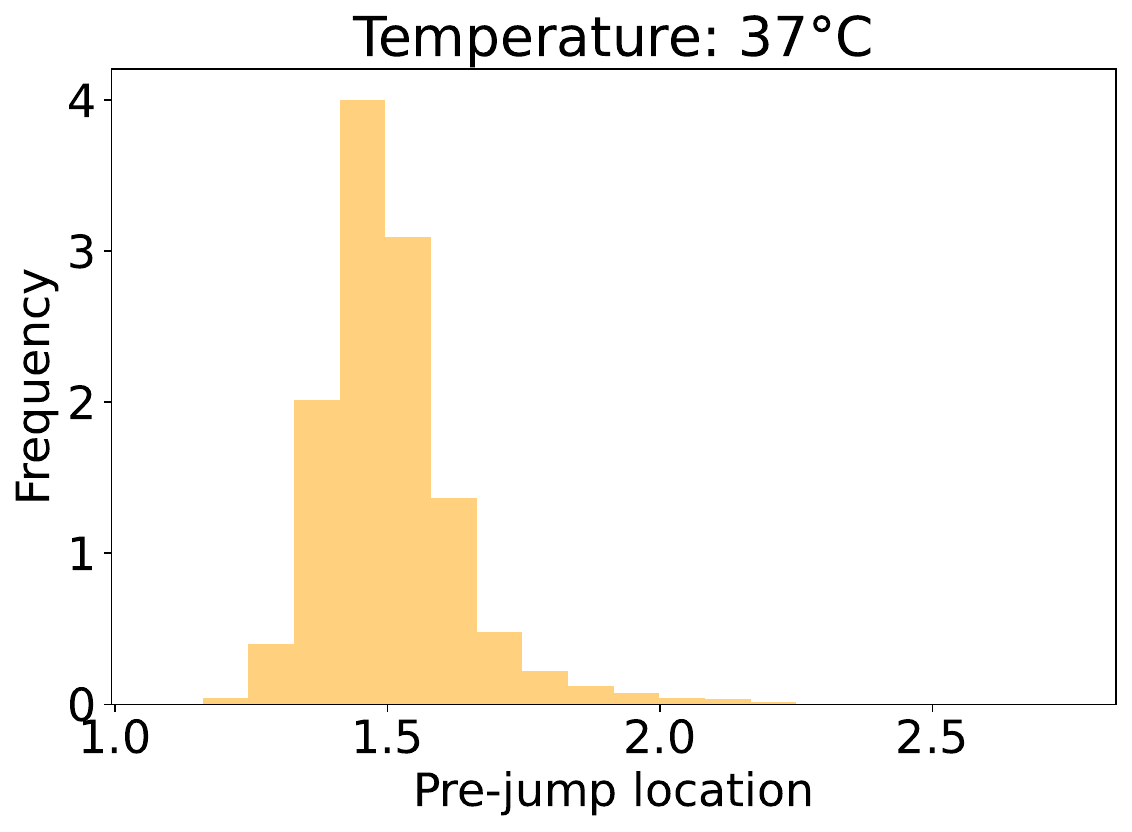}
    \end{minipage}
    \caption{Under a temperature condition of 37°C, logarithm of cell size measurements before the fifth division event and fitted linear growth (left), distribution of estimated slope (top center), distribution of time between two consecutive division events (bottom center), distribution of logarithm of cell size at division time (top right), and distribution of logarithm of cell size just before division (bottom right).}
    \label{fig:model:37}
\end{figure}

\subsection{Jump rate estimation}
\label{ss:data:jr}

Now that the model has been fitted, we can calculate the estimators of the jump rate. Estimator $\lambdaks{n}$ only requires the evaluation of $\Delta$ (the derivative of the flow), which here is simply a constant equal to $\theta$. To calculate estimator $\lambdaamg{n}$, we simplify the procedure from \cite{AM16}. As can be seen from Figures~\ref{fig:model:37} and \ref{fig:model:25:27}, the process of interest is rather stereotyped, with empirical distributions of the embedded Markov chain quite concentrated around their mode. For instance, inter-jumping times are mainly around $\tau=31.6$ on average at 37°C ($\tau = 66.6$ at 25°C and $\tau=52.4$ at 27°C). To estimate $\lambda(x)$, we therefore skip the complex step of optimal argument selection and evaluate the estimator at $(\xi,\tau)$ where $\xi = x - \theta\tau$ so that $\Phi(\tau|\xi) = x$.

The final step is to select the smoothing parameters. In the absence of ground truth, these are chosen by hand to avoid over-fitting (resulting in excessive oscillations) and under-fitting (no apparent variation). We also use our knowledge from numerical simulations: spatial bandwidths of the two estimators are very close. In addition, both bandwidths are expected to decrease in the sample size. Selected parameters are given in Table~\ref{tab:bandwidths}.

\begin{table}[ht]
    \centering
    \begin{tabular}{c|c|ccc}
     Temperature &  Sample size & $\vardiamondsuit$ bandwidth & $\spadesuit$ bandwidth & $\spadesuit$ bandwidth \\
     			 &			 &			  			  & (space)			  & (time) \\ \hline 
    25°C & 4\,485 & 0.05 & 0.06 & 4\\
    27°C & 3\,726 & 0.07 & 0.08 & 8\\
    37°C & 11\,040 & 0.02 & 0.03 & 3
    \end{tabular}   
    \caption{Summary of bandwidth parameters for $\lambdaks{n}$ and $\lambdaamg{n}$ (space and time) under each of the temperature conditions.}
    \label{tab:bandwidths}
\end{table}

To estimators $\lambdaks{n}$ and $\lambdaamg{n}$, we add the adaptive projection-based version of $\lambdaks{n}$ introduced in Subsection~\ref{ss:adaptive:simus}. Estimation results are given in Figure~\ref{fig:realdata:results}. To help with interpretation, we represent the function of cell size $x\mapsto\lambda(\exp(x))$. Both methods show that the behavior of the jump rate depends strongly on the temperature of the experiment, as expected. The non-adaptive kernel estimator $\lambdaks{n}$ and its adaptive projection version yield estimates based on \eqref{eq:lambda:intro} that are extremely close. By contrast, estimator $\lambdaamg{n}$ exhibits a qualitatively similar but quantitatively different behavior when examined in detail, e.g. growth in the jump rate starts earlier for $\lambdaamg{n}$ than for $\lambdaks{n}$, whatever the temperature condition. This supports the view that the method used to estimate the invariant law matters less than the formula employed to capture the function of interest. This analysis also highlighted the high sensitivity of $\lambdaamg{n}$ to smoothing parameters, particularly for the smallest samples, making it a more difficult method to calibrate.

\begin{figure}[ht]
    \centering
    \includegraphics[height=3.3cm]{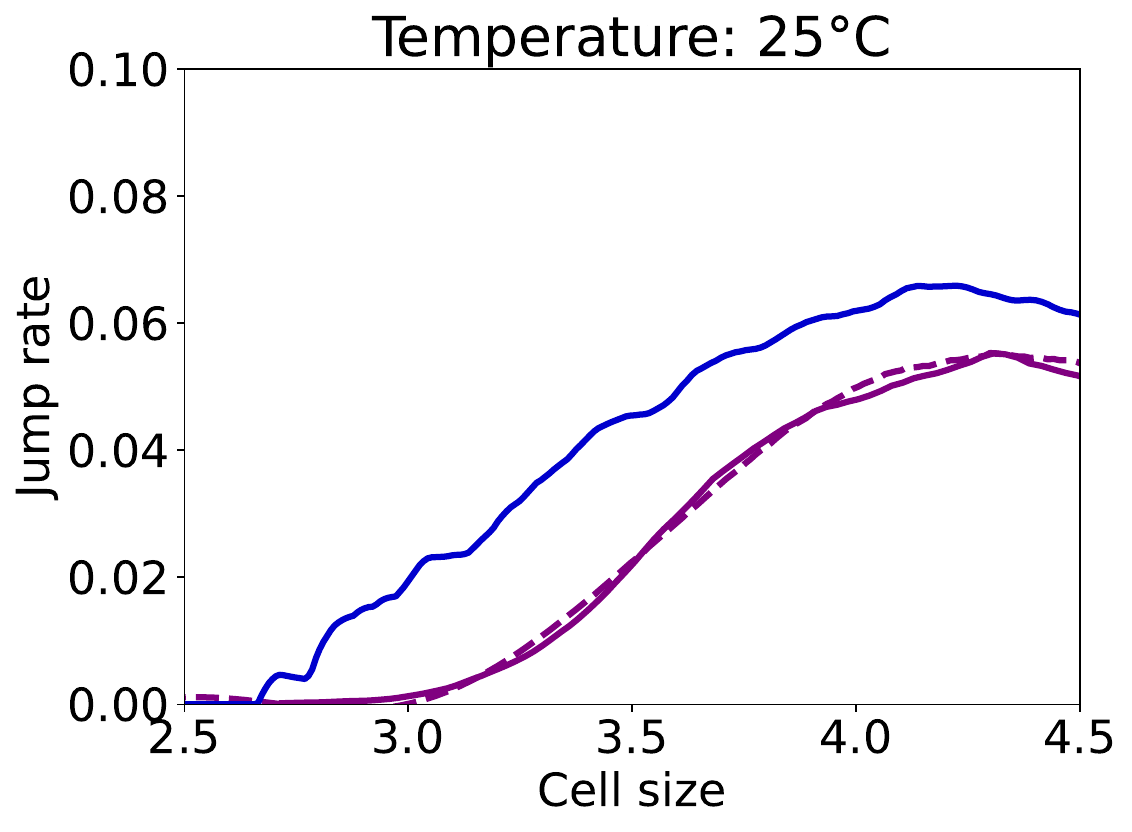}
    \includegraphics[height=3.3cm]{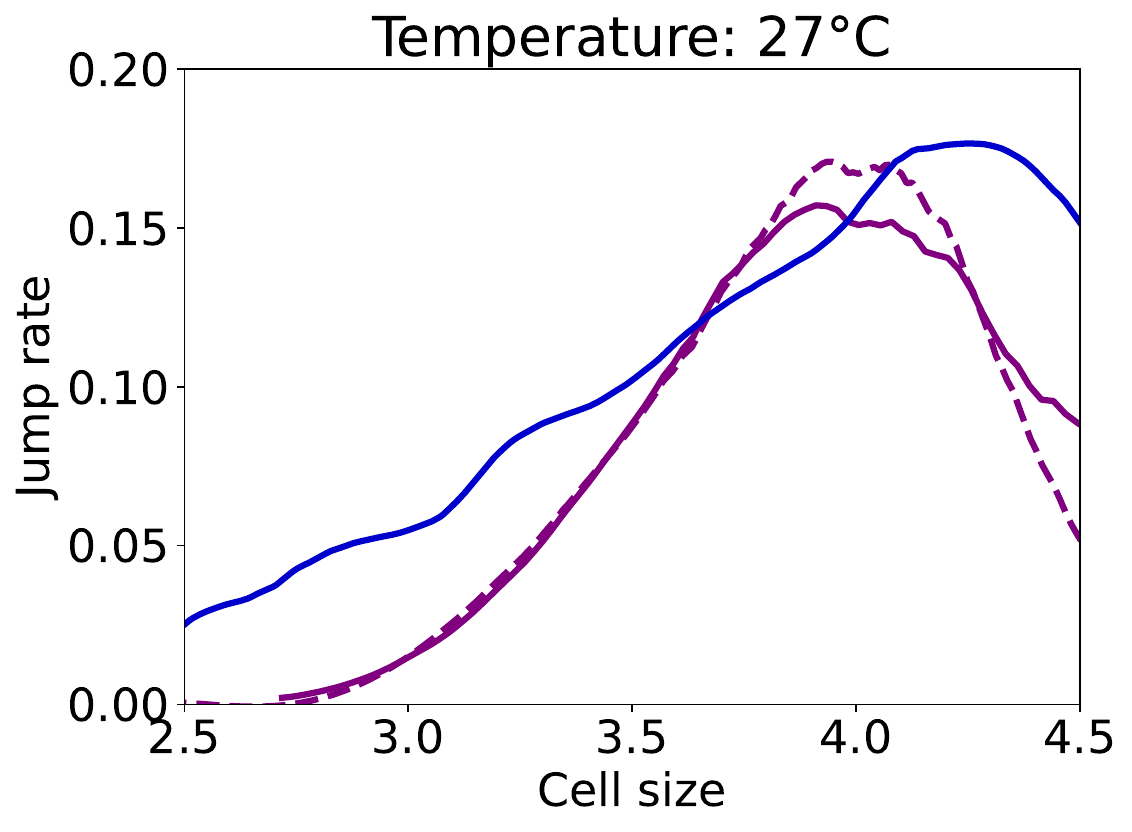}
    \includegraphics[height=3.3cm]{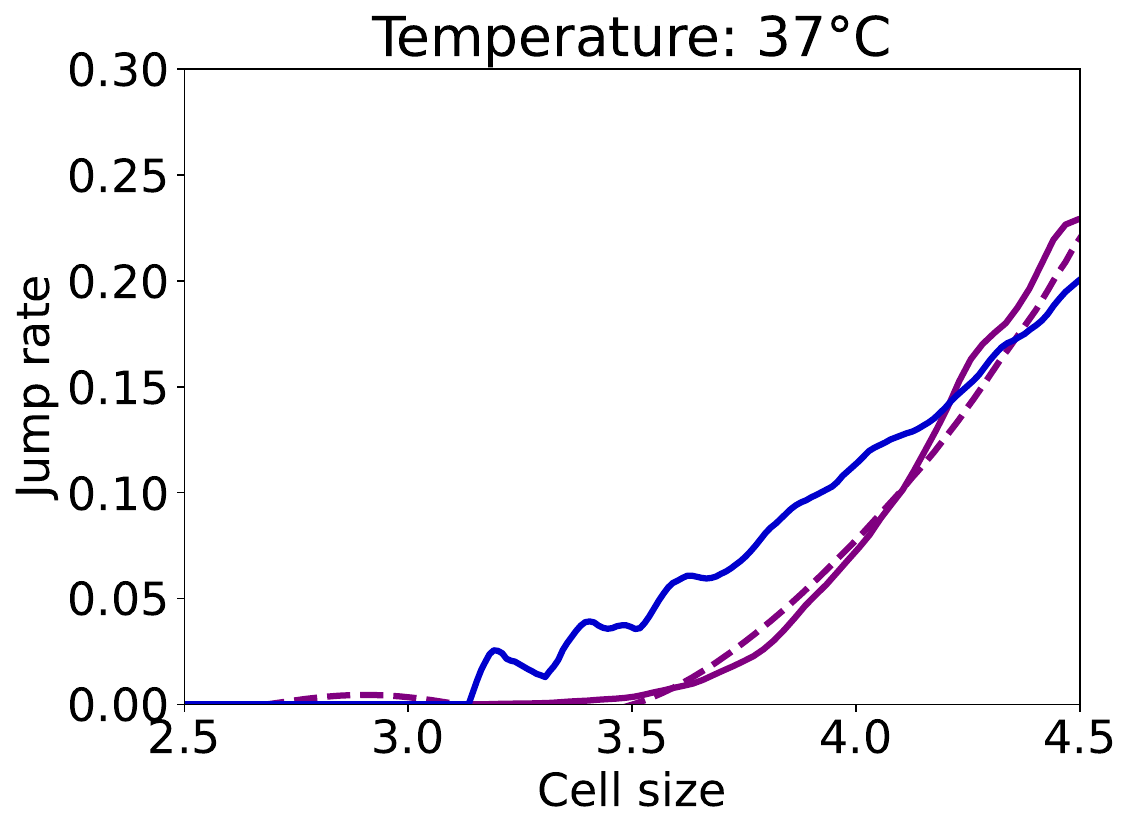}
    \caption{Estimated jump rates as a function of cell size using $\lambdaks{n}$ (purple, full  line), the adaptive projection-based version of $\lambdaks{n}$ (purple, dotted line) and $\lambdaamg{n}$ (dark blue) under different temperature conditions.}
    \label{fig:realdata:results}
\end{figure}

\subsection{A posteriori validation}\label{ss:data:valid}

The aim of this work is the comparison of estimation strategies of the jump rate in PDMPs. In the real data context of this section, we can not compare them on any theoretical criterion because there is no ground truth: the mechanism triggering cell division is unknown. To overcome this difficulty, we propose to compare our estimation results with the empirical data based on their distribution in continuous time rather than on the jump mechanism. In other words, we propose to return to the continuous-time invariant distribution of the model, computed using the estimated jump rates, which is directly comparable to the distribution of the raw data. It should be noted that the latter does not depend on the model fitting and can therefore play the role of theoretical reference.

For each temperature condition, we consider the PDMPs with the following local characteristics:
\begin{itemize}[wide=4.5pt,labelsep=4.5pt]
\item The flow $\Phi$ is defined by $ \Phi(t|x) = x+\theta t$ where the slope $\theta$ depends on the temperature condition and was estimated in Subsection \ref{ss:data:37}.
\item The fragmentations are defined by $Z_i=K_iZ_i^-$ where the $K_i$'s are independent and Gaussian with mean and variance estimated from division ratios under each of the temperature conditions (see Figures~\ref{fig:data:37} and \ref{fig:data:25:27}).
\item The jump rate is one of the estimates under consideration in Subsection~\ref{ss:data:jr}.
\end{itemize}

The continuous-time invariant distribution is difficult to evaluate exactly. We therefore propose to estimate it from simulated data generated according to the local characteristics that we have just specified.
To this end, for each temperature condition and for each jump rate estimate, we simulate the first $1\,000$ jumps of the process starting from $X_0=1.5$ on a minute-based discrete time grid (like in the real data sets). We then extract from the resulting trajectory the $10\,000$ last positions, which approximately follow the invariant distribution. We finally estimate this distribution with a classical kernel estimator. The comparison with raw data is given in Figure \ref{fig:realdata:valid}.

\begin{figure}[ht]
    \centering
    \includegraphics[height=3.3cm]{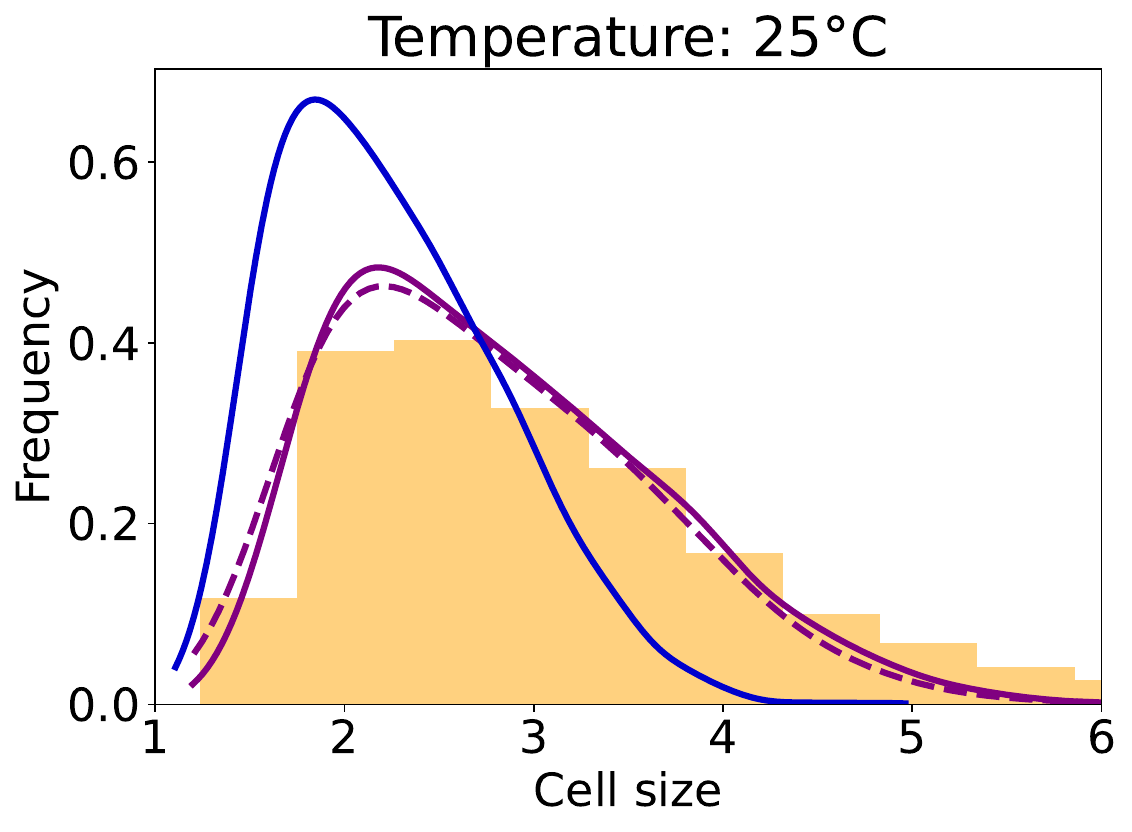}
    \includegraphics[height=3.3cm]{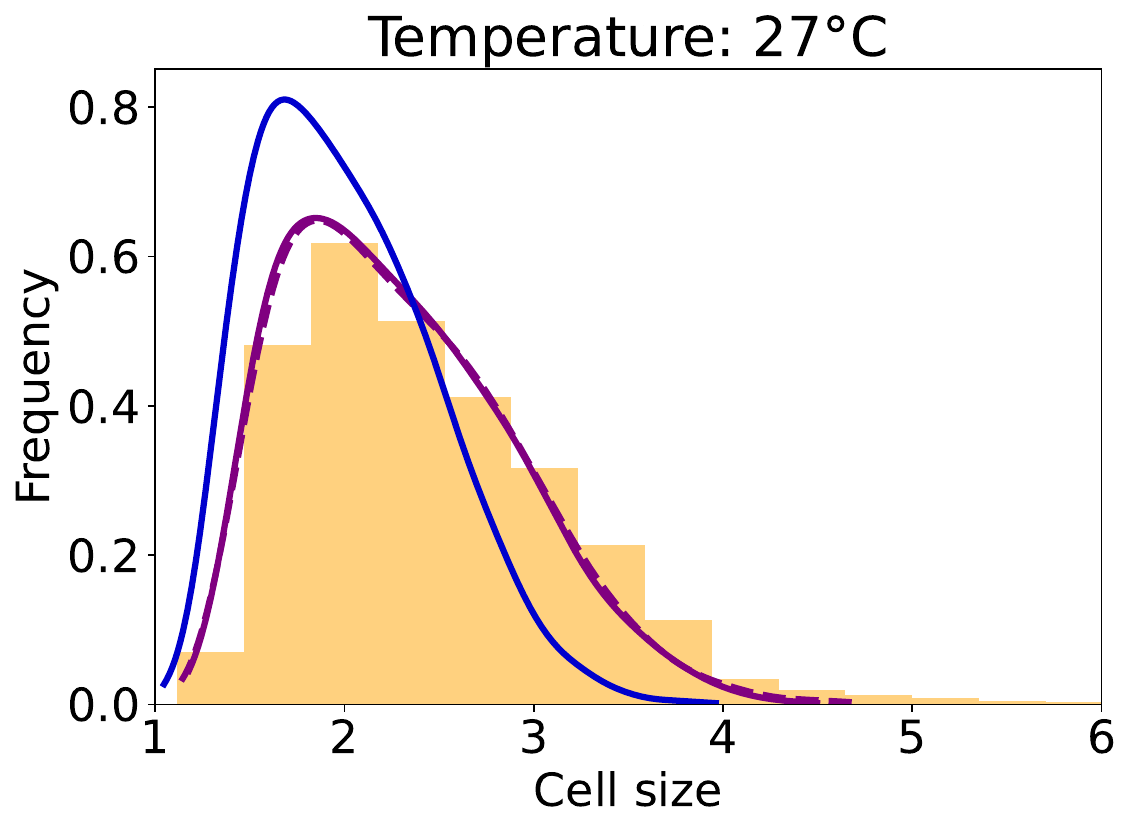}
    \includegraphics[height=3.3cm]{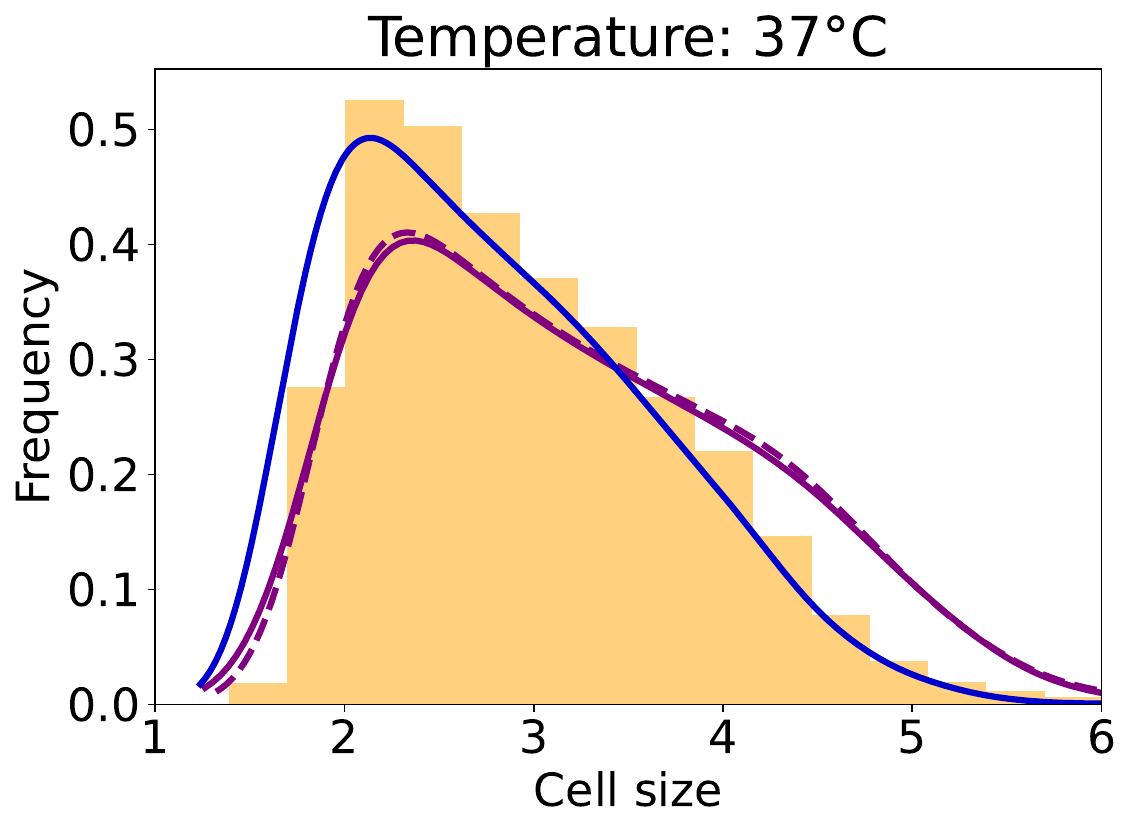}
    \caption{Cell size distribution from raw data (histogram) and continuous-time invariant distribution of the TCP model with jump rate $\lambdaks{n}$ (purple, full line), the adaptive projection-based version of $\lambdaks{n}$ (purple, dotted line) and $\lambdaamg{n}$ (dark blue), under different temperature conditions.}
    \label{fig:realdata:valid}
\end{figure}

Under temperature conditions of 25°C and 27°C, the invariant measure of the PDMPs with jump rate $\lambdaks{n}$ and its adaptive projection version fits well the raw data whereas the one with jump rate $\lambdaamg{n}$ is clearly disqualified: the law associated with $\lambdaamg{n}$ assigns greater weight to smaller cell sizes, as expected since the jump rate is higher than in the other two estimates (see Figure~\ref{fig:realdata:results}). The phenomenon is reversed at a temperature of 37°C, where the distribution associated with $\lambdaamg{n}$ is the closest to the raw data. As a consequence, neither of the two available estimation strategies consistently dominates the other. A notable difference that should be taken into account to explain this reversal lies in the number of observations available for the estimation, which is significantly higher at a temperature of 37 °C. Nevertheless, this fitting quality to the real cell size distribution in both cases suggests that the PDMP model is a reliable approximation of cell growth and division.

\section{Concluding remarks}
\label{s:conclusion}

Non-parametric estimation of the jump rate of PDMPs requires the development of specific methods. In dimension 1, the state of the art is limited to five main approaches \cite{ADGP14,AM16,F13,K16,KS21} that address the function of interest in different ways. In the first part of the article, we highlighted a connection between three of them \cite{F13,K16,KS21} that has not been noticed in the literature: they implicitly capture the jump rate from formula \eqref{eq:lambda:intro}. It is important to note that the statistical techniques and subsequent results in these papers are nevertheless different and complementary.

The objective of the present paper is not to compare the statistical approaches developed in the literature, but rather to compare the expressions of the jump rate that underlie these methods. To this end, we introduced two new kernel estimators of the jump rate $\lambda$. The estimator $\lambdak{n}$ is based on formula \eqref{eq:lambda:intro:k} (in the setting of deterministic fragmentations), similarly to \cite{K16}, whereas $\lambdaks{n}$ is based on formula \eqref{eq:lambda:intro}, in the same spirit as the estimators proposed in \cite{F13,KS21}. These estimators are to be compared with the kernel estimator $\lambdaamg{n}$ and its oracle version $\lambdaamgo{n}$, both based on formula \eqref{eq:lambdacircphi:intro} as in \cite{AM16}. While all these methods yield consistent estimators of the jump rate, the aim of the paper is to compare them at a higher order, through their convergence rate and their asymptotic variance. The approach based on the multiplicative intensity model \cite{ADGP14} was set aside, primarily because it does not provide a direct estimator of the jump rate. However, we believe that this method warrants further investigation and should be considered in a broader comparative analysis.

We have shown new results of consistency and asymptotic normality for $\lambdak{n}$ and $\lambdaks{n}$, making them comparable with $\lambdaamgo{n}$. The proof involves the study of vector martingales constructed along the embedded chain of the PDMP. The theoretical results prove that the three estimators can not in general be ordered based on their asymptotic variance, even within the same model. This result is surprising for two main reasons: (i) $\lambdak{n}$ is the only estimator that leverages the deterministic nature of the transitions, which does not guarantee that it achieves the lowest variance everywhere; (ii) the technique developed to construct $\lambdaamgo{n}$ in \cite{AM16} does not exploit the one-dimensional nature of the state space (as it was developed to operate in any dimension), which does not prevent it from achieving the lowest variance in certain regions of the state space.

We have demonstrated that these theoretical results align with numerical simulations in the context of the TCP model, a representative application of PDMPs in dimension 1. However, the numerical experiments also highlight the greater sensitivity of the estimators $\lambdaamgo{n}$ and $\lambdaamg{n}$ to the choice of smoothing parameters. Unlike the other two methods $\lambdak{n}$ and $\lambdaks{n}$ (which involve smoothing only in space), this approach requires smoothing in both space and time, as well as an optimal argument selection, making it the most complex technique to implement. The application to real data confirms this trend, underscoring the practical advantages of $\lambdaks{n}$. Furthermore, implementing the adaptive projection-based versions of $\lambdak{n}$ and $\lambdaks{n}$, which behave in a very similar way to their non-adaptive kernel versions, suggests that the technique used to estimate the jump rate matters less than the formula capturing it. As a conclusion, given the current state of knowledge, the estimator $\lambdaks{n}$ (or its variant) is likely the most suitable choice in dimension 1, despite exhibiting higher variance in certain cases.

The most promising solution is likely to involve aggregating the estimators studied in the present paper to leverage the strengths of each approach. For example, a compelling direction suggested by our analysis would be to select, at each point in the state space, the estimator with the lowest asymptotic variance.

Extending the approach developed in the present paper to multidimensional PDMPs would be relevant in many settings, given the variety of such models (see, for instance, \cite{CdSRSV25,GH25,GT12,HBEG17} and references therein).
The estimation strategies resulting from formulas \eqref{eq:lambda:intro} and \eqref{eq:lambda:intro:k} are both based on a time-space change of variable, which is naturally specific to dimension 1. In the literature in statistics, the only general nonparametric approach linking the jump rate of interest to the invariant distribution of multidimensional PDMPs relates to \eqref{eq:lambdacircphi:intro}. However, it might be possible to exploit the structure of some particular multidimensional models to establish a tractable formula for the jump rate, as it is done in one-dimensional growth-fragmentation processes \cite{K16,KS21}.

To the best of our knowledge, in the aforementioned multidimensional applications of PDMPs, no closed form of the invariant distribution is available in a general setting (with nonparametric jump rate). This is not surprising since the study of the invariant distribution in multidimensional models is in general a difficult problem. For instance, establishing the uniqueness of the invariant distribution or understanding its qualitative properties appear to be challenging questions, even in specific frameworks (see \cite{BH12, BLBMZ15} and references therein). 

Nevertheless, one-dimensional PDMPs might help to better understand some particular multidimensional processes. As an illustration, consider the class of two-dimensional PDMPs where the flow is exponential, i.e. $\boldsymbol{\dot x}=\alpha \boldsymbol{x}$ with $\alpha>0$, the jump rate depends only on the Euclidean distance to the origin, i.e. $\lambda(x)=l(\lVert x \rVert)$, and the jump mechanism is characterized by an instantaneous contraction of the distance to the origin and a new random direction. The polar form of the process is also a PDMP with independent coordinates which radial component is a growth-fragmentation process. Using one-dimensional formulas \eqref{eq:lambda:intro} and \eqref{eq:lambda:intro:k} on the radial component, one can derive two estimation formulas for the jump rate of interest $\lambda$. Even though this is a hypothetical example, it reinforces the interest of studying one-dimensional processes, in addition to their numerous applications in biology (cell growth-division models developed in \cite{DHKR15,K24,RHKARD14} or the cancer model studied in \cite{Cleynen2025}), in insurance (see for instance \cite{KP11}) or in reliability theory (see \cite{doi:10.1177/1748006X16651170,SDZE12} and references therein).

\bibliographystyle{acm}
\bibliography{biblio.bib}

\vspace{0.75cm}

\begin{center}
    \includegraphics[width=7cm]{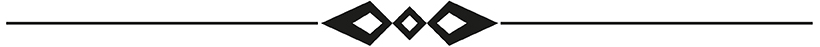}
\end{center}

\vspace{0.5cm}

\appendix
\section{Proof of the main results}
\label{app:proof}
The goal of this appendix is to prove the consistency (given in Theorem~\ref{main_ps}) and the asymptotic normality (given in Theorem~\ref{main_tcl}) of jump rate estimators under consideration.
The proof is given in detail below for estimator $\lambdak{n}(x)$ and follows the outline and notations presented in Subsection~\ref{ss:sketch}.
The proof for $\lambdaks{n}(x)$ follows exactly the same reasoning and is omitted for brevity.

\subsection{Square variation process}
\label{app:predict_square_var}
The predictable square variation process of $M_n=\left(M_n^{(1)},M_n^{(2)}\right)^\top$, introduced in \eqref{eq:def:Mn}, is given by
$$\left\langle M\right\rangle_n=\sum_{k=1}^n\left(\E\left[M_k^\top M_k|F_{k-1}\right]-M_{k-1}^\top M_{k-1}\right).$$
In this section we aim to establish the following lemma that describes the asymptotic behavior of $\left\langle M\right\rangle_n$.

\begin{lemma}\label{lem:asym:crochet}
When $n$ goes to infinity, the following equality holds almost surely,
$$
\left\langle M\right\rangle_n=\left[\begin{matrix}
    n^{1+\gamma}\omega_1(1+o(1))&O(n)\\O(n)&n\omega_2(1+o(1))
\end{matrix}\right],
$$
where $\omega_1>0$ and $\omega_2\geq0$.
\end{lemma}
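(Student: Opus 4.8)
The plan is to compute the three (symmetric) entries of the matrix $\langle M\rangle_n$ separately, treating the diagonal entries as the main contributions and showing the off-diagonal entry is negligible. Since $M_n^\top M_n$ is the sum of the squares and the cross term, and the martingale increments are $\Delta M_k^{(i)} = A_k^{(i)}-B_k^{(i)}$ (uncorrelated across $k$ given the filtration), one has
\begin{equation*}
\langle M\rangle_n = \sum_{k=1}^n \E\left[(\Delta M_k)(\Delta M_k)^\top\,\middle|\,F_{k-1}\right],
\end{equation*}
so each entry is $\sum_{k=1}^n \left(\E[A_k^{(i)}A_k^{(j)}|F_{k-1}] - B_k^{(i)}B_k^{(j)}\right)$. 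The strategy for each entry is: (1) express the conditional expectation $\E[A_k^{(i)}A_k^{(j)}|F_{k-1}]$ as an explicit functional of $Z_{k-1}$ using the transition density $P(\cdot|Z_{k-1})$; (2) identify the dominant term in $h_n$; (3) apply the almost sure ergodic theorem for the Markov chain $Z_n$ (valid under Assumption~\ref{ass:ergodicity}, via \cite[Theorem~17.1.7]{MT09}) to replace the empirical average $\frac1n\sum_k(\cdots)$ by its stationary expectation under $\mu$, up to a $(1+o(1))$ factor.

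For the $(1,1)$ entry, $A_k^{(1)} = K_{h_n}(Z_k - h(x))$, so $\E[(A_k^{(1)})^2|F_{k-1}] = \int K_{h_n}(z-h(x))^2 P(z|Z_{k-1})\,\ud z = h_n^{-1}\tau^2 P(h(x)|Z_{k-1}) + o(h_n^{-1})$ by the standard kernel change of variables together with the Lipschitz regularity of $P$ in its first argument (Assumption~\ref{ass:technical}, $\clubsuit$); meanwhile $(B_k^{(1)})^2 = O(1)$ is lower order. Summing and applying the ergodic theorem gives $\langle M\rangle_n^{(1,1)} = n h_n^{-1}\tau^2\,\E_\mu[P(h(x)|Z_0)](1+o(1)) = n^{1+\gamma}\omega_1(1+o(1))$ with $\omega_1 = \tau^2\int P(h(x)|z)\mu(z)\,\ud z = \tau^2\mu(h(x))>0$ (recall $h_n = n^{-\gamma}$). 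For the $(2,2)$ entry, $A_k^{(2)}$ is a bounded indicator functional, so no kernel blow-up occurs: $\E[(A_k^{(2)})^2|F_{k-1}] = (h'(x)\Delta(x))^{-2}\,\Pr(Z_k\geq h(x)\,|\,F_{k-1})\I_{\{Z_{k-1}\leq x\}}$, and subtracting $(B_k^{(2)})^2$ leaves a bounded predictable term; the ergodic theorem yields $\langle M\rangle_n^{(2,2)} = n\omega_2(1+o(1))$ with $\omega_2$ the stationary variance of $A_1^{(2)}$, which is positive provided $\Pr_\mu(Z_0\leq x < Z_1^-)\in(0,1)$ — this is where the hypothesis on $x$ enters. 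For the off-diagonal $(1,2)$ entry, $\E[A_k^{(1)}A_k^{(2)}|F_{k-1}]$ involves $\int K_{h_n}(z-h(x))\I_{\{z\geq h(x)\}}P(z|Z_{k-1})\,\ud z$, which is $O(1)$ (no $h_n^{-1}$ factor, since the kernel is integrated against a bounded function, not squared), and likewise $B_k^{(1)}B_k^{(2)} = O(1)$; summing gives $O(n)$.

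The main obstacle is controlling the kernel asymptotics uniformly enough that the error terms genuinely are $o(h_n^{-1})$ at the level of the summed-and-normalized quantity, and then legitimately invoking the ergodic theorem: the integrand $z\mapsto\int K_{h_n}(z-h(x))^2 P(z|Z_{k-1})\,\ud z$ depends on $n$ through $h_n$, so one cannot apply the ergodic theorem to a fixed function. The clean way around this is to write $h_n\E[(A_k^{(1)})^2|F_{k-1}] = \tau^2 P(h(x)|Z_{k-1}) + r_k$ where $|r_k|\leq \lip{P}\,h_n\int u^2 K(u)^2\,\ud u$ is a deterministic $o(1)$ bound uniform in $k$ (using the compact support and boundedness of $K$ from Assumption~\ref{ass:kernel} and the Lipschitz bound on $P$), so that $\frac{h_n}{n}\langle M\rangle_n^{(1,1)} = \frac{\tau^2}{n}\sum_{k=1}^n P(h(x)|Z_{k-1}) + o(1)$, and now the ergodic theorem applies to the fixed bounded function $z\mapsto \tau^2 P(h(x)|z)$. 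The analogous but easier bounded-function arguments handle the $(2,2)$ and $(1,2)$ entries. A secondary technical point is verifying $\omega_1,\omega_2>0$; $\omega_1 = \tau^2\mu(h(x))$ is positive because $\mu$ has a density (as noted in the text following Assumption~\ref{ass:technical}) and $x$ is chosen so that $h(x)$ lies in the support, while positivity of $\omega_2$ follows from the stated condition $\Pr_\mu(Z_0\leq x<Z_1^-)>0$ together with the fact that this probability is strictly less than $1$.
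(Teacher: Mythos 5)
Your proof follows essentially the same route as the paper's: an entrywise analysis of $\langle M\rangle_n$, a kernel change of variables plus the Lipschitz bound on $P$ to replace the $n$-dependent integrand by the fixed function $\tau^2 P(h(x)|\cdot)$ before invoking the almost sure ergodic theorem (giving $\omega_1=\tau^2\mu(h(x))$ and the $n^{1+\gamma}$ scaling), a direct ergodic-theorem argument for the bounded $(2,2)$ entry, and boundedness of the cross terms (by a constant of order $\|P\|_\infty/(h'(x)\Delta(x))$) for the $O(n)$ off-diagonal entries. The only nitpick is that the Lipschitz error bound should read $\lip{P}\,h_n\int|u|K^2(u)\,\ud u$ rather than involving $\int u^2K(u)^2\,\ud u$, which changes nothing since both are finite for a bounded, compactly supported kernel.
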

\begin{proof}
We analyze the coefficients of the matrix separately.

\noindent
\textbf{$\bullet$ Study of $\left\langle M\right\rangle_n^{(1,1)}$}

A direct calculus shows that
\begin{eqnarray*}
    \left\langle M\right\rangle_n^{(1,1)}&=&\sum_{k=1}^n\var(M_k^{(1)}|F_{k-1})\\
    &=&\sum_{k=1}^n\var\left(\sum_{j=1}^k(A_j^{(1)}-B_j^{(1)})\Big|F_{k-1}\right) \\
    &=& \sum_{k=1}^n\left(\E\left[(A_k^{(1)})^2\big|F_{k-1}\right]-\E\left[A_k^{(1)}\big|F_{k-1}\right]^2\right),
\end{eqnarray*}
where the terms $A_k^{(i)}$ and $B_k^{(i)}$, $i\in \{1,2\}$ and $k\leq n$, are defined in Subsection~\ref{ss:sketch}.
Recalling that $P$ is the conditional density of $Z_k$ given $Z_{k-1}$, with a change of variable, one has
\begin{align*} \E\left[(A_k^{(1)})^2\big|F_{k-1}\right] &=
\frac{1}{h_n^2}\mathbb{E}\left[K^2\left(\frac{Z_k-h(y)}{h_n}\right)|Z_{k-1}\right]\\&=\frac{1}{h_n^2}\int\limits_{\R}K^2\left(\frac{z-h(y)}{h_n}\right)P(z \lvert Z_{k-1})\ \mathrm{d}z\\&=\frac{1}{h_n}\int\limits_{\R}K^2\left(u\right)P(h_nu+h(y)\lvert Z_{k-1})\ \mathrm{d}u,
\end{align*}
and analogously
\begin{equation*}
    \E\left[A_k^{(1)}\big|F_{k-1}\right]^2=\left(\int_\R K\left(u\right)P(h_nu+h(y)|Z_{k-1})\ud u\right)^2.
\end{equation*}

This yields
\begin{equation}\label{eq:def:crochetMn11}
\left\langle M\right\rangle_n^{(1,1)}=nT_n^{(1)}-T_n^{(2)},\end{equation}
where
\begin{eqnarray*}
T_n^{(1)}&=&\frac{1}{nh_n}\sum_{k=1}^n\int_\R K^2\left(u\right)P(h_nu+h(y)|Z_{k-1})\ud u,\\
T_n^{(2)}&=&\sum_{k=1}^n\left(\int_\R K\left(u\right)P(h_nu+h(y)|Z_{k-1})\ud u\right)^2.
\end{eqnarray*}
We shall investigate the limit behavior of $T_n^{(1)}$ and $T_n^{(2)}$. To this end, we define
\begin{eqnarray*}
\hat{T}_n^{(1)}&=&\frac{\tau^2}{n}\sum_{k=1}^n P(h(x)|Z_{k-1}),\\
\tilde{T}_n^{(1)}&=&\frac{1}{n}\sum_{k=1}^n\int_\R K^2\left(u\right)P(h_nu+h(x)|Z_{k-1})\ud u,
\end{eqnarray*}
where $\tau^2=\int_\R K^2(x)\ud x$. Using the fact that $P$ is Lipschitz as supposed in Assumptions~\ref{ass:technical} and $K$ has compact support by Assumptions~\ref{ass:kernel}, one has
\begin{eqnarray}
|\hat{T}_n^{(1)}-\tilde{T}_n^{(1)}| &\leq& \frac{1}{n}\sum_{k=1}^n\int_\R K^2(u)\left|P(h_nu+h(x)|Z_{k-1})-P(h(x)|Z_{k-1})\right|\ud u\nonumber \\
&\leq& \frac{\lip{P}}{n}\sum_{k=1}^n\int_\R K^2(u)|h_nu|\ud u\nonumber \\
&\leq& h_n\lip{P}\int_\R K^2(u)|u|\ud u~\to 0,\label{eq:Tns:same:limit}
\end{eqnarray}
when $n$ goes to infinity, because $h_n=n^{-\gamma}$. Furthermore, by virtue of the almost sure ergodic theorem \cite[Theorem~17.1.7]{MT09} (which we can apply by Assumption~\ref{ass:ergodicity}) and recalling that $\mu$ is the invariant measure of $P$,
$$ \hat{T}_n^{(1)} \toas \tau^2\E_\mu[P(h(x)|Z_0)] = \tau^2\mu(h(x)).$$
Since $\hat{T}_n^{(1)}$ and $\tilde{T}_n^{(1)}$ have the same limit by \eqref{eq:Tns:same:limit}, and writing $T_n^{(1)} = \tilde{T}_n^{(1)}/h_n$, one gets the following almost sure equivalent when $n$ goes to infinity,
\begin{equation}\label{eq:eq:nTn}
n T_n^{(1)} \sim n^{1+\gamma}\tau^2\mu(h(x)).
\end{equation}
In addition, since $P$ is bounded according to Assumptions~\ref{ass:technical},
$$T_n^{(2)} \leq \|P\|_\infty^2 n.$$
Together with \eqref{eq:def:crochetMn11} and \eqref{eq:eq:nTn}, one gets
$$
\frac{\left\langle M\right\rangle_n^{(1,1)}}{n^{1+\gamma}} \toas \tau^2\mu(h(x)),$$
which, together with $\mu(h(x))=\mu^-(x)/h'(x)$ where $\mu^-(x)>0$ and $h'(x)>0$ ($h$ is assumed to be strictly increasing), establishes the result for coefficient $\left\langle M\right\rangle_n^{(1,1)}$.

$\bullet$ \textbf{Study of $\left\langle M\right\rangle_n^{(2,2)}$}

One has
$$ \left\langle M \right\rangle_n^{(2,2)} = \sum_{k=1}^n\var(A_k^{(2)}|Z_{k-1}).$$
Each term, $\var(A_k^{(2)}|Z_{k-1})$, of the sum is a function of $Z_{k-1}$ and is independent of $n$. This contrasts with its counterpart, $\var(A_k^{(1)}|Z_{k-1})$, in $\left\langle M \right\rangle_n^{(1,1)}$, which exhibits an explicit dependence on $n$. In that context, one can directly apply the almost sure ergodic theorem \cite[Theorem~17.1.7]{MT09} and get
\begin{equation*}
    \frac{\left\langle M\right\rangle_n^{(2,2)}}{n}\toas\frac{1}{h'(x)^2\Delta(x)^2}\var_{\mu}\left(\I_{\{Z_0\leq x < Z_1^-\}}\right),
\end{equation*}
where the limit can be rewritten using $$ \var_{\mu}\left(\I_{\{Z_0\leq x < Z_1^-\}}\right)=\displaystyle\frac{1}{h'(x)^2\Delta(x)^2}\int_{\R_+} p(y)(1-p(y)) \mu(\ud y),$$
with $p(y) = \P(Z_0\leq x< Z_1^-|Z_0=y)$,
stating the expected result for $\left\langle M\right\rangle_n^{(2,2)}$.

$\bullet$ \textbf{Study of $\left\langle M\right\rangle_n^{(1,2)}=\left\langle M\right\rangle_n^{(2,1)}$}

With a change of variable, it is easy to see that both $$\E[A_k^{(1)}A_k^{(2)}|Z_{k-1}]$$
and
$$\E[A_k^{(1)}|Z_{k-1}]\,\E[A_k^{(2)}|Z_{k-1}]$$
are bounded by $\|P\|_\infty/(h'(x)\Delta(x))$. Together with
$$
\left\langle M\right\rangle_n^{(1,2)} =\sum_{k=1}^n\E[A_k^{(1)}A_k^{(2)}|Z_{k-1}]-\E[A_k^{(1)}|Z_{k-1}]\,\E[A_k^{(2)}|Z_{k-1}],
$$
this shows that the non-diagonal coefficients are in $O(n)$, as expected.
\end{proof}

\subsection{Consistency}
\label{proof_as}

If the martingale term $M_n/n$ and the remainder term $R_n=\left(R_n^{(1)},R_n^{(2)}\right)^\top$, defined in Subsection~\ref{ss:sketch}, go to $0$, then, in light of \eqref{eq:decompo:error}, $\lambdak{n}(x)$ is consistent. We tackle the two terms separately.

\subsubsection{Remainder term}

With a change of variable,
\begin{equation*}
R_n^{(1)}=\frac{1}{n}\sum_{k=1}^n\int_{\R}K\left(u\right)P\left(h_nu+h(x)|Z_{k-1}\right)\ud u-\mu(h(x)).
\end{equation*}
Then, using $\int_\R K(u)\ud u=1$ (see Assumptions~\ref{ass:kernel}),
\begin{equation}\label{eq:decompo:Rn1}
R_n^{(1)}=R_n^{(1,1)}+R_n^{(1,2)},
\end{equation}
where
\begin{eqnarray}
R_n^{(1,1)}&=&\frac{1}{n}\sum_{k=1}^n\int_{\R}K\left(u\right)\left[P\left(h_nu+h(x)|Z_{k-1}\right)-P\left(h(x)|Z_{k-1}\right)\right]\ud u,\nonumber\\
R_n^{(1,2)}&=&\frac{1}{n}\sum_{k=1}^n P\left(h(x)|Z_{k-1}\right)-\mu(h(x)).\label{eq:def:Rn12}
\end{eqnarray}
Since $P$ is Lipschitz according to Assumptions~\ref{ass:technical},
\begin{equation}
    \label{eq:Rn11}
|R_n^{(1,1)}|\leq h_n\lip{P}\int_{\R}K(u)|u|\ud u~\to0.
\end{equation}
In addition, $R_n^{(1,2)}$ almost surely vanishes by direct application of the almost sure ergodic theorem \cite[Theorem~17.1.7]{MT09}. This proves that $R_n^{(1)}$ tends to $0$. Furthermore,
\begin{equation*}
R_n^{(2)} = \frac{1}{nh'(x)\Delta (x)}\sum_{k=1}^n\I_{\{x\geq Z_{k-1}\}}\E[\I_{\{Z_k\geq h(x)\}}|Z_{k-1}] -\frac{D(x)}{h'(x)},
\end{equation*}
which directly goes to $0$, in light of the almost sure ergodic theorem again, and by definition \eqref{eq:def:D:proof} of $D(x)$.

\subsubsection{Martingale term}
\label{as martingale}

The submultiplicativity of the Frobenius norm, together with the asymptotic behavior of $\left\langle M\right\rangle_n$ stated in Lemma~\ref{lem:asym:crochet}, imply that
\begin{eqnarray*}
\frac{\|M_n\|^2}{n^2}
&=& \frac{\|\left\langle M\right\rangle_n^{1/2}\left\langle M\right\rangle_n^{-1/2} M_n\|^2}{n^2}\\
&\leq& \|\left\langle M\right\rangle_n^{-1/2}M_n\|^2 n^{\gamma-1}\max(\omega_1,\omega_2).
\end{eqnarray*}
By virtue of the law of large numbers for vector martingales \cite[Theorem~1.3.15]{D97}, for any $\eta>0$, when $n$ goes to infinity,
$$
\|\left\langle M\right\rangle_n^{-1/2}M_n\|^2=o\left(\frac{\log(\mathcal{T}_n)^{1+\eta}}{\mathcal{E}_n}\right),
$$
where $\calE_n$ ($\calT_n$, respectively) denotes the smallest eigenvalue (the trace, respectively) of $\left\langle M\right\rangle_n$. By Lemma~\ref{lem:asym:crochet}, it is clear that $\calT_n = O(n^{1+\gamma})$ and $\calE_n\sim \omega_2 n$ almost surely. All together, this establishes that $M_n/n$ almost surely vanishes, which ends the proof of the consistency of $\lambdak{n}(x)$.

\subsection{Asymptotic normality}
\label{proof_tcl}

Using \eqref{eq:decompo:error}, we derive a central limit theorem for $\lambdak{n}(x)$ by proving that the martingale term converges to a Gaussian distribution as the remainder term vanishes.

\subsubsection{Remainder term}
\label{rate_remainder}

We begin with $R_n^{(1)}$ written as the sum \eqref{eq:decompo:Rn1} of $R_n^{(1,1)}$ and $R_n^{(2,2)}$. With \eqref{eq:Rn11},
$$ n^{(1-\gamma)/2}R_n^{(1,1)} \leq n^{(1-3\gamma)/2}\lip{P}\int_\R K(u)|u|\ud u,$$
which goes to $0$ as long as $1/3<\gamma<1$. In addition, by virtue of \cite[Theorem~6.3.17]{D97} applied to functional $P$ (which satisfies Assumptions~\ref{ass:clt}) along Markov chain $Z_n$, one has almost surely
$$\left|\sum_{k=0}^n\left(P(h(x)|Z_k)-\E_{\mu}\left[P(h(x)|Z_0)\right]\right)\right|=o\left((n\ln(n)^{1+\gamma})^{\frac{r_1+r_2}{a_1}}\right).$$
With $\E_{\mu}\left[P(h(x)|Z_0)\right]=\mu(h(x))$, one gets
$$\left|\frac{1}{n^{\frac{1+\gamma}{2}}}\sum_{k=0}^n\left(P(h(x)|Z_k)-\mu(h(x))\right)\right|=o\left(n^{\frac{r_1+r_2}{a_1}-\frac{1+\gamma}{2}}\ln(n)^{(1+\gamma)\frac{r_1+r_2}{a_1}}\right).$$
Moreover $(r_1+r_2)/a_1 - (\gamma+1)/2\leq-\gamma/2<0$ because $(r_1+r_2)/a_1\leq 1/2$ in light of Assumptions~\ref{ass:clt}. Together with the definition \eqref{eq:def:Rn12} of $R_n^{(1,2)}$, this shows that $n^{(1-\gamma)/2}R_n^{(1,2)}$ almost surely vanishes. As a conclusion, $n^{(1-\gamma)/2}R_n^{(1)}$ almost surely goes to $0$.

On the other hand, the central limit theorem for Markov chains \cite[Theorem~17.5.3]{MT09} proves that $\sqrt{n}R_n^{(2)}$ weakly converges to a Gaussian distribution with mean $0$ and finite variance. Consequently, $n^{(1-\gamma)/2}R_n^{(2)}$ tends to $0$, and
\begin{equation}\label{eq:conv:rate:Rn}
n^{(1-\gamma)/2}R_n \toprob 0.
\end{equation}

\subsubsection{Martingale term}
\label{rate_martingale}

To establish the asymptotic normality of the martingale term $M_n/n$ with rate $n^{(1-\gamma)/2}$, we rely on the central limit theorem for vector martingales \cite[Corollary~2.1.10]{D97}. To this end, we need to check its conditions of applicability. First, in light of Lemma~\ref{lem:asym:crochet}, it is clear that
\begin{equation}\label{eq:conv:crochetMn}
\frac{\left\langle M\right\rangle_n}{n^{1+\gamma}} \toas 
\left[\begin{matrix}
    \tau^2\mu(h(x))&0\\
    0&0
    \end{matrix}\right].
\end{equation}
Second, we have to check Lindeberg's condition. To this end, we fix $k\leq n$. Using a change of variable and the definition \eqref{eq:def:Mn} of $M_n$,
\begin{eqnarray*}
\left|M_k^{(1)} - M_{k-1}^{(1)}\right| &=& \left|A_k^{(1)} - B_k^{(1)}\right| ~\leq~ n^{\gamma} \|K\|_{\infty} + \|P\|_{\infty},\\
\left|M_k^{(2)}-M_{k-1}^{(2)}\right|&=&
\left|A_k^{(2)}-B_k^{(2)}\right| ~\leq~ \frac{1}{h'(x)\Delta(x)}.
\end{eqnarray*}
Therefore, $\|M_k-M_{k-1}\|^2 = O(n^{2\gamma})$. In particular, with $\gamma<1$, one has $$\|M_k-M_{k-1}\| = o(n^{(1+\gamma)/2}).$$
For any $\varepsilon>0$, for $n$ large enough, the indicator function $\I_{\{\|M_k-M_{k-1}\|\geq\varepsilon n^{(1+\gamma)/2}\}}$ is thus almost surely null, which proves that 
$$\frac{1}{n^{1+\gamma}}\sum_{k=1}^n\E\left[\|M_k-M_{k-1}\|^2\I_{\{\|M_k-M_{k-1}\|\geq\varepsilon n^{(1+\gamma)/2}\}}\big|F_{k-1}\right]\toprob0.$$
As a consequence, the central limit theorem for vector martingales yields
\begin{equation}\label{eq:clt:Mn}
\frac{M_n}{n^{(1+\gamma)/2}} \todist \calN(0,\Sigma(x)),
\end{equation}
where $\Sigma(x)$ denotes the limit in \eqref{eq:conv:crochetMn}, which states the asymptotic normality of the martingale term.

Gathering \eqref{eq:decompo:error}, \eqref{eq:conv:rate:Rn}, \eqref{eq:conv:crochetMn} and \eqref{eq:clt:Mn} together, and by virtue of Slutsky's theorem, we obtain
$$
n^{(1-\gamma)/2}\left[\lambdak{n}(x)-\lambda(x)\right] \todist \calN\left(0,\frac{\tau^2(h'(x))^2\mu(h(x))}{D(x)^2}\right),$$
which proves the expected result. The formula of the variance is obtained from \eqref{eq:lambda:intro:k}.

\section{Additional figures from real data}\label{app:fig}

This section gathers Figures~\ref{fig:data:25:27} and \ref{fig:model:25:27} displaying both the observed data trajectories and the calibrated model at 25°C and 27°C. They constitute the analogues of Figures~\ref{fig:data:37} and \ref{fig:model:37} provided for the 37°C temperature condition.

\begin{figure}[ht]
    \centering
    \begin{minipage}{0.6\textwidth}
        \centering
        \includegraphics[height=4.4cm]{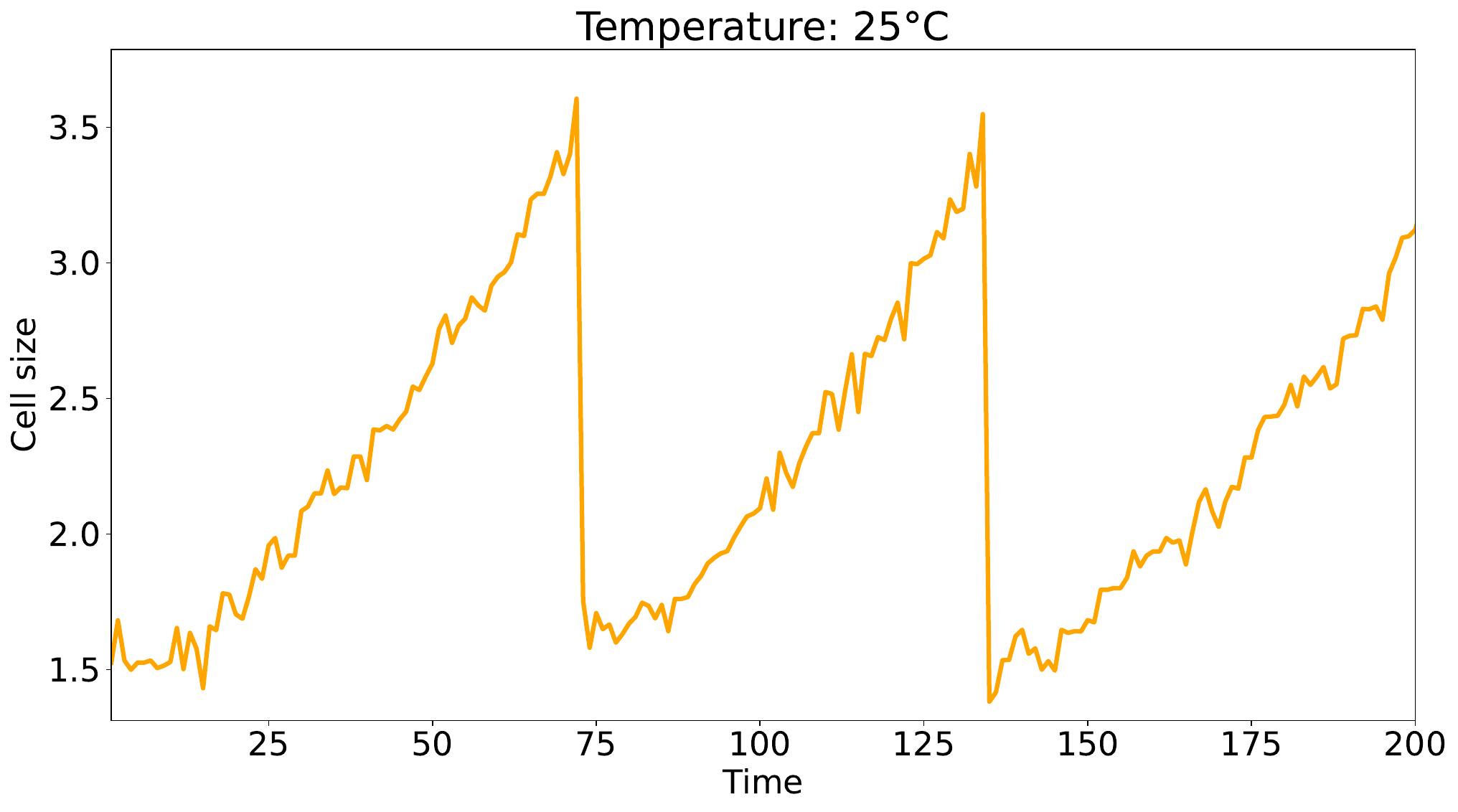}
    \end{minipage}%
    \begin{minipage}{0.35\textwidth}
        \centering
        \includegraphics[height=2.2cm]{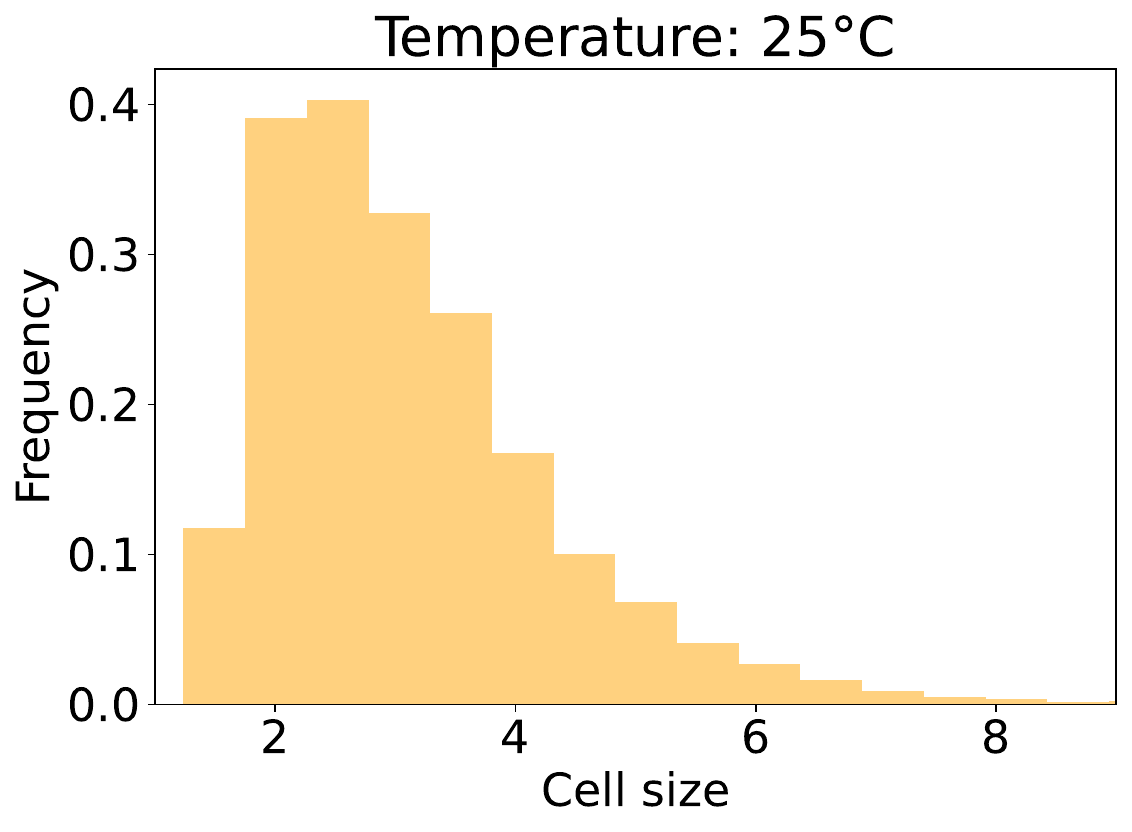} \\
        \includegraphics[height=2.2cm]{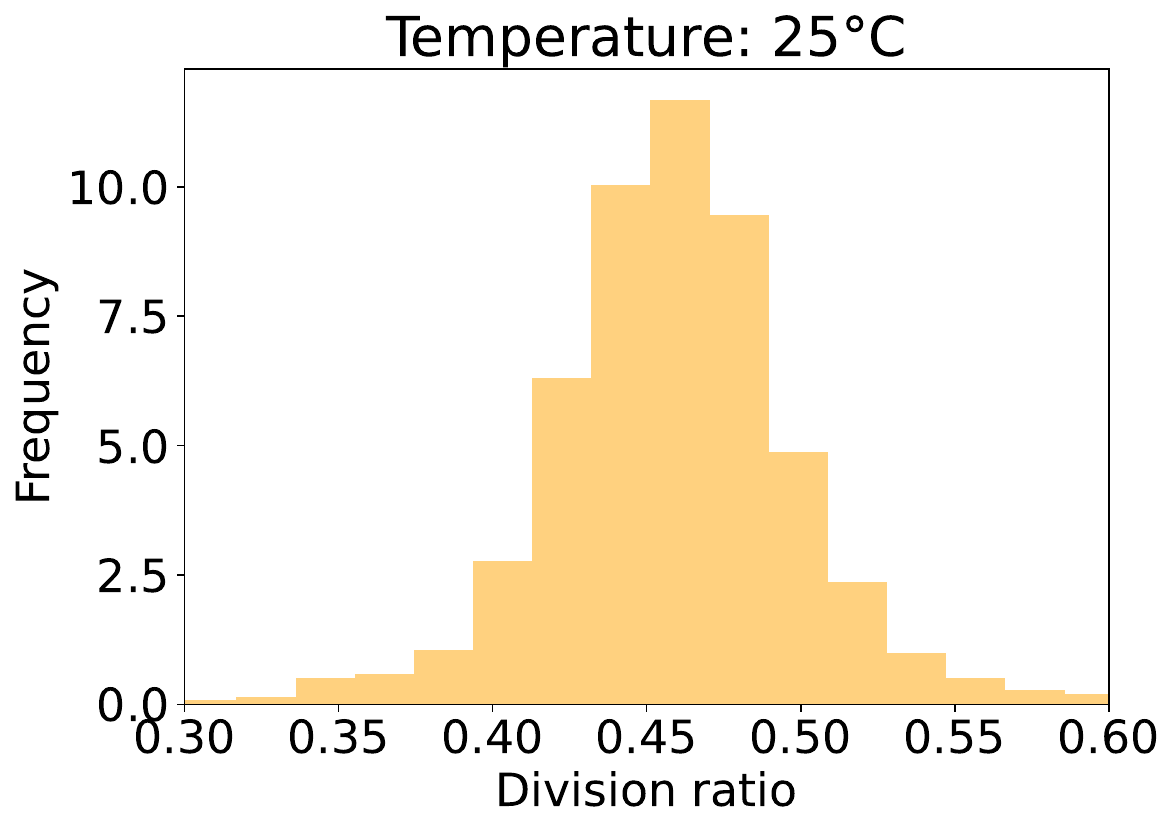}
    \end{minipage}

    \vspace{1cm}
    
    \begin{minipage}{0.6\textwidth}
        \centering
        \includegraphics[height=4.4cm]{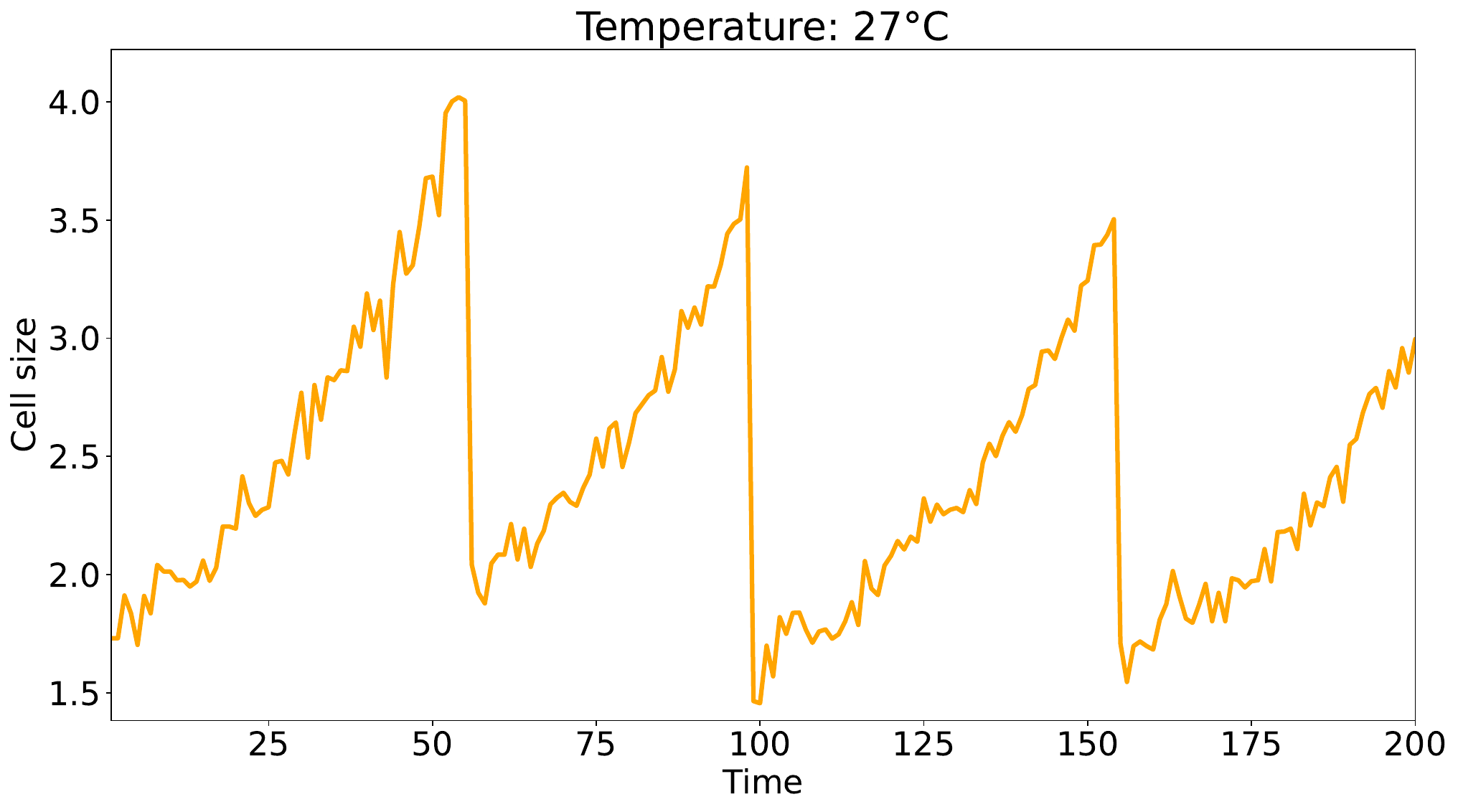}
    \end{minipage}%
    \begin{minipage}{0.35\textwidth}
        \centering
        \includegraphics[height=2.2cm]{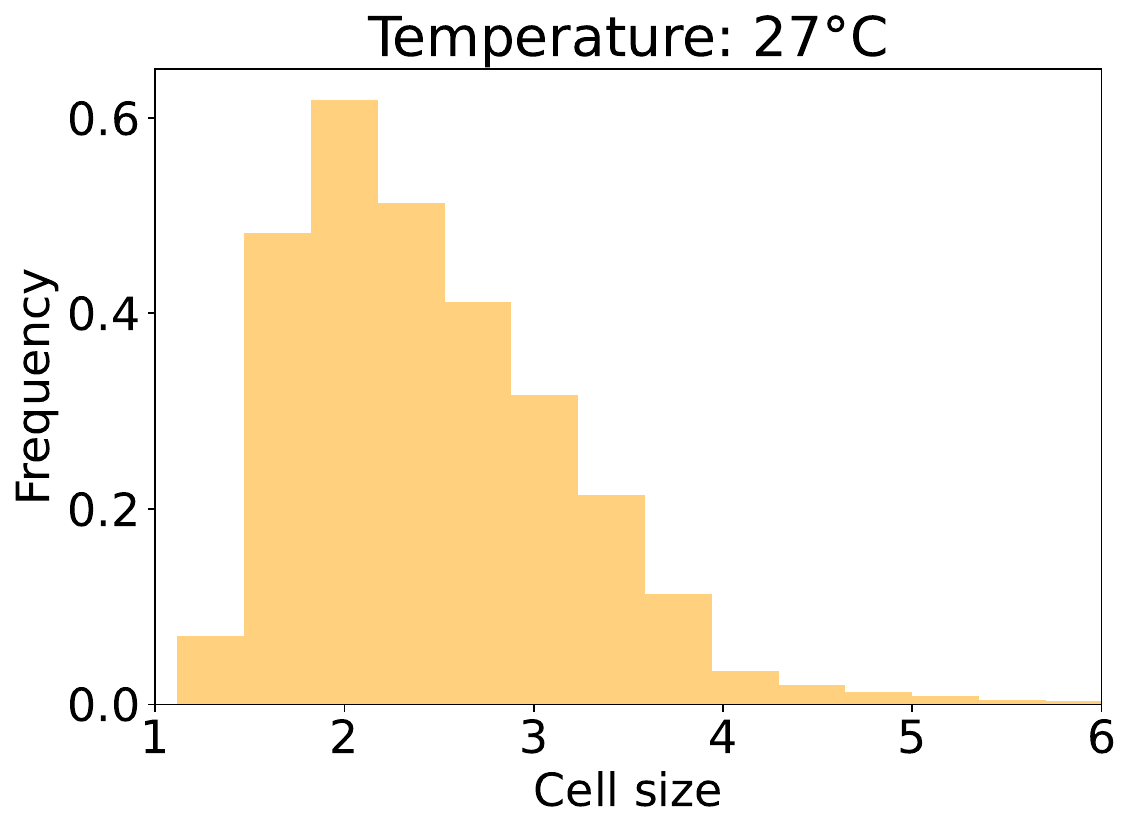} \\
        \includegraphics[height=2.2cm]{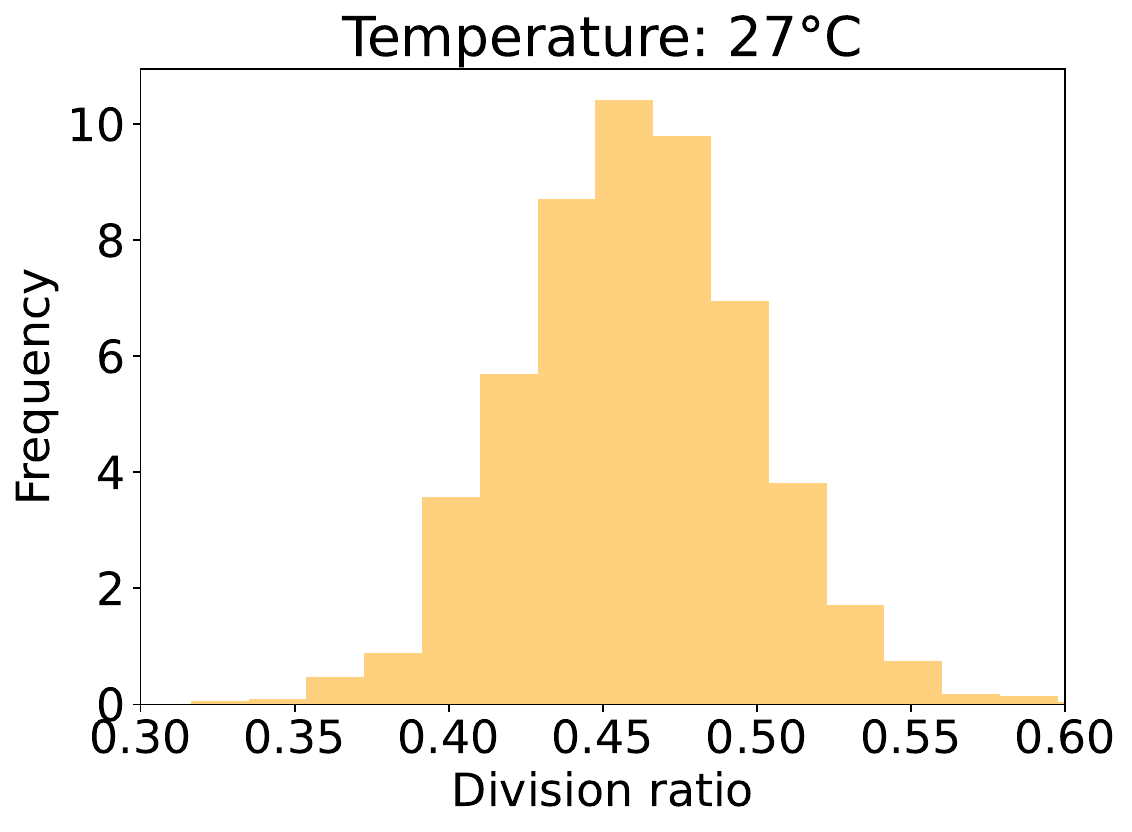}
    \end{minipage}
    
    \caption{Under temperature conditions of 25°C (upper panel) and of 27°C (lower panel), cell size measurements (unit of size is micrometer) until time 200 (unit of time is minute) from lineage \texttt{xy01\_02} (left), distribution of cell size from $10\,000$ consecutive measurements (mixing different lineages) (top right), and distribution of division ratio (bottom right).}
    \label{fig:data:25:27}
\end{figure}

\begin{figure}[ht]
    \centering
    
    \begin{minipage}{0.55\textwidth}
        \centering
        \includegraphics[height=4.4cm]{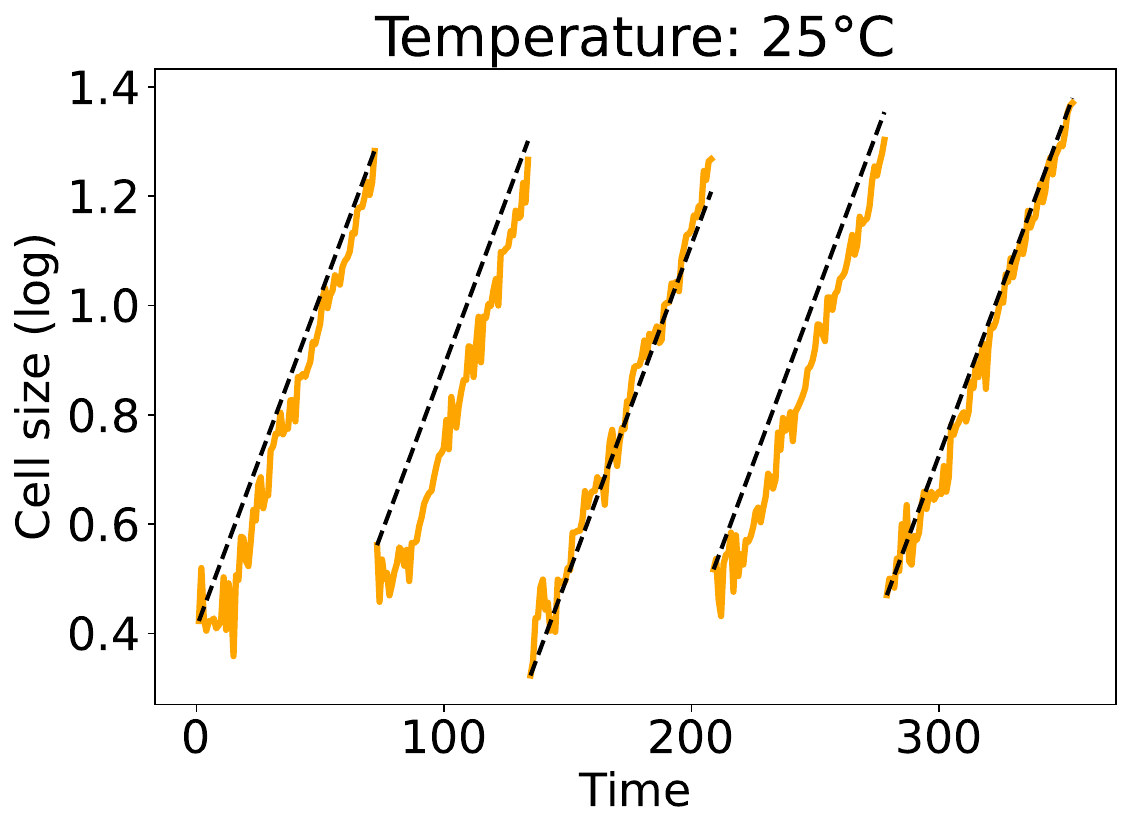}
    \end{minipage}%
    \begin{minipage}{0.2\textwidth}
        \centering
        \includegraphics[height=2.2cm]{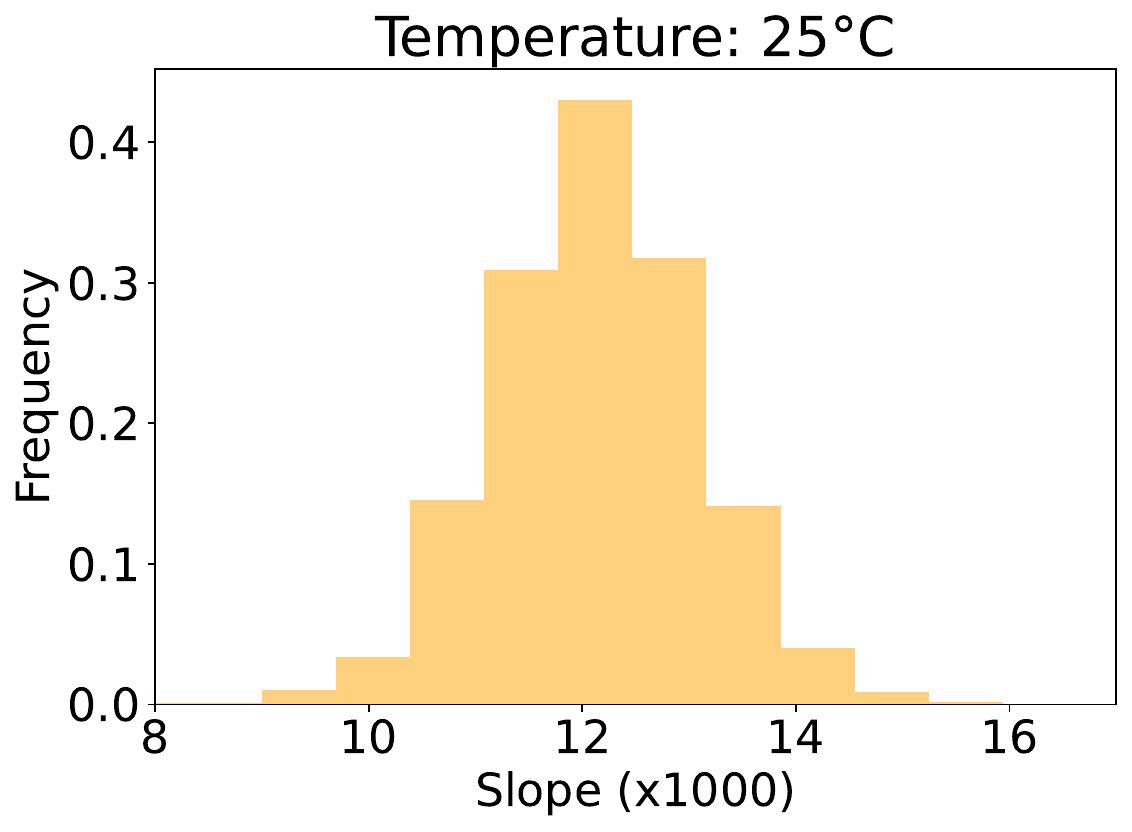} \\
        \includegraphics[height=2.2cm]{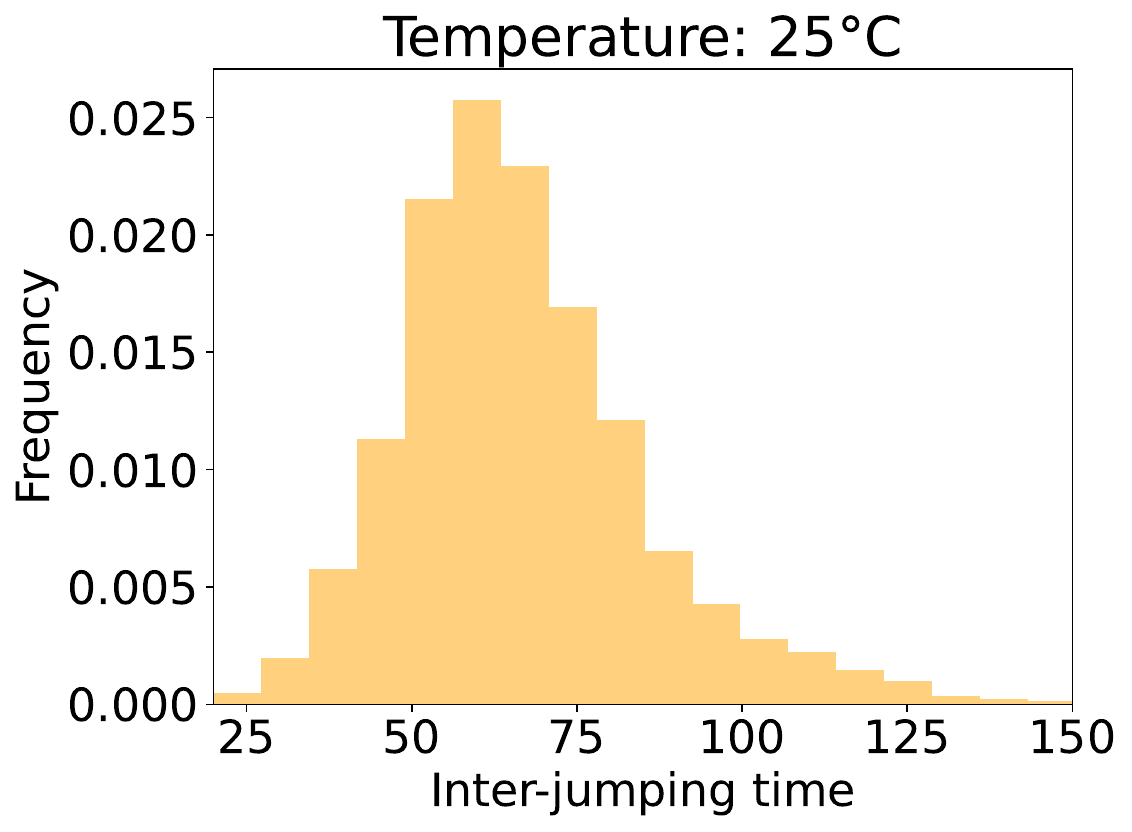}
    \end{minipage}
    \begin{minipage}{0.2\textwidth}
        \centering
        \includegraphics[height=2.2cm]{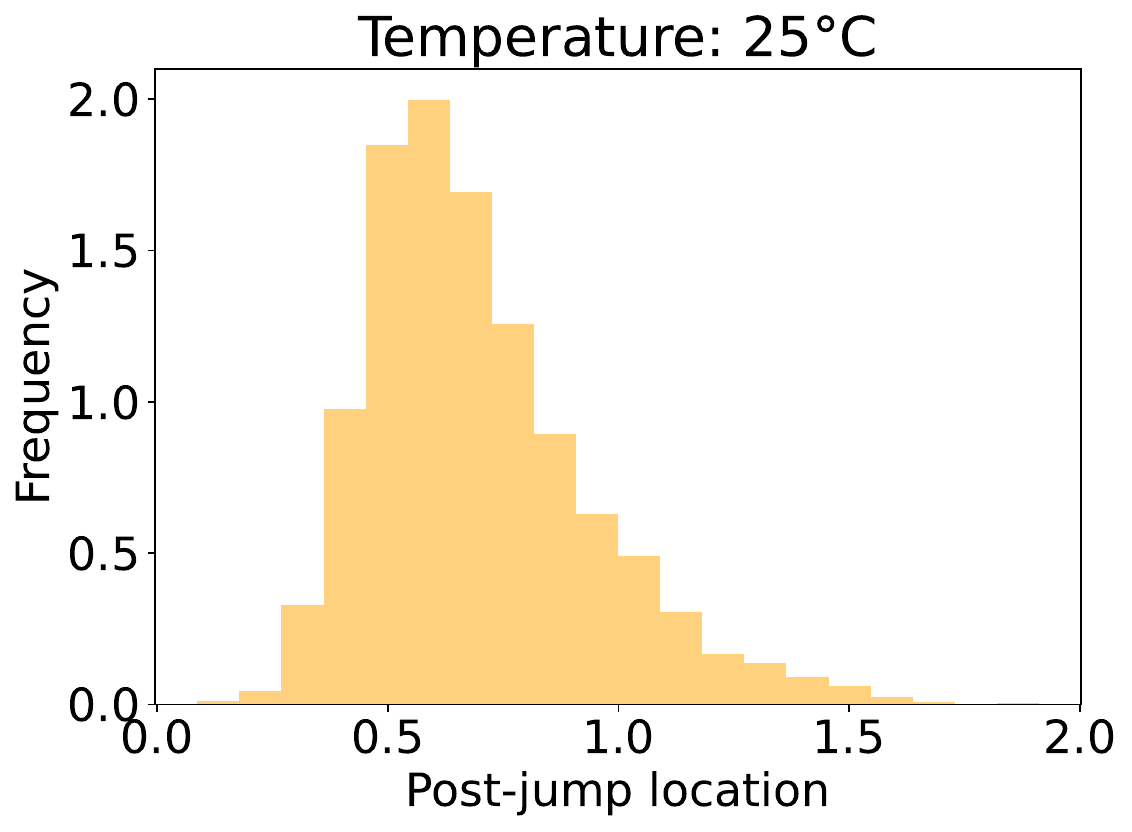} \\
        \includegraphics[height=2.2cm]{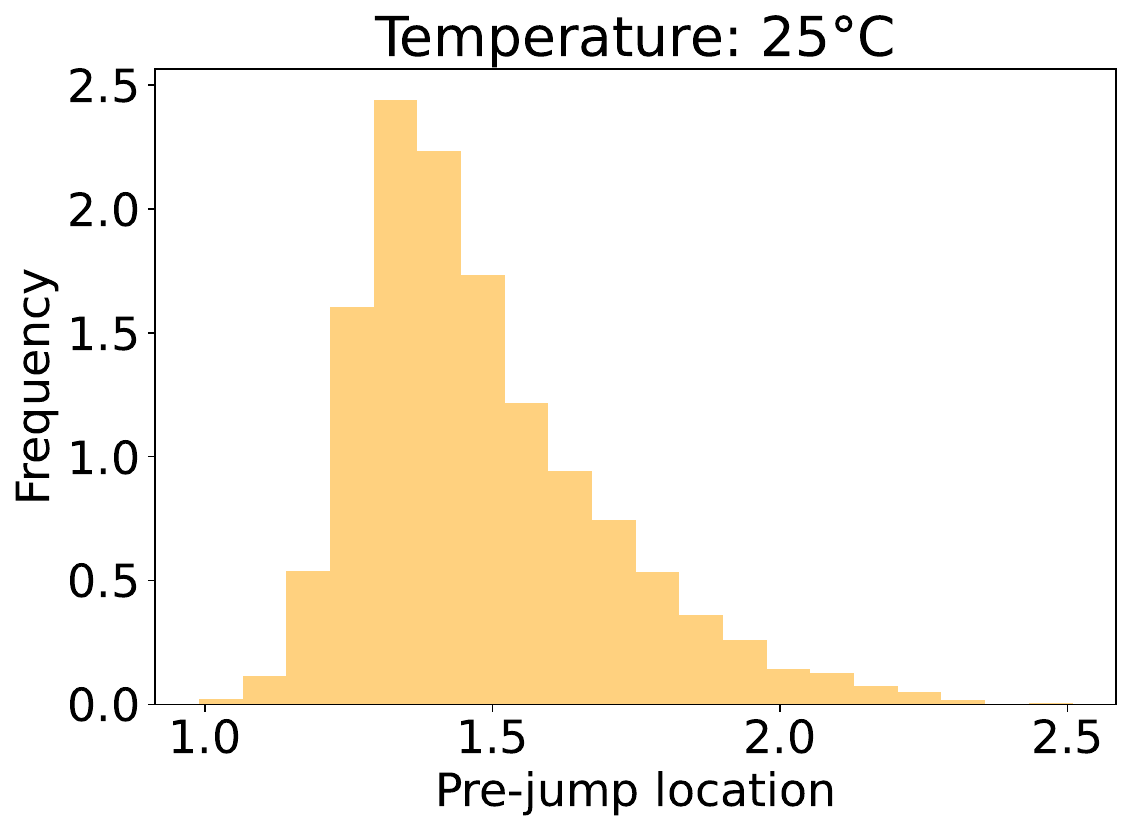}
    \end{minipage}
    
    \vspace{1cm}
    
    \begin{minipage}{0.55\textwidth}
        \centering
        \includegraphics[height=4.4cm]{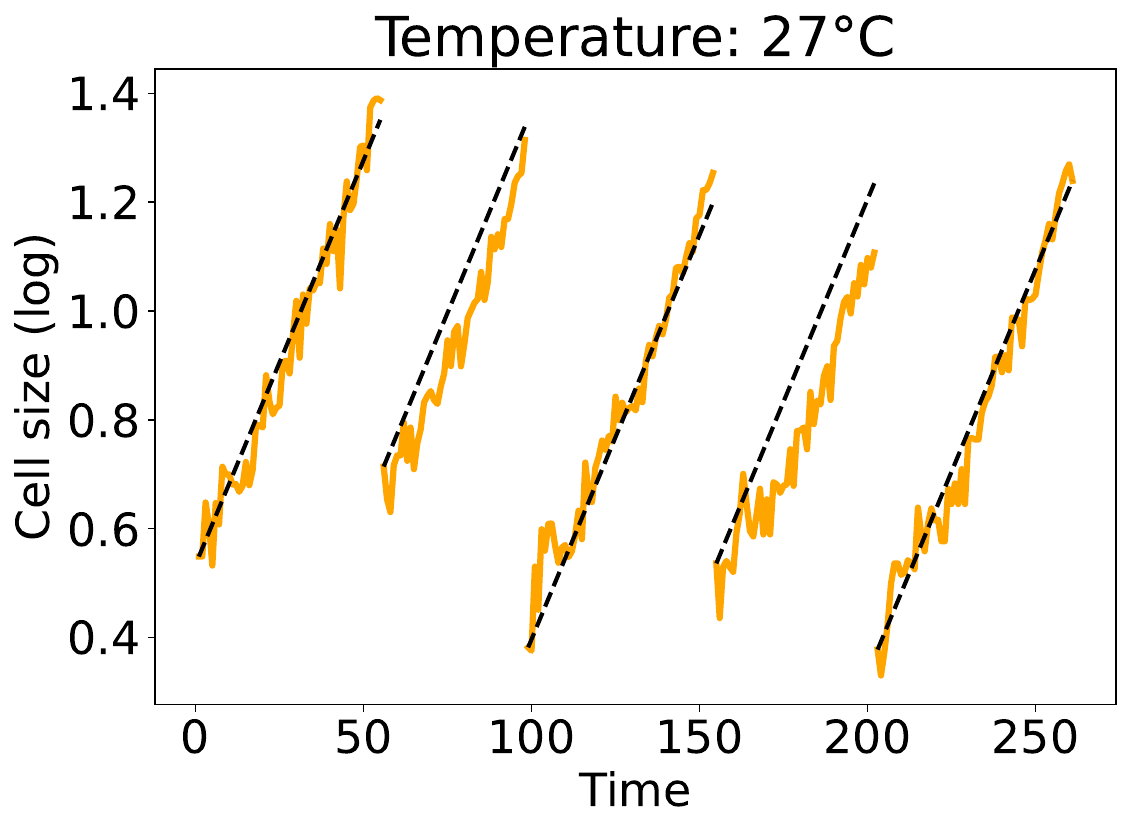}
    \end{minipage}%
    \begin{minipage}{0.2\textwidth}
        \centering
        \includegraphics[height=2.2cm]{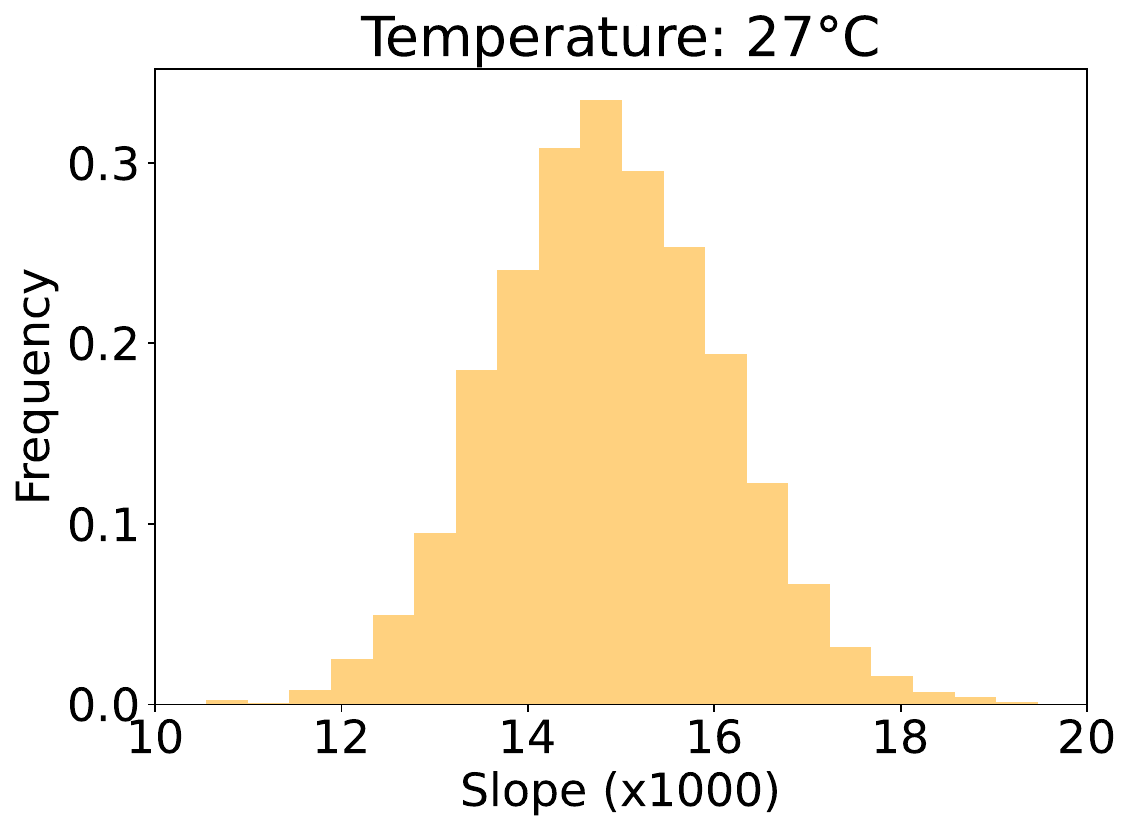} \\
        \includegraphics[height=2.2cm]{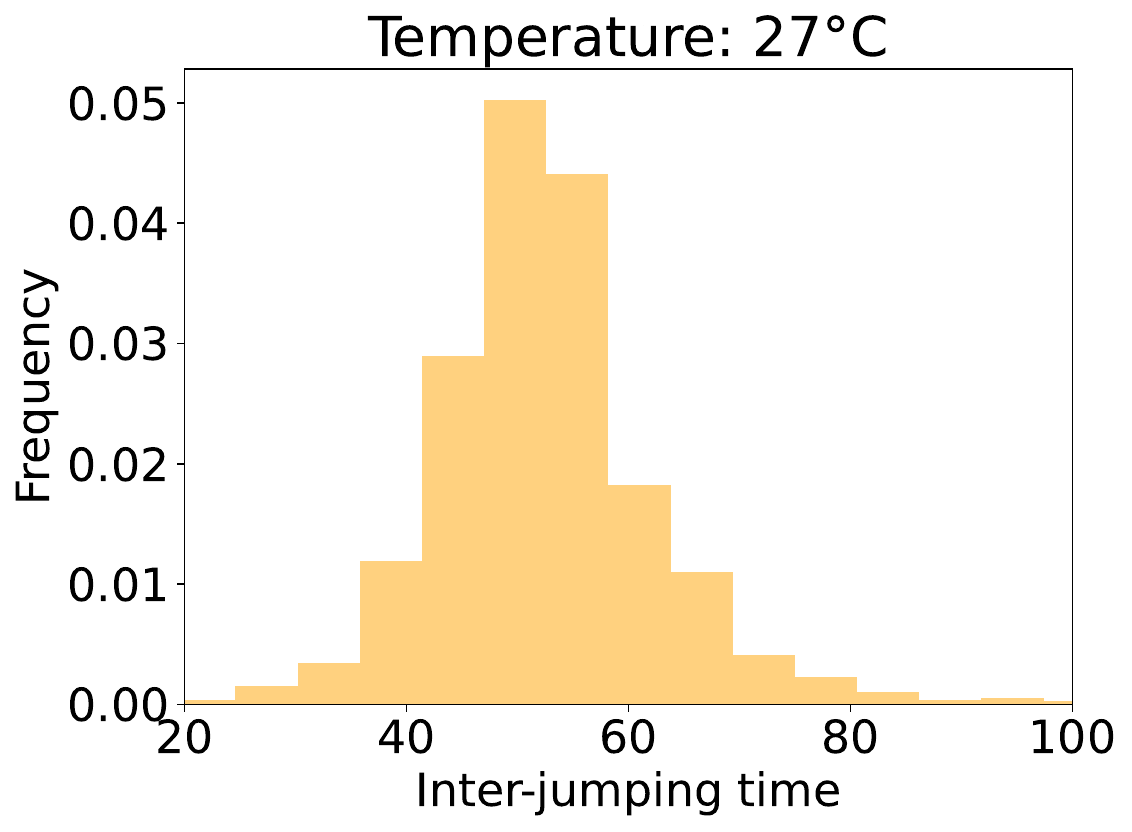}
    \end{minipage}
    \begin{minipage}{0.2\textwidth}
        \centering
        \includegraphics[height=2.2cm]{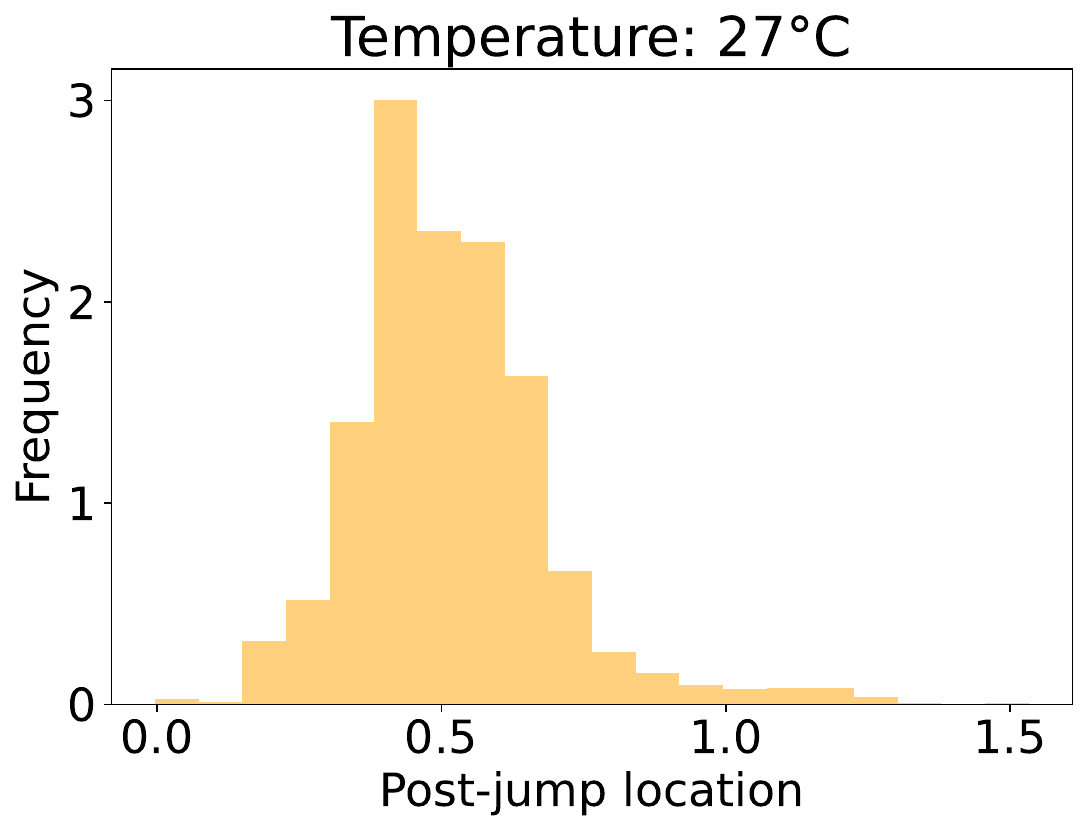} \\
        \includegraphics[height=2.2cm]{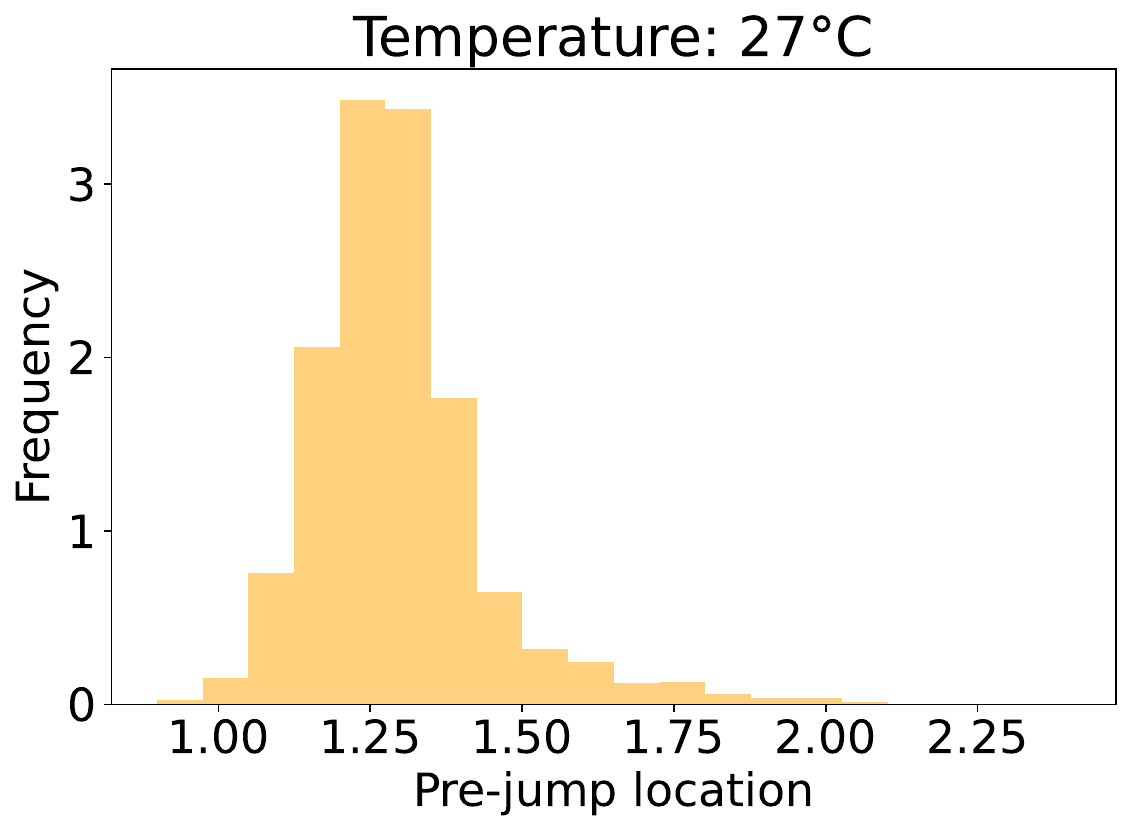}
    \end{minipage}
    
    \caption{Under temperature conditions of 25°C (upper panel) and of 27°C (lower panel), logarithm of cell size measurements before the fifth division event and fitted linear growth (left), distribution of estimated slope (top center), distribution of time between two consecutive division events (bottom center), distribution of logarithm of cell size at division time (top right), and distribution of logarithm of cell size just before division (bottom right).}
    \label{fig:model:25:27}
\end{figure}


\end{document}